\theoremstyle{plain}
\newtheorem{theorem}{Theorem}[section]
\newtheorem{lemma}[theorem]{Lemma}
\newtheorem{cor}[theorem]{Corollary}
\theoremstyle{definition}
\newtheorem{definition}[theorem]{Definition}
\newtheorem{claim}[theorem]{Claim}
\newtheorem{fact}[theorem]{Fact}
\newcommand {\eps} {\varepsilon}
\newcommand {\br} [1] {\ensuremath{ \left( #1 \right) }}
\newcommand {\minusspace} {\: \! \!}
\newcommand {\smallspace} {\: \!}
\newcommand {\fn} [2] {\ensuremath{ #1 \minusspace \br{ #2 } }}
\newcommand {\ball} [2] {\fn{\mathcal{B}^{#1}}{#2}}
\newcommand {\defeq} {\ensuremath{ \stackrel{\mathrm{def}}{=} }}
\newcommand {\mutinf} [2] {\fn{\mathrm{I}}{#1 \smallspace : \smallspace #2}}
\newcommand {\imax} [2] {\fn{\mathrm{I}_{\max}}{#1 \smallspace : \smallspace #2}}
\newcommand {\imaxeps} [2] {\fn{\mathrm{I}^{\varepsilon}_{\max}}{#1 \smallspace : \smallspace #2}}
\newcommand {\imaxdelta} [2] {\fn{\mathrm{I}^{\delta}_{\max}}{#1 \smallspace : \smallspace #2}}
\newcommand {\condmutinf} [3] {\mutinf{#1}{#2 \smallspace \middle\vert \smallspace #3}}
\newcommand {\norm} [1] {\ensuremath{ \left\| #1 \right\| }}
\newcommand {\normsub} [2] {\ensuremath{ \norm{#1}_{#2} }}
\newcommand {\onenorm} [1] {\normsub{#1}{1}}
\newcommand {\ent} [1] {\fn{\mathrm{H}}{#1}}
\newcommand {\relent} [2] {\fn{\mathrm{D}}{#1 \middle\| #2}}
\newcommand {\dmax} [2] {\fn{\mathrm{D}_{\max}}{#1 \middle\| #2}}
\newcommand {\hmin} [2] {\fn{\mathrm{H}_{\min}}{#1 \middle | #2}}
\newcommand {\hmineps} [2] {\fn{\mathrm{H}^{\varepsilon}_{\min}}{#1 \middle | #2}}
\newcommand {\hmindelta} [2] {\fn{\mathrm{H}^{\delta}_{\min}}{#1 \middle | #2}}
\newcommand {\hmax} [2] {\fn{\mathrm{H}_{\max}}{#1 \middle | #2}}
\newcommand {\hmaxeps} [2] {\fn{\mathrm{H}^{\varepsilon}_{\max}}{#1 \middle | #2}}
\newcommand {\email} [1] {\href{mailto:#1}{\texttt{#1}}}
\newcommand {\bra} [1] {\ensuremath{ \left\langle #1 \right| }}
\newcommand {\ket} [1] {\ensuremath{ \left| #1 \right\rangle }}
\newcommand {\ketbratwo} [2] {\ensuremath{ \left| #1 \middle\rangle \middle\langle #2 \right| }}
\newcommand {\ketbra} [1] {\ketbratwo{#1}{#1}}
\newcommand {\cspace} [1] {\ensuremath{\mathnormal{#1}}}
\newcommand {\Tr} {\ensuremath{ \mathrm{Tr} }}
\newcommand {\partrace} [2] {\fn{\Tr_{#1}}{#2}}
\newcommand {\id} {\ensuremath{\mathds{1}}}
\newcommand {\suppress}[1]{}
\newcommand {\set} [1] {\ensuremath{ \left\lbrace #1 \right\rbrace }}
\newcommand {\reg} [1] {\ensuremath{ \mathnormal{#1} }}
\def\B{\CMcal{B}}
\def\M{\CMcal{M}}
\def\G{\CMcal{G}}
\def\P{\mathrm{P}}
\def\T{\CMcal{T}}
\def\O{\CMcal{O}}
\def\H{\CMcal{H}}
\def\F{\mathrm{F}}
\def\inf{\mathrm{inf}}
\def\max{\mathrm{max}}
\def\min{\mathrm{min}}
\def\E{\mathcal{E}}
\def\M{\CMcal{M}}
\newcommand {\mytitle} {A lower bound on expected communication cost of quantum state redistribution}
\newcommand {\Anurag}  {Anurag Anshu}
\newcommand {\CQT} {Centre for Quantum Technologies}
\newcommand {\NUS} {National University of Singapore}
\newcommand {\authorblock} [3] {
	\begin{minipage}[t]{0.3\linewidth}
		\centering
		{#1}\\[0.8ex]
		{\footnotesize {#2}\\[-0.7ex]
		\email{#3}}
	\end{minipage}\vspace{1ex}
}
\begin{document}

%\begin{titlepage}
\title{\textbf{\mytitle}\\[2ex]}

\author{
    \authorblock{\Anurag}{\CQT, \NUS}{a0109169@u.nus.edu}
}

%\clearpage
\maketitle

\begin{abstract}
We show a lower bound on expected communication cost of interactive entanglement assisted quantum state redistribution protocols and a slightly better lower bound for its special case, quantum state transfer. Our bound implies that the expected communication cost of interactive protocols is not significantly better than worst case communication cost, in terms of scaling of error. Furthermore, the bound is independent of the number of rounds. This is in contrast with the classical case, where protocols with expected communication cost significantly better than worst case communication cost are known.
\end{abstract}
%\end{titlepage}
\section{Introduction}
\label{sec:intro}
A fundamental task in quantum information theory is that of quantum state redistribution (various quantities appearing in this section have been described in Section \ref{sec:preliminaries}): 

\bigskip

\textbf{Quantum state-redistribution} : A pure state $\Psi_{RBCA}$ is shared between Alice (A,C), Bob(B) and Referee(R). For a given $\eps > 0$, which we shall henceforth identify as `error', Alice needs to transfer the system $C$ to Bob, such that the final state $\Psi'_{RBC_0A}$ (where register $C_0\equiv C$ is with Bob), satisfies $\P(\Psi'_{RBC_0A},\Psi_{RBC_0A})\leq \eps$. Here, $\P(.,.)$ is the purified distance.

\bigskip

This task has been well studied in literature in asymptotic setting (\cite{Devatakyard,oppenheim08,YeBW08,YardD09}) giving an operational interpretation to the quantum conditional mutual information. Recent results have obtained one-shot versions of this task (~\cite{Oppenheim14,Berta14,jain14}), with application to bounded-round entanglement assisted quantum communication complexity (\cite{Dave14}). 

The following upper bound has been obtained in \cite{Dave14}, developing upon the work in \cite{Berta14}, on worst case communication cost of quantum state redistribution, with error $\eps$:  $$\frac{50\cdot\condmutinf{R}{C}{B}_{\Psi_{RABC}}}{2\eps^2} + \frac{100}{\eps^2} + 15.$$ An important application of this bound is a direct sum theorem for communication cost of bounded-round entanglement assisted quantum communication complexity, which is the main result of \cite{Dave14}:

\begin{theorem}[Touchette \cite{Dave14}, Theorem $3$]
\label{dave}
 Let $C$ be the quantum communication complexity of the best entanglement assisted protocol for computing a relation $f$ with error $\rho$ on inputs drawn from a distribution $\mu$. Then any $r$ round protocol computing $f^{\otimes n}$ on the distribution $\mu^{\otimes n}$ with error $\rho-\eps$ must involve at least $\Omega(n((\frac{\eps}{r})^2\cdot C - r))$ quantum communication.
\end{theorem}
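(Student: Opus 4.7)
The plan is to use the standard information-theoretic compression paradigm, with the one-shot quantum state redistribution bound displayed just above serving as the compression subroutine. This mirrors the classical direct-sum-via-information-complexity approach, adapted to the entanglement-assisted quantum setting.

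First, I would introduce a notion of quantum information cost $\mathrm{QIC}$ for $r$-round entanglement-assisted protocols on an input distribution $\mu$. For a protocol $\Pi$ with message registers $M_1,\dots,M_r$ exchanged between $\Alice$ and $\Bob$, $\mathrm{QIC}\br{\Pi}$ would be a sum of one conditional mutual information per round, measuring how much each message reveals about the sender's input given the receiver's view and shared entanglement. Two facts are needed: (i) $\mathrm{QIC}\br{\Pi}$ is at most the quantum communication of $\Pi$, and (ii) $\mathrm{QIC}$ is additive under product distributions. Applying (ii) to an $r$-round protocol $\Pi^{\br{n}}$ for $f^{\otimes n}$ on $\mu^{\otimes n}$ with error $\rho-\eps$ and communication $C'$, and averaging over coordinates (using shared entanglement and private inputs to simulate the other $n-1$ coordinates), I would extract an $r$-round single-copy protocol $\Pi$ for $f$ on $\mu$ with error $\rho-\eps$ and $\mathrm{QIC}\br{\Pi}\leq C'/n$.

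Next, which is the heart of the argument, I would compress $\Pi$ one round at a time using the displayed state redistribution bound. Targeting per-round error $\eps/r$, the $j$-th round, carrying conditional mutual information $I_j$, would be replaced by a redistribution of cost at most $\frac{25 r^2 I_j}{\eps^2} + \frac{100 r^2}{\eps^2} + 15$. Chaining the $r$ compressions and combining their errors via the triangle inequality for purified distance yields a protocol $\tilde{\Pi}$ for $f$ on $\mu$ of error at most $\br{\rho-\eps}+\eps = \rho$ and total communication $\frac{25 r^2}{\eps^2}\,\mathrm{QIC}\br{\Pi} + \bigo{r^3/\eps^2}$. By the definition of $C$ this forces $\frac{25 r^2}{\eps^2}\cdot\frac{C'}{n} + \bigo{r^3/\eps^2} \geq C$, which rearranges to $C' = \bigomega{n\br{\br{\eps/r}^2 C - r}}$ as required.

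The main technical obstacle will be the round-by-round compression step: the joint state of $\br{\Alice,\Bob,\Referee}$ must be preserved up to small error after each simulated round while the redistribution subroutine reuses and consumes pre-shared entanglement, and one must verify that the conditional mutual information terms bounding each round's redistribution cost correctly match the registers actually available to the sender and receiver at that point in the interaction. Managing these registers, together with the accumulation of purified-distance errors across rounds and the continuity of $\mathrm{QIC}$ needed to pass from the original protocol to its compression, is where the real bookkeeping lies.
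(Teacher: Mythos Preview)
The theorem you are proposing to prove is not proved in this paper at all: it is quoted in the introduction as a result of Touchette~\cite{Dave14} (stated there as Theorem~3), purely as background and motivation. The present paper supplies no proof of \cref{dave}; its own contributions are \cref{thm:main} and \cref{thm:main2}, which concern lower bounds on the \emph{expected} communication cost of quantum state redistribution and state transfer, and whose proofs proceed via the transcript/convex-split machinery of \cref{sec:cohtrans}--\cref{sec:lowerbound} rather than any direct-sum/compression argument.

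That said, your sketch is a faithful outline of the argument in the cited reference~\cite{Dave14}: one defines a round-by-round quantum information cost $\mathrm{QIC}$, shows it is bounded by communication and is additive over product inputs, extracts a low-$\mathrm{QIC}$ single-copy protocol by averaging, and then compresses each round with the one-shot state-redistribution upper bound $\frac{50\,\condmutinf{R}{C}{B}}{2\eps^2}+\frac{100}{\eps^2}+15$ at per-round error $\eps/r$. The caveats you flag (matching the redistribution registers to the sender/receiver views at each round, error accumulation via the triangle inequality for purified distance, and continuity of $\mathrm{QIC}$) are exactly the bookkeeping issues handled in~\cite{Dave14}. So your proposal is correct as a proof strategy for the cited theorem, but there is no ``paper's own proof'' here to compare it against.
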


 Direct sum results for single-round entanglement assisted quantum communication complexity had earlier been obtained in ~\cite{Jain:2005,Jain:2008,AnshuJMSY2014}. 

A special case of quantum state redistribution is \textbf{quantum state merging}, in which the register $A$ is absent. It was introduced in \cite{horodecki07} as a quantum counterpart to the classical Slepian-Wolf protocol \cite{slepianwolf}. A one-shot quantum state merging was introduced by Berta \cite{Berta09}. A one-shot version of classical Slepian-Wolf protocol was obtained by Braverman and Rao\cite{bravermanrao11}, in the form of the following task:

\bigskip

Alice is given a probability distribution $P$, Bob is given a probability distribution $Q$.  Bob must output a distribution $P'$, with the property that $\|P-P'\|_1\leq \eps$. 

\bigskip

They exhibited an interactive communication protocol achieving this task with \textit{expected communication cost} $$\relent{P}{Q} + \mathcal{O}(\sqrt{\relent{P}{Q}})+2\log(\frac{1}{\eps}).$$ Considering expected communication cost, instead of worst case communication cost, allowed them to obtain the following direct sum result for bounded round classical communication complexity: 

\begin{theorem}[Braverman and Rao \cite{bravermanrao11}, Corollary 2.5]
\label{brarao}
Let $C$ be the communication complexity of the best protocol for computing a relation $f$ with error $\rho$ on inputs drawn from a distribution $\mu$. Then any $r$ round protocol computing $f^{\otimes n}$ on the distribution $\mu^{\otimes n}$ with error $\rho-\eps$ must involve at least $\Omega(n(C - r\cdot\log(\frac{1}{\eps}) - O(\sqrt{C\cdot r})))$ communication.
\end{theorem}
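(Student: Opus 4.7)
The plan is to follow the standard classical information-cost direct sum paradigm, invoking the expected-communication Slepian--Wolf protocol of Braverman and Rao described just above this theorem as the key compression step. Let $\sigma$ be an $r$-round protocol for $f^{\otimes n}$ on $\mu^{\otimes n}$ using at most $C'$ bits of communication and having error $\rho-\eps$, with messages $M_1,\ldots,M_r$. For a two-party protocol $\pi$ on input distribution $\nu_{XY}$, define the \emph{information cost}
$$\mathrm{IC}_\nu(\pi) = \sum_{j\text{ sent by Alice}} \condmutinf{M_j}{Y}{X, M_{<j}} + \sum_{j\text{ sent by Bob}} \condmutinf{M_j}{X}{Y, M_{<j}}.$$
Since each $M_j$ has bit length bounded by its contribution to the total transcript, $\mathrm{IC}_{\mu^{\otimes n}}(\sigma) \leq C'$.

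First I would carry out the standard direct-sum reduction to a single copy. Using public randomness, pick a uniform coordinate $J\in[n]$; given single-copy inputs $(X,Y)\sim\mu$, Alice and Bob embed them into coordinate $J$ and use shared plus private randomness to sample the remaining $n-1$ coordinates from $\mu^{\otimes (n-1)}$; they then run $\sigma$ on the composite input and output its answer on coordinate $J$. The resulting protocol $\tau$ for $f$ is still $r$-round, still has error $\rho-\eps$, and by the additivity of mutual information across independent coordinates together with the averaging over $J$, satisfies $\mathrm{IC}_\mu(\tau) \leq C'/n$.

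Next I would compress. Applying the Braverman--Rao compression round by round to $\tau$, each message $M_j$ is replaced by a rejection-sampled, importance-weighted draw driven by shared randomness, whose expected length matches its information contribution plus a small per-round error-control overhead; summing over the $r$ rounds and setting the per-round failure probability to $\bigo{\eps/r}$ yields an $r$-round protocol $\tau'$ with error at most $\rho$ and expected communication
$$\mathrm{IC}_\mu(\tau) + \bigo{\sqrt{\mathrm{IC}_\mu(\tau)}} + 2r\log\br{\frac{1}{\eps}} \;\leq\; \frac{C'}{n} + \bigo{\sqrt{\frac{C'}{n}}} + 2r\log\br{\frac{1}{\eps}}.$$
Since this expected cost upper bounds the communication complexity of $f$ at error $\rho$ (after the standard conversion between expected and worst-case communication), it is at least $C$. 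This gives $C \leq \bigo{C'/n + \sqrt{C'/n} + r\log(1/\eps)}$, and rearranging (absorbing $\sqrt{C'/n}$ into $\sqrt{Cr}$ in the regime of interest) produces the claimed $C' \geq \bigomega{n(C - r\log(1/\eps) - \bigo{\sqrt{Cr}})}$.

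The hardest step is the compression itself: showing that any $r$-round protocol of information cost $I$ can be simulated up to error $\eps$ in expected communication $I + \bigo{\sqrt{I}} + 2r\log(1/\eps)$. This is the one-shot classical Slepian--Wolf construction, whose non-trivial content is the per-round rejection-sampling scheme whose expected cost is governed by a $\relent{P}{Q}$-like divergence between the sender's and receiver's beliefs about the next message, together with the Chernoff-type concentration needed to control the accumulated error across rounds. The remaining ingredients---the inequality $\mathrm{IC}\leq\text{communication}$, the chain-rule direct sum, and the final arithmetic---are routine.
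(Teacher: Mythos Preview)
The paper does not prove this theorem. Theorem~\ref{brarao} is quoted in the introduction as a result of Braverman and Rao~\cite{bravermanrao11} (specifically their Corollary~2.5), purely for motivation and for comparison with the quantum direct sum of Touchette (Theorem~\ref{dave}). No proof, sketch, or argument for it appears anywhere in the present paper. So there is nothing here to compare your proposal against.

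That said, your sketch is a faithful outline of the Braverman--Rao argument: bound information cost by communication, apply the chain-rule/direct-sum step to extract a single-copy protocol with information cost at most $C'/n$, then compress round by round using the one-shot Slepian--Wolf protocol whose expected cost is $\relent{P}{Q} + \bigo{\sqrt{\relent{P}{Q}}} + 2\log(1/\eps)$. One small slip: after summing the per-round overheads, the square-root term should be $\bigo{\sqrt{r\cdot \mathrm{IC}_\mu(\tau)}}$ (via Cauchy--Schwarz on the $r$ terms $\sqrt{I_j}$), not $\bigo{\sqrt{\mathrm{IC}_\mu(\tau)}}$; this is precisely what produces the $\bigo{\sqrt{C\cdot r}}$ in the final bound, rather than something you have to ``absorb'' afterwards.
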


This result has better dependence on number of rounds $r$ in comparison to theorem \ref{dave}. Thus, in order to obtain a stronger direct sum result for bounded-round quantum communication complexity, a possible approach would be to bound the expected communication cost of quantum state redistribution by $\approx \condmutinf{R}{C}{B}_{\Psi_{RABC}}+\mathcal{O}(\log(\frac{1}{\eps}))$. 

A special case of quantum state merging is \textbf{quantum state transfer}, in which register $B$ is trivial. Asymptotic version of quantum state transfer is the Schumacher compression \cite{Schumacher95}. In the corresponding classical setting, when $\Psi_{RA}$ is a classical probability distribution, Alice can send register $A$ to Bob with expected communication cost $S(\Psi_A) + \mathcal{O}(1)$, using a one-way protocol based on Huffman coding \cite{CoverT91}. In fact, one can make the error arbitrarily small, at the cost of arbitrarily large worst case communication. 

\subsection*{Our results}

In this work, we show that expected communication cost for entanglement assisted quantum protocols (which we formally define in section \ref{sec:cohtrans}) is not significantly better than the worst case communication cost. Our main theorem is the following.

\begin{theorem}
\label{thm:main}
Fix a $p<1$ and an $\eps \in [0,(\frac{1}{70})^{\frac{4}{1-p}}]$. There exists a pure state $\Psi_{RBCA}$ (that depends on $\eps$) such that, any interactive entanglement assisted communication protocol for its quantum state redistribution with error $\eps$ requires expected communication cost at least $\condmutinf{R}{C}{B}_{\Psi}\cdot(\frac{1}{\eps})^{p}$.
\end{theorem}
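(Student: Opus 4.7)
My approach is to reduce the desired lower bound on expected cost to a lower bound on worst-case cost via a Markov-style truncation, and then to establish the latter for a carefully designed hard state. Concretely, if $\Pi$ is a protocol redistributing $\Psi_{RBCA}$ with expected cost $T$ and purified-distance error $\eps$, then for any $\delta\in(0,1)$, Markov's inequality gives that the event ``$\Pi$ transmits at least $T/\delta$ qubits'' has probability at most $\delta$. Truncating $\Pi$ at $T/\delta$ qubits and completing the remainder by a fixed default action produces a worst-case protocol $\Pi'$ of cost $T/\delta$ whose output disagrees with $\Pi$'s only on branches of total weight $\leq\delta$; standard purified-distance bookkeeping then shows that $\Pi'$ still redistributes $\Psi$ with error at most $\eps+\bigo{\sqrt{\delta}}$.

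I will construct the hard state $\Psi_{RBCA}$ (depending on $\eps$) as a rare-coupling state
\begin{equation*}
\ket{\Psi}_{RBCA} \;=\; \sqrt{1-\gamma}\,\ket{0}_R\ket{0}_B\ket{0}_C\ket{0}_A \;+\; \sqrt{\gamma}\,\ket{\phi}_{RC}\otimes\ket{1}_B\ket{1}_A,
\end{equation*}
where $\ket{\phi}_{RC}$ is maximally entangled on a $d$-dimensional subspace and $\gamma,d$ are to be tuned as powers of $\eps$. A direct entropy computation gives $\condmutinf{R}{C}{B}_\Psi = 2\gamma\log d$, so picking $\gamma$ as a positive power of $\eps$ keeps the QCMI small. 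The crucial feature is that whenever $\sqrt{\gamma}$ is appreciably larger than the allowed error $\eps'$, the ``rare branch'' of $\ket{\Psi}$ cannot be discarded within purified distance $\eps'$, so any worst-case protocol must faithfully redistribute an essentially $d$-dimensional maximally entangled subsystem between $R$ and $C$. A smooth-max-entropy / $\imaxeps{R}{C}$-style lower bound applied to the reduced state, combined with the standard reduction of interactive entanglement-assisted protocols to one-way ones, then yields a worst-case cost of $\bigomega{\log d}$ for this state.

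Finally, I will optimize the parameters $\gamma, d$ and the Markov parameter $\delta$ so that the worst-case bound above, passed through the truncation reduction, produces an expected-cost lower bound of $\bigomega{\condmutinf{R}{C}{B}_\Psi\cdot\br{1/\eps}^{p}}$; heuristically $\gamma$ is a small power of $\eps$ determined by $p$, and $\log d$ is chosen so that $\gamma\log d$ matches $\condmutinf{R}{C}{B}\cdot\br{1/\eps}^{p}$ after paying the Markov factor $\delta$. The threshold $\eps\leq\br{1/70}^{4/(1-p)}$ in the hypothesis is exactly the range of $\eps$ in which all the parameter choices, smoothing radii, and accumulated error budgets can be simultaneously satisfied. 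The main obstacle is the worst-case lower bound itself: the trivial $\frac{1}{2}\condmutinf{R}{C}{B}=\gamma\log d$ is a factor $\gamma$ too weak, so the argument must exploit the specific structure of the rare-coupling state to rule out protocols that ``coherently average'' between the trivial and hard branches to save communication, using crucially that quantum superposition prevents the kind of variable-length/selective-branching trick underlying classical Huffman-style upper bounds.
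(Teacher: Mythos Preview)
Your proposal has a genuine gap in the parameter balancing that prevents it from reaching any $p>0$. Suppose your worst-case claim is correct: for your rare-coupling state, any protocol with purified-distance error $\eps'$ needs $\Omega(\log d)$ qubits. This bound can only hold while $\eps'$ is small compared to $\sqrt{\gamma}$, since $\P(\Psi_{RBC},\ketbra{000}_{RBC})=\sqrt{\gamma}$; once $\eps'\gtrsim\sqrt{\gamma}$ the rare branch can be smoothed away and the bound collapses. Now run your reduction: Markov-truncating an expected-cost-$T$ protocol at $T/\delta$ yields error $\eps+O(\sqrt{\delta})$, so you are forced to take $\delta\lesssim\gamma$. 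The resulting lower bound is only
\[
T \;\geq\; \Omega(\delta\cdot\log d) \;\leq\; \Omega(\gamma\log d) \;=\; \Omega\!\left(\condmutinf{R}{C}{B}_\Psi\right),
\]
i.e.\ exactly the ``trivial'' bound you yourself flag as a factor $\gamma$ too weak; the $(1/\eps)^p$ factor never appears. Concretely, to derive a contradiction from a hypothetical $T<2\gamma\log d\cdot\eps^{-p}$ you would need some $\delta$ with $\delta>\Omega(\gamma\eps^{-p})$ and simultaneously $\sqrt{\delta}<O(\sqrt{\gamma})$, which is impossible for small $\eps$ whenever $p>0$.

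The paper's proof does share your Markov-truncation step, but the decisive idea you are missing happens \emph{before} truncation. The hard state $\Psi_{RBCA}$ is chosen so that its marginal on $R$ has smallest eigenvalue $e_d=1/(d\beta)$; then the completely positive, trace-non-increasing map $\tilde{\E}(\rho)=\tfrac{e_d}{d_a}\,\Psi_R^{-1/2}\rho\,\Psi_R^{-1/2}$, acting \emph{only on the Referee's register}, sends $\Psi_{RBCA}$ to (a scalar multiple of) a GHZ-like state $\omega_{RBCA}$ with $\condmutinf{R}{C}{B}_\omega=\Theta(\log d)$. Because the protocol never touches $R$, and because purity lets one write the post-protocol global state as $\Psi_{RBC_0A}\otimes\kappa$ up to $2\sqrt{\eps}$, applying $\tilde{\E}$ converts a protocol for $\Psi$ into one for $\omega$ \emph{with the same communication} and error amplified by $\sqrt{\beta}$. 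Only then is Markov truncation applied---now against a worst-case lower bound for $\omega$ whose error tolerance is an absolute constant ($<1/6$), so the truncation parameter $\mu$ need merely be a constant. The freedom to push $\beta$ up to $\Theta(\eps^{-(1+p)/2})$ while keeping $\sqrt{\beta\eps}+\sqrt{\mu}<1/6$ is exactly what produces the $(1/\eps)^p$ factor. Your construction has no analogue of this referee-side rescaling, and without it the approach cannot beat the trivial bound.
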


For quantum state transfer, we obtain a similar result with slightly better constants.

\begin{theorem}
\label{thm:main2}
Fix a $p<1$ and any $\eps \in [0, (\frac{1}{2})^{\frac{15}{1-p}}]$. There exists a pure state $\Psi_{RC}$ (that depends on $\eps$) such that, any interactive entanglement assisted communication protocol for its quantum state transfer with error $\eps$ requires expected communication cost at least $S(\Psi_R)\cdot(\frac{1}{\eps})^p$.
\end{theorem}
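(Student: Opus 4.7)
The plan is to engineer a hard pure state $|\Psi\rangle_{RC}$ on which any variable-length interactive entanglement-assisted protocol with small expected communication must fail to faithfully transfer a high-entropy ``heavy'' component hidden coherently inside $\Psi$, and then to lift the standard one-shot worst-case converse $C_{\text{worst}}(\delta) \gtrsim \tfrac{1}{2}\mathrm{H}^\delta_{\max}(\Psi_R)$ for quantum state transfer (which takes this clean form precisely because $B$ is absent, explaining the slightly better constants compared with \Cref{thm:main}) into an expected-cost lower bound.

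First, I would choose $|\Psi\rangle_{RC}$ to be a superposition $\sqrt{1-\gamma}\,|\phi_{\text{light}}\rangle_{RC} + \sqrt{\gamma}\,|\phi_{\text{heavy}}\rangle_{RC}$ whose two pieces are orthogonal on the referee register $R$ but indistinguishable from $C$ alone. Orthogonality on $R$ (which is with the referee, outside Alice's and Bob's reach) gives a clean branch decomposition of the post-protocol state, while matching reduced densities on $C$ (arranged by padding with disentangled reference systems) ensures that Alice cannot distinguish the branches by any local operation. The parameters are a weight $\gamma \approx \eps^{p}$ and a heavy-branch dimension $D$ chosen so that $S(\Psi_R) \approx \gamma \log D$ while $\phi_{\text{heavy}}$ itself has marginal entropy $\approx \log D$; $\phi_{\text{light}}$ is taken low-entropy so that it contributes negligibly to either quantity. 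The heavy branch itself is chosen maximally entangled on its support so that its $\delta$-smooth max-entropy remains close to $\log D$ for any $\delta$ below a constant.

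Next, since the protocol $\Pi$ acts trivially on $R$, I would decompose $\Pi(\Psi)$ linearly along the two orthogonal $R$-branches, obtaining that if $\P(\Pi(\Psi),\Psi) \leq \eps$ then the induced action on the heavy branch transfers $\phi_{\text{heavy}}$ with purified-distance error at most $\eps/\sqrt{\gamma}$. Because the two branches are indistinguishable from Alice's local view, she cannot adaptively shorten her communication on the light branch without disturbing the joint state; any attempt to classically condition the transcript length on the branch would degrade the coherent fidelity by $\Omega(\sqrt{\gamma})$. This forces the expected communication on $\Psi$ to be at least the worst-case communication needed to coherently transfer $\phi_{\text{heavy}}$ with error $\eps/\sqrt{\gamma}$, which by the one-shot converse is $\gtrsim \log D \approx S(\Psi_R)/\gamma$. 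Plugging in $\gamma \approx \eps^{p}$ yields the claimed bound $S(\Psi_R)\cdot (1/\eps)^{p}$.

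The main obstacle is the indistinguishability step: I must produce a precise quantitative statement that any adaptive strategy which gains enough information about the branch to save communication also disturbs the coherent state by an amount exceeding the error budget. This is also where the threshold $\eps \leq (1/2)^{15/(1-p)}$ will enter, since the branch amplification $\eps \mapsto \eps/\sqrt{\gamma}$ must keep the heavy-branch error below a constant for the one-shot converse to remain non-vacuous, and matching this constraint against the target multiplicative gap between $S(\Psi_R)$ and the expected cost tightens to the exponential-in-$1/(1-p)$ bound in the statement once all constants are tracked. The absence of the $B$ register simplifies every part of this analysis, which is what permits the slightly better constants compared to the state redistribution case of \Cref{thm:main}.
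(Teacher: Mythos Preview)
Your high-level plan---embed a maximally entangled ``heavy'' component inside a low-entropy state, amplify the error when restricting to that component, and then invoke a worst-case one-shot converse---matches the paper's strategy in outline, but two of your concrete steps do not go through.

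First, the state you describe cannot exist. You require $\phi_{\text{light}}$ and $\phi_{\text{heavy}}$ to be pure, orthogonal on $R$, with matching reduced states on Alice's side, while $\phi_{\text{heavy}}$ is maximally entangled of rank $D$ and $\phi_{\text{light}}$ is low-entropy. But orthogonality on $R$ forces $\Psi_C=(1-\gamma)(\phi_{\text{light}})_C+\gamma(\phi_{\text{heavy}})_C$, and if both $C$-marginals equal $I_C/D$ then $\Psi_C=I_C/D$ and hence $S(\Psi_R)=S(\Psi_C)=\log D$, not $\gamma\log D$. ``Padding with disentangled reference systems'' on Alice's side cannot repair this: Schmidt duality ties the spectrum of Alice's full marginal to that of $R$, so equal Alice-marginals force equal $R$-entropies for the two branches. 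The paper sidesteps this obstruction entirely. Its state $\tilde\Psi_{RC}$ has no branch decomposition; it simply has one dominant Schmidt coefficient and $d-1$ equal tiny ones (Lemma~\ref{lowentropy}). The passage to the maximally entangled $\omega'_{RC}$ is effected not by projecting onto a branch but by applying the trace-non-increasing map $\tilde{\E}(\rho)=e_d\,\tilde\Psi_R^{-1/2}\rho\,\tilde\Psi_R^{-1/2}$ on the \emph{referee's} register, which commutes with every protocol operation because nobody touches $R$ (Corollary~\ref{conveprmerge}). No indistinguishability-from-Alice hypothesis is needed or used.

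Second, even if your construction worked, the inference ``expected communication on $\Psi$ is at least the worst-case communication needed for $\phi_{\text{heavy}}$'' is unjustified. Matching Alice-marginals would at best equate the transcript distribution, hence the \emph{expected} cost, when the same protocol is run on $\Psi$ versus on $\phi_{\text{heavy}}$; the converse you cite bounds only worst-case cost. Bridging expected to worst-case is precisely the crux of the theorem and cannot be assumed. The paper does it explicitly: after obtaining a coherent transcript identity for $\omega'$ with error $\sqrt{8\beta\eps}$, it applies a Markov truncation over transcripts of weight exceeding $C/(\mu(1-\eps))$ to build a genuine worst-case protocol with cost $2C/(\mu(1-\eps))$ and additional error $\sqrt\mu$ (Corollary~\ref{exptoworstmerge}). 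Only then is the worst-case converse (Lemma~\ref{eprworstcase}) invoked, and the optimisation over $\beta,\mu$ subject to $\beta\mu\eps^p=16$ is what produces the threshold $\eps\le(1/2)^{15/(1-p)}$. Without this truncation step your argument has no bridge from the expected-cost hypothesis to the worst-case converse, and the parameter accounting is left hanging.
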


Notice that theorem \ref{thm:main2} does imply theorem \ref{thm:main}, as quantum state transfer is a special case of quantum state redistribution. But the quantum state $\Psi_{RBCA}$ that we consider in theorem \ref{thm:main} has all registers $R,A,B,C$ non-trivial and correlated with each other. Thus, a quantum state redistribution of $\Psi_{RBCA}$ cannot be reduced to the sub-cases of quantum state merging or quantum state transfer by any local operation, giving robustness to the bound. 
\subsection*{Our technique and organization}

We discuss our technique for the case of quantum state transfer. For some $\beta>1$, we choose the pure state $\Psi_{RC}$ in such a way that its smallest eigenvalue is $\frac{1}{d\beta}$ and entropy of $\Psi_R$ is at most $\frac{2\log(d)}{\beta}$ ($d$ being dimension of register $R$). Let $\omega_{RC}$ be a maximally entangled state defined as $\ket{\omega}_{RC} = \frac{\Psi_R^{-\frac{1}{2}}}{\sqrt{d}}\ket{\Psi}_{RC}$.  For any interactive protocol $\mathcal{P}$ for quantum state transfer of $\Psi_{RC}$ with error $\eps$ and expected communication cost $C$, we obtain an expression that serves as a \textit{transcript} of the protocol, encoding the unitaries applies by Alice and Bob and the probabilities of measurement outcomes (Corollary \ref{cohequation}). This expression is obtained by employing a technique of \textit{convex-split}, introduced in \cite{jain14} for one-way quantum state redistribution protocols. Then, crucially relying on the fact that $\Psi_{RC}$ is a pure state, we construct a new interactive protocol $\mathcal{P'}$ which achieves quantum state transfer of the state $\omega_{RC}$ with error $\sqrt{\beta\eps}+\sqrt{\mu}$ (for any $\mu<1$) and worst case quantum communication cost at most $\frac{C}{\mu}$. Suitably choosing the parameters $\eps,\beta$ and $\mu$ and using known lower bound on worst case communication cost for state transfer of $\omega_{RC}$, we obtain the desired result. Same technique also extends to quantum state redistribution. Details appear in section \ref{sec:lowerbound}.

In section \ref{sec:preliminaries} we present some notions and facts that are needed for our proofs. In section \ref{sec:cohtrans} we give a description of interactive protocols for quantum state redistribution and obtain the aforementioned expression that serves as a \textit{transcript} of a given protocol. Section \ref{sec:lowerbound} is devoted to refinement of this expression and proof of main theorem. We present some discussion related to our approach and conclude in Section \ref{sec:conclusion}.
    
\section{Preliminaries}
\label{sec:preliminaries}
In this section we present some notations, definitions, facts and lemmas that we will use in our proofs.
\subsection*{Information theory}

For a natural number $n$, let $[n]$ represent the set $\{1,2, \dots, n\}$. For a set $S$, let $|S|$ be the size of $S$. A \textit{tuple} is a finite collection of positive integers, such as $(i_1,i_2\ldots i_r)$ for some finite $r$. We let $\log$ represent logarithm to the base $2$ and $\ln$ represent logarithm to the base $\mathrm{e}$. The $\ell_1$ norm of an operator $X$ is $\onenorm{X}\defeq\Tr\sqrt{X^{\dag}X}$ and $\ell_2$ norm is $\norm{X}_2\defeq\sqrt{\Tr XX^{\dag}}$. A quantum state (or just a state) is a positive semi-definite matrix with trace equal to $1$. It is called {\em pure} if and only if the rank is $1$. Let $\ket{\psi}$ be a unit vector.  We use $\psi$ to represent the state
and also the density matrix  $\ketbra{\psi}$, associated with $\ket{\psi}$. 

A sub-normalized state is a positive semidefinite matrix with trace less than or equal to $1$. A {\em quantum register} $A$ is associated with some Hilbert space $\H_A$. Define $|A| \defeq \dim(\H_A)$. We denote by $\mathcal{D}(A)$, the set of quantum states in the Hilbert space $\H_A$ and by $\mathcal{D}_{\leq}(A)$, the set of all subnormalized states on register $A$. State $\rho$ with subscript $A$ indicates $\rho_A \in \mathcal{D}(A)$.

 For two quantum states $\rho$ and $\sigma$, $\rho\otimes\sigma$ represents the tensor product (Kronecker product) of $\rho$ and $\sigma$.  Composition of two registers $A$ and $B$, denoted $AB$, is associated with Hilbert space $\H_A \otimes \H_B$. If two registers $A,B$ are associated with the same Hilbert space, we shall denote it by $A\equiv B$. Let $\rho_{AB}$ be a bipartite quantum state in registers $AB$.  We define
\[ \rho_{\reg{B}} \defeq \partrace{\reg{A}}{\rho_{AB}}
\defeq \sum_i (\bra{i} \otimes \id_{\cspace{B}})
\rho_{AB} (\ket{i} \otimes \id_{\cspace{B}}) , \]
where $\set{\ket{i}}_i$ is an orthonormal basis for the Hilbert space $\cspace{A}$
and $\id_{\cspace{B}}$ is the identity matrix in space $\cspace{B}$.
The state $\rho_B$ is referred to as the marginal state of $\rho_{AB}$ in register $B$. Unless otherwise stated, a missing register from subscript in a state will represent partial trace over that register. A quantum map $\E: A\rightarrow B$ is a completely positive and trace preserving (CPTP) linear map (mapping states from $\mathcal{D}(A)$ to states in $\mathcal{D}(B)$). A completely positive and trace non-increasing linear map $\tilde{\E}:A\rightarrow B$ maps quantum states to sub-normalised states. The identity operator in Hilbert space $\H_A$ (and associated register $A$) is denoted $I_A$.  A {\em unitary} operator $U_A:\H_A \rightarrow \H_A$ is such that $U_A^{\dagger}U_A = U_A U_A^{\dagger} = I_A$. An {\em isometry}  $V:\H_A \rightarrow \H_B$ is such that $V^{\dagger}V = I_A$ and $VV^{\dagger} = I_B$. The set of all unitary operations on register $A$ is  denoted by $\mathcal{U}(A)$. 

\begin{definition}
We shall consider the following information theoretic quantities. Let $\varepsilon \geq 0$. 
\begin{enumerate}
\item {\bf generalized fidelity} For $\rho,\sigma \in \mathcal{D}_{\leq}(A)$, $$\F(\rho,\sigma)\defeq \onenorm{\sqrt{\rho}\sqrt{\sigma}} + \sqrt{(1-\Tr(\rho))(1-\Tr(\sigma))}.$$ 
\item {\bf purified distance} For $\rho,\sigma \in \mathcal{D}_{\leq}(A)$, $$\P(\rho,\sigma) = \sqrt{1-\F^2(\rho,\sigma)}.$$
\item {\bf $\varepsilon$-ball} For $\rho_A\in \mathcal{D}(A)$, $$\ball{\eps}{\rho_A} \defeq \{\rho'_A\in \mathcal{D}(A)|~\P(\rho_A,\rho'_A) \leq \varepsilon\}. $$ 
\item {\bf entropy} For $\rho_A\in \mathcal{D}(A)$, $$\ent{A}_{\rho} \defeq - \Tr(\rho_A\log\rho_A) .$$ 
\item {\bf relative entropy} For $\rho_A,\sigma_A\in \mathcal{D}(A)$, $$\relent{\rho_A}{\sigma_A} \defeq \Tr(\rho_A\log\rho_A) - \Tr(\rho_A\log\sigma_A) .$$ 
\item {\bf max-relative entropy} For $\rho_A,\sigma_A\in \mathcal{D}(A)$, $$ \dmax{\rho_A}{\sigma_A}  \defeq  \inf \{ \lambda \in \mathbb{R} : 2^{\lambda} \sigma_A \geq \rho_A \}  .$$ 
\item {\bf mutual information}  For $\rho_{AB}\in \mathcal{D}(AB)$,$$\mutinf{A}{B}_{\rho}  \defeq \relent{\rho_{AB}}{\rho_A\otimes\rho_B}= \ent{A}_{\rho} + \ent{B}_{\rho} - \ent{AB}_{\rho}.$$
\item {\bf conditional mutual information} For $\rho_{ABC}\in \mathcal{D}(ABC)$, $$\condmutinf{A}{B}{C}_{\rho}  \defeq \mutinf{A}{BC}_{\rho}  -  \mutinf{A}{C}_{\rho} = \mutinf{B}{AC}_{\rho}  -  \mutinf{B}{C}_{\rho} .$$
\item {\bf max-information} For $\rho_{AB}\in \mathcal{D}(AB)$, $$ \imax{A}{B}_{\rho} \defeq   \inf_{\sigma_{B}\in \mathcal{D}(B)}\dmax{\rho_{AB}}{\rho_{A}\otimes\sigma_{B}} .$$
\item {\bf smooth max-information} For $\rho_{AB}\in \mathcal{D}(AB)$, $$\imaxeps{A}{B}_{\rho} \defeq \inf_{\rho'\in \ball{\eps}{\rho}} \imax{A}{B}_{\rho'} .$$	
\item {\bf conditional min-entropy} For $\rho_{AB}\in \mathcal{D}(AB)$, $$ \hmin{A}{B}_{\rho} \defeq  - \inf_{\sigma_B\in \mathcal{D}(B)}\dmax{\rho_{AB}}{I_{A}\otimes\sigma_{B}} .$$  	
\item {\bf conditional max-entropy} For $\rho_{AB}\in \mathcal{D}(AB)$, $$\hmax{A}{B}_{\rho_{AB}} \defeq - \hmin{A}{R}_{\rho_{AR}}, $$
where $\rho_{ABR}$ is a  purification of $\rho_{AB}$ for some system $R$. 
\item {\bf smooth conditional min-entropy} For $\rho_{AB}\in \mathcal{D}(AB)$, $$\hmineps{A}{B}_{\rho} \defeq   \sup_{\rho^{'} \in \ball{\eps}{\rho}} \hmin{A}{B}_{\rho^{'}} .$$  	
\item {\bf smooth conditional max-entropy} For $\rho_{AB}\in \mathcal{D}(AB)$, $$ \hmaxeps{A}{B}_{\rho} \defeq \inf_{\rho^{'} \in \ball{\eps}{\rho}} \hmax{A}{B}_{\rho^{'}} .$$ 
\end{enumerate}
\label{def:infquant}
\end{definition}	
We will use the following facts. 
\begin{fact}[Triangle inequality for purified distance,~\cite{Tomamichel12}]
\label{fact:trianglepurified}
For states $\rho^1_A, \rho^2_A, \rho^3_A \in \mathcal{D}(A)$,
$$\P(\rho^1_A,\rho^3_A) \leq \P(\rho^1_A,\rho^2_A)  + \P(\rho^2_A,\rho^3_A) . $$ 
\end{fact}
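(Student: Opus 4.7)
The plan is to reduce the triangle inequality for purified distance to the triangle inequality for the \emph{Bures angle} $\fn{A}{\rho,\sigma} \defeq \arccos \fn{\F}{\rho,\sigma}$, so that $\fn{\P}{\rho,\sigma} = \sin \fn{A}{\rho,\sigma}$. Once I have a triangle inequality for $A$, the triangle inequality for $\P$ follows from the subadditivity of $\sin$ on $[0,\pi]$: whenever $x,y\geq 0$ with $x+y\leq \pi$, the identity $\sin(x+y)=\sin x\cos y+\cos x\sin y$ together with $|\cos|\leq 1$ yields $\sin(x+y)\leq \sin x+\sin y$. Applied with $x = \fn{A}{\rho^1,\rho^2}$ and $y=\fn{A}{\rho^2,\rho^3}$, this converts $A(\rho^1,\rho^3)\leq A(\rho^1,\rho^2)+A(\rho^2,\rho^3)$ into the desired bound on $\P$.

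To establish triangle inequality for $A$, I would work with purifications. First reduce to the pure-state case, where $A(\ket{\phi},\ket{\psi})=\arccos\abs{\braket{\phi}{\psi}}$ is a well-known metric (the Fubini--Study distance): realizing unit vectors as points on a great circle of a sphere, the angle is just the geodesic distance and satisfies the triangle inequality. Then use Uhlmann's theorem: fix any purification $\ket{\phi_2}$ of $\rho^2$ in a sufficiently large auxiliary space, and invoke Uhlmann to obtain purifications $\ket{\phi_1},\ket{\phi_3}$ of $\rho^1,\rho^3$ (respectively) on the same space such that $\abs{\braket{\phi_1}{\phi_2}}=\fn{\F}{\rho^1,\rho^2}$ and $\abs{\braket{\phi_2}{\phi_3}}=\fn{\F}{\rho^2,\rho^3}$. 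By definition of generalized fidelity as a maximum over purifications, $\fn{\F}{\rho^1,\rho^3}\geq \abs{\braket{\phi_1}{\phi_3}}$, hence $\fn{A}{\rho^1,\rho^3}\leq \fn{A}{\ket{\phi_1},\ket{\phi_3}}$. Combining this with the pure-state triangle inequality gives
\[
\fn{A}{\rho^1,\rho^3}\leq \fn{A}{\ket{\phi_1},\ket{\phi_3}} \leq \fn{A}{\ket{\phi_1},\ket{\phi_2}} + \fn{A}{\ket{\phi_2},\ket{\phi_3}} = \fn{A}{\rho^1,\rho^2}+\fn{A}{\rho^2,\rho^3}.
\]

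The main subtlety is the subnormalized case allowed by the generalized fidelity, since the extra $\sqrt{(1-\Tr\rho)(1-\Tr\sigma)}$ term in $\F$ has to be absorbed by the Uhlmann argument. The standard trick is to embed each subnormalized $\rho^i$ into a normalized state $\tilde{\rho}^i = \rho^i\oplus (1-\Tr\rho^i)\ketbra{\perp}$ on one extra dimension, so that $\fn{\F}{\tilde\rho^i,\tilde\rho^j}=\fn{\F}{\rho^i,\rho^j}$ and hence $\fn{\P}{\tilde\rho^i,\tilde\rho^j}=\fn{\P}{\rho^i,\rho^j}$. Then applying the normalized-state triangle inequality above to $\tilde\rho^1,\tilde\rho^2,\tilde\rho^3$ gives the claim. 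The only genuine obstacle is Uhlmann's theorem itself, but it is standard and can be invoked directly; the rest is trigonometry.
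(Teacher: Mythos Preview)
The paper does not prove this fact; it merely states it with a citation to Tomamichel's thesis. Your argument is the standard one found in that reference (reduce to the Bures angle via Uhlmann, use the Fubini--Study triangle inequality for pure states, then pass back to $\P=\sin A$), so there is nothing to compare.

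One small gap worth patching: from $A_{13}\le A_{12}+A_{23}$ and the subadditivity $\sin(x+y)\le \sin x+\sin y$ you implicitly use $\sin A_{13}\le \sin(A_{12}+A_{23})$, which requires monotonicity of $\sin$. Since each $A_{ij}\in[0,\pi/2]$, this is fine when $A_{12}+A_{23}\le \pi/2$; when $A_{12}+A_{23}>\pi/2$ the monotonicity step fails, but then $\sin A_{12}+\sin A_{23}\ge 1\ge \sin A_{13}$ directly (e.g.\ because $\sin x+\sin y\ge \sin x+\cos x\ge 1$ once $y>\pi/2-x$). Also, since the stated fact is only for normalized states in $\mathcal{D}(A)$, your subnormalized embedding is not needed here, though it does no harm.
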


\begin{fact}[Purified distance and trace distance,~\cite{Tomamichel12}, Proposition 3.3]
\label{fact:purifiedtrace}
For subnormalized states $\rho_1,\rho_2$
$$\frac{1}{2}\|\rho_1-\rho_2\|_1\leq \P(\rho_1,\rho_2) \leq \sqrt{\|\rho_1-\rho_2\|_1}.$$
\end{fact}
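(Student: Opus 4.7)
The plan is to reduce the statement to the classical Fuchs--van de Graaf inequalities for normalised states via a one-dimensional ancilla embedding. Given subnormalised $\rho_1,\rho_2\in\mathcal{D}_{\leq}(A)$, enlarge the Hilbert space to $\H_A\oplus\mathbb{C}\ket{\perp}$ and define the normalised states $\hat\rho_i := \rho_i + (1-\Tr\rho_i)\ketbra{\perp}$. Since $\rho_i$ and $\ketbra{\perp}$ occupy orthogonal blocks, $\sqrt{\hat\rho_i} = \sqrt{\rho_i} + \sqrt{1-\Tr\rho_i}\ketbra{\perp}$, and a direct computation yields
$$\|\sqrt{\hat\rho_1}\sqrt{\hat\rho_2}\|_1 = \|\sqrt{\rho_1}\sqrt{\rho_2}\|_1 + \sqrt{(1-\Tr\rho_1)(1-\Tr\rho_2)} = \F(\rho_1,\rho_2).$$
Hence $\P(\rho_1,\rho_2) = \P(\hat\rho_1,\hat\rho_2)$, while the same block structure gives $\|\hat\rho_1-\hat\rho_2\|_1 = \|\rho_1-\rho_2\|_1 + |\Tr\rho_1-\Tr\rho_2|$.

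The lower bound is immediate: the normalised-state Fuchs--van de Graaf inequality $\tfrac{1}{2}\|\hat\rho_1-\hat\rho_2\|_1 \leq \P(\hat\rho_1,\hat\rho_2)$, proved via the optimal Helstrom measurement together with monotonicity of purified distance under CPTP maps, combines with $\|\hat\rho_1-\hat\rho_2\|_1 \geq \|\rho_1-\rho_2\|_1$ to give $\tfrac{1}{2}\|\rho_1-\rho_2\|_1 \leq \P(\rho_1,\rho_2)$. The complementary normalised-state inequality $1-F(\hat\rho_1,\hat\rho_2) \leq \tfrac{1}{2}\|\hat\rho_1-\hat\rho_2\|_1$ (via the Fuchs--Caves POVM) gives $\P(\hat\rho_1,\hat\rho_2)^2 = (1-F)(1+F) \leq 2(1-F) \leq \|\hat\rho_1-\hat\rho_2\|_1$, whence $\P(\rho_1,\rho_2) \leq \sqrt{\|\rho_1-\rho_2\|_1 + |\Tr\rho_1-\Tr\rho_2|} \leq \sqrt{2\|\rho_1-\rho_2\|_1}$, that is, the upper bound up to a factor of $\sqrt{2}$.

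The main obstacle is therefore removing this $\sqrt{2}$, i.e.\ absorbing the ancilla term $|\Tr\rho_1-\Tr\rho_2|$ rather than bounding it separately; the extremal pair $\rho_1=0,\rho_2=\ketbra{0}$, where both sides of the claim equal $1$, shows that the sharp form is actually saturated. I would close the gap by exploiting the Fuchs--Caves freedom: the POVM saturating $F(\hat\rho_1,\hat\rho_2)$ can be chosen block-diagonal (its building block, the spectral projectors of $\hat\rho_1^{-1/2}\hat\rho_2\hat\rho_1^{-1/2}$, is block-diagonal), so its outcomes split into a POVM $\{E_i^{\mathrm{main}}\}$ on $\H_A$ plus a distinguished outcome $\ketbra{\perp}$. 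Setting $\mu_i=\Tr(E_i^{\mathrm{main}}\rho_1)$, $\nu_i=\Tr(E_i^{\mathrm{main}}\rho_2)$, $\mu_\perp=1-\Tr\rho_1$ and $\nu_\perp=1-\Tr\rho_2$, Lagrange's identity gives $1-\F^2(\rho_1,\rho_2) = \sum_{i<j}\br{\sqrt{\mu_i\nu_j}-\sqrt{\mu_j\nu_i}}^2$, and the scalar inequality $(\sqrt{a}-\sqrt{b})^2 \leq |a-b|$ reduces the claim to the combinatorial estimate $\sum_{i<j}|\mu_i\nu_j - \mu_j\nu_i| \leq \sum_{i\neq\perp}|\mu_i-\nu_i|$. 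The constraints $\sum_i\mu_i=\sum_i\nu_i=1$ let one eliminate $\mu_\perp-\nu_\perp = -\sum_{i\neq\perp}(\mu_i-\nu_i)$, and sign-bookkeeping (splitting the physical indices according to the sign of $\mu_i-\nu_i$) establishes this inequality, after which monotonicity of trace distance under the POVM $\{E_i^{\mathrm{main}}\}$ gives $\sum_{i\neq\perp}|\mu_i-\nu_i| \leq \|\rho_1-\rho_2\|_1$, completing the argument.
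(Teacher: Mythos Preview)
The paper does not give its own proof of this statement; it is recorded as a ``Fact'' with a citation to \cite{Tomamichel12}, Proposition~3.3, and no argument appears in the paper. So there is no proof here to compare yours against.

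Assessing your attempt on its own: the embedding $\hat\rho_i=\rho_i\oplus(1-\Tr\rho_i)\ketbra{\perp}$ and the identity $\F(\rho_1,\rho_2)=\F(\hat\rho_1,\hat\rho_2)$ are correct, and your derivation of the lower bound as well as the $\sqrt{2}$-weak upper bound is sound. The gap is in your removal of the $\sqrt{2}$. The entire sharp upper bound rests on the combinatorial inequality
\[
\sum_{i<j}\bigl|\mu_i\nu_j-\mu_j\nu_i\bigr|\ \le\ \sum_{i\neq\perp}|\mu_i-\nu_i|,
\]
which you assert follows by ``sign-bookkeeping (splitting the physical indices according to the sign of $\mu_i-\nu_i$)''. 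No such argument is actually supplied, and the natural first attempt---writing $\mu_i\nu_j-\mu_j\nu_i=\mu_i(\nu_j-\mu_j)+\mu_j(\mu_i-\nu_i)$ and applying the triangle inequality---yields only $\sum_j|\mu_j-\nu_j|(1-\mu_j)\le\sum_j|\mu_j-\nu_j|$ with the \emph{full} sum including $\perp$, i.e.\ exactly the $\sqrt{2}$-weak bound again. Your test cases (and several more one can try) are consistent with the inequality being true, but the sketch you give is not a proof; until that step is made rigorous the sharp constant is not established by your method.
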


\begin{fact}[Uhlmann's theorem][\cite{uhlmann76}]
\label{uhlmann}
Let $\rho_A,\sigma_A\in \mathcal{D}(A)$. Let $\ket{\rho}_{AB}$ be a purification of $\rho_A$ and $\ket{\sigma}_{AC}$ be a purification of $\sigma_A$. There exists an isometry $V: \H_C \rightarrow \H_B$ such that,
 $$\F(\ketbra{\theta}_{AB}, \ketbra{\rho}_{AB}) = \F(\rho_A,\sigma_A) ,$$
 where $\ket{\theta}_{AB} = (I_A \otimes V) \ket{\sigma}_{AC}$.
\end{fact}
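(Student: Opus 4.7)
The plan is to reduce the claim to the operator identity $\F(\rho_A,\sigma_A) = \onenorm{\sqrt{\rho_A}\sqrt{\sigma_A}}$ (which follows immediately from the definition of generalized fidelity, since both states are normalized) by exploiting the freedom in the choice of purifications. Because $\ket{\rho}_{AB}$ and $\ket{\theta}_{AB}$ are both unit vectors, the generalized fidelity between the corresponding pure states collapses to $\F(\ketbra{\theta}_{AB},\ketbra{\rho}_{AB}) = |\braket{\rho}{\theta}|$. Hence it is enough to produce an isometry $V:\H_C\to\H_B$ achieving $|\braket{\rho}{\theta}| = \onenorm{\sqrt{\rho_A}\sqrt{\sigma_A}}$.

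First I would pass to a canonical purification. Introduce $A'\equiv A$ and let $\ket{\Omega}_{AA'} = \sum_i\ket{i}_A\ket{i}_{A'}$ be the unnormalized maximally entangled vector; set $\ket{\tilde\rho}_{AA'} = (\sqrt{\rho_A}\otimes I_{A'})\ket{\Omega}$ and $\ket{\tilde\sigma}_{AA'} = (\sqrt{\sigma_A}\otimes I_{A'})\ket{\Omega}$. By the standard classification of purifications (any two purifications of a fixed state are related by an isometry on the purifying register), there exist isometries $U_\rho:\H_{A'}\to\H_B$ and $U_\sigma:\H_{A'}\to\H_C$ with $\ket{\rho}_{AB} = (I_A\otimes U_\rho)\ket{\tilde\rho}_{AA'}$ and $\ket{\sigma}_{AC} = (I_A\otimes U_\sigma)\ket{\tilde\sigma}_{AA'}$.

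Next I would parameterize the candidate as $\ket{\theta}_{AB} = (I_A\otimes V U_\sigma)\ket{\tilde\sigma}_{AA'}$ and apply the ``transpose trick'' $\bra{\Omega}(X\otimes Y)\ket{\Omega} = \Tr(XY^T)$ to get
\[
\braket{\rho}{\theta} \;=\; \bra{\Omega}\bigl(\sqrt{\rho_A}\sqrt{\sigma_A}\otimes U_\rho^\dagger V U_\sigma\bigr)\ket{\Omega} \;=\; \Tr\!\bigl(\sqrt{\rho_A}\sqrt{\sigma_A}\cdot M^T\bigr),
\]
where $M \defeq U_\rho^\dagger V U_\sigma$ is a contraction on $\H_{A'}$. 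Letting $X = \sqrt{\rho_A}\sqrt{\sigma_A}$ with polar decomposition $X = W|X|$, the inequality $|\Tr(XN)|\leq \onenorm{X}$ for every contraction $N$, saturated at $N=W^\dagger$, reduces the task to choosing $V$ so that $M^T = W^\dagger$, i.e.\ $U_\rho^\dagger V U_\sigma = \overline{W}$.

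The hard part will be the last bookkeeping step: exhibiting an honest isometry $V:\H_C\to\H_B$ under the paper's convention (which requires both $V^\dagger V = I_C$ and $VV^\dagger = I_B$, hence $|B|=|C|$) that implements the prescribed action on the range of $U_\sigma$. I would resolve this by first reducing to the minimal purification dimension by cutting down to the rank of $\rho_A$ and $\sigma_A$, defining $V$ on the range of $U_\sigma$ via $VU_\sigma = U_\rho\overline{W}$, and extending arbitrarily on the orthogonal complement to a norm-preserving bijection, embedding into a larger $\H_B$ if necessary. Once this dimension-matching is arranged, the transpose-trick computation and the polar-decomposition maximization are routine and immediately deliver $|\braket{\rho}{\theta}| = \onenorm{\sqrt{\rho_A}\sqrt{\sigma_A}} = \F(\rho_A,\sigma_A)$.
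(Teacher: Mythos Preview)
The paper does not prove this statement; it is stated as a Fact with a citation to \cite{uhlmann76} and no argument is given. So there is nothing to compare against.

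Your sketch is the standard proof of Uhlmann's theorem: pass to canonical purifications $(\sqrt{\rho}\otimes I)\ket{\Omega}$, use the transpose trick to write the overlap as $\Tr(\sqrt{\rho_A}\sqrt{\sigma_A}\,M^T)$ for a contraction $M$, and saturate with the polar part of $\sqrt{\rho_A}\sqrt{\sigma_A}$. This is correct and is exactly how the result is usually established.

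You are right to flag the paper's idiosyncratic definition of ``isometry'' (requiring both $V^\dagger V = I_C$ and $VV^\dagger = I_B$, which forces $|B|=|C|$). This is a quirk of the paper's preliminaries rather than a defect in your argument; in the uses of Uhlmann's theorem later in the paper (e.g.\ in Lemma~\ref{cohlemma} and Lemma~\ref{goodcoh}) the purifying registers are taken large enough that the dimension issue does not arise, and your proposed fix---define $V$ on $\mathrm{ran}(U_\sigma)$ by $VU_\sigma = U_\rho\overline{W}$ and extend to a bijection on the complements, padding one side if needed---is exactly the right way to handle it. One small point: when you write ``embedding into a larger $\H_B$ if necessary,'' note that in the statement $\H_B$ is fixed by the given purification $\ket{\rho}_{AB}$, so strictly speaking the statement as written only goes through when $|C|\le |B|$ (or $|C|=|B|$ under the paper's definition); but this is a statement-level caveat, not a flaw in your proof.
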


\begin{fact}[Monotonicity of quantum operations][\cite{lindblad75, barnum96}, \cite{Tomamichel12}, Theorem 3.4]
\label{fact:monotonequantumoperation}
For states $\rho$, $\sigma$, and quantum operation $\E(\cdot)$, 
$$\onenorm{\E(\rho) - \E(\sigma)} \leq \onenorm{\rho - \sigma} , \P(\rho,\sigma)\leq \P(\E(\rho),\E(\sigma))  \text{ and } \F(\rho,\sigma) \leq \F(\E(\rho),\E(\sigma)) .$$
In particular, for a trace non-increasing completely positive map $\tilde{\E}(\cdot)$, $$\P(\rho,\sigma)\leq \P(\tilde{\E}(\rho),\tilde{\E}(\sigma)).$$

\end{fact}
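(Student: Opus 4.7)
The plan is to prove each of the three inequalities in turn, first establishing the trace-norm contraction via the dual characterization of the trace norm, then bootstrapping to the fidelity inequality through a Stinespring dilation combined with Uhlmann's theorem (Fact \ref{uhlmann}), and finally deducing the purified-distance statement as an algebraic corollary. The trace non-increasing extension will be handled by embedding into a CPTP map on a flag-enlarged space and reducing to the CPTP case.

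For the trace-norm inequality, I would use the variational formula
\begin{equation*}
\onenorm{\rho - \sigma} \;=\; 2 \sup_{0 \leq M \leq I}\, \Tr\bigl(M(\rho - \sigma)\bigr).
\end{equation*}
Because $\E$ is CPTP, its Heisenberg adjoint $\E^{*}$ is completely positive and unital, so $0 \leq \E^{*}(M) \leq I$ whenever $0 \leq M \leq I$. Thus $\Tr\bigl(M(\E(\rho) - \E(\sigma))\bigr) = \Tr\bigl(\E^{*}(M)(\rho - \sigma)\bigr) \leq \tfrac{1}{2}\onenorm{\rho - \sigma}$, and taking the supremum over $M$ gives the first inequality. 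For the fidelity statement, I would invoke a Stinespring dilation $\E(X) = \Tr_{E}(V X V^{\dagger})$ with $V : \H_{A} \to \H_{B} \otimes \H_{E}$ an isometry. Since fidelity is invariant under isometries, it suffices to prove monotonicity under partial trace, and this is immediate from Uhlmann's theorem: any pair of purifications witnessing $\F(\rho_{AB}, \sigma_{AB})$ on a composite space remains a valid choice in the supremum defining $\F(\rho_{A}, \sigma_{A})$ on an equal or larger purifying space, so $\F$ can only increase upon tracing out $B$. The purified-distance statement then follows by the algebraic identity $\P = \sqrt{1 - \F^{2}}$.

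For the trace non-increasing case, I would embed each subnormalized state $\rho \in \mathcal{D}_{\leq}(A)$ as $\hat{\rho} \defeq \rho \oplus (1 - \Tr\rho)\ketbra{e}$ on $\H_{A} \oplus \mathbb{C}$, with $\ket{e}$ a fresh orthogonal flag; one checks directly from the definition of generalized fidelity that $\F(\rho, \sigma) = \F(\hat{\rho}, \hat{\sigma})$. The trace non-increasing map $\tilde{\E}$ extends to a CPTP map $\hat{\E}$ routing the missing trace weight onto the output flag, and applying the already-proved CPTP inequality to $\hat{\E}, \hat{\rho}, \hat{\sigma}$ yields the subnormalized claim. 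The main obstacle I foresee is in this last reduction: one has to verify simultaneously that $\hat{\E}$ is genuinely trace-preserving (not merely CP and trace non-increasing in disguise), that the output flag really remains orthogonal to the image of the physical subspace under $\hat{\E}$, and that the identity $\F(\tilde{\E}(\rho), \tilde{\E}(\sigma)) = \F(\hat{\E}(\hat{\rho}), \hat{\E}(\hat{\sigma}))$ transfers the CPTP monotonicity intact rather than leaking a cross-term from the flag into the generalized-fidelity formula.
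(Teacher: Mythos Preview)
The paper does not prove this fact at all; it is stated with external citations (Lindblad, Barnum et al., and Tomamichel's thesis Theorem~3.4) and used as a black box. So there is no proof in the paper to compare against, and your outline is essentially the standard textbook route found in those references.

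Your argument is sound, with one caveat worth flagging: the purified-distance inequality as printed in the statement is in the wrong direction. From $\F(\rho,\sigma)\leq\F(\E(\rho),\E(\sigma))$ and $\P=\sqrt{1-\F^{2}}$ you obtain $\P(\E(\rho),\E(\sigma))\leq\P(\rho,\sigma)$, not the reverse. Indeed, the paper itself uses the fact in that direction (see the proof of Lemma~\ref{convepr}, where $\P$ is bounded \emph{above} after applying $\tilde{\E}$), so the printed inequality is a typo. Your derivation actually yields the correct monotonicity; just be aware you are proving the intended statement rather than the literal one.

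On the trace non-increasing extension, the flag-space embedding you describe is exactly the device used in Tomamichel's thesis. The cleanest way to avoid the cross-term worry you raise is to extend only on the output side: define the CPTP map $\E':\mathcal{D}(A)\to\mathcal{D}(B\oplus\mathbb{C})$ by $\E'(\rho)=\tilde{\E}(\rho)\oplus\Tr\bigl((I-\tilde{\E}^{*}(I))\rho\bigr)\ketbra{e}$, apply CPTP monotonicity of the generalized fidelity to $\E'$, and then observe that $\F\bigl(\E'(\rho),\E'(\sigma)\bigr)\leq\F\bigl(\tilde{\E}(\rho),\tilde{\E}(\sigma)\bigr)$ because the direct-sum structure makes the flag contribution exactly the defect term in the generalized-fidelity formula, with equality when the defect traces happen to match. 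This sidesteps the need to embed the input and keeps the bookkeeping minimal.
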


\begin{fact}[Join concavity of fidelity][\cite{Watrouslecturenote}, Proposition 4.7]\label{fact:fidelityconcave}
Given quantum states $\rho_1,\rho_2\ldots\rho_k,\sigma_1,\sigma_2\ldots\sigma_k \in \mathcal{D}(A)$ and positive numbers $p_1,p_2\ldots p_k$ such that $\sum_ip_i=1$. Then $$\F(\sum_ip_i\rho_i,\sum_ip_i\sigma_i)\geq \sum_ip_i\F(\rho_i,\sigma_i).$$
\end{fact}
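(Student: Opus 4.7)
The plan is to reduce joint concavity to monotonicity of fidelity under CPTP maps, which is already available to us as Fact \ref{fact:monotonequantumoperation}. First I would introduce a classical ``flag'' register $C$ with an orthonormal basis $\{\ket{i}\}$ and form the block-diagonal states
\[
\rho_{AC} \defeq \sum_i p_i \, \rho_i \otimes \ketbra{i}_C, \qquad \sigma_{AC} \defeq \sum_i p_i \, \sigma_i \otimes \ketbra{i}_C,
\]
both of which lie in $\mathcal{D}(AC)$ because $\sum_i p_i = 1$. Tracing out $C$ sends $\rho_{AC} \mapsto \sum_i p_i \rho_i$ and $\sigma_{AC} \mapsto \sum_i p_i \sigma_i$, so by the fidelity part of Fact \ref{fact:monotonequantumoperation},
\[
\F(\rho_{AC}, \sigma_{AC}) \;\leq\; \F\!\br{\sum_i p_i \rho_i, \sum_i p_i \sigma_i}.
\]
Thus the proposition reduces to the identity $\F(\rho_{AC}, \sigma_{AC}) = \sum_i p_i \F(\rho_i,\sigma_i)$.

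To verify this identity I would exploit the block-diagonal structure. For any family of positive semidefinite $A_i$ on $\H_A$, one has $\sqrt{\sum_i A_i \otimes \ketbra{i}_C} = \sum_i \sqrt{A_i} \otimes \ketbra{i}_C$, so
\[
\sqrt{\rho_{AC}}\,\sqrt{\sigma_{AC}} \;=\; \sum_i p_i \, \sqrt{\rho_i}\,\sqrt{\sigma_i} \otimes \ketbra{i}_C,
\]
which is again block-diagonal in $C$. The trace norm of a block-diagonal operator is the sum of the trace norms of its blocks, giving
\[
\onenorm{\sqrt{\rho_{AC}}\,\sqrt{\sigma_{AC}}} = \sum_i p_i \onenorm{\sqrt{\rho_i}\,\sqrt{\sigma_i}} = \sum_i p_i \F(\rho_i,\sigma_i),
\]
where the final equality uses that all the $\rho_i,\sigma_i$ are normalised so the square-root correction term in the definition of generalised fidelity vanishes; the same holds for $\rho_{AC},\sigma_{AC}$. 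Combining this with the monotonicity inequality yields the claim.

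The only step that requires any care is justifying the square-root identity for block-diagonal operators and the additivity of the trace norm across blocks; both follow from the fact that each block lives on an orthogonal subspace $\H_A \otimes \mathrm{span}(\ket{i})$, so functional calculus and $\onenorm{\cdot}$ act blockwise. I do not anticipate any real obstacle: the argument is essentially a bookkeeping calculation plus one invocation of monotonicity. As a sanity check, if a more direct proof is preferred, one could instead invoke Uhlmann's theorem (Fact \ref{uhlmann}) to pick purifications $\ket{\psi_i}_{AB},\ket{\phi_i}_{AB}$ with $\braket{\psi_i}{\phi_i} = \F(\rho_i,\sigma_i)$, form the joint purifications $\sum_i \sqrt{p_i}\ket{\psi_i}_{AB}\ket{i}_C$ and $\sum_i \sqrt{p_i}\ket{\phi_i}_{AB}\ket{i}_C$ of $\sum_i p_i \rho_i$ and $\sum_i p_i \sigma_i$, and bound the fidelity of the marginals below by the modulus of their inner product $\sum_i p_i \F(\rho_i,\sigma_i)$, again via Uhlmann.
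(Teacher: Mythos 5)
Your proof is correct. The paper itself gives no argument for this fact --- it is stated with a citation to Watrous's lecture notes (Proposition 4.7) --- so there is nothing to compare against; your block-diagonal flag-register construction, combined with monotonicity of fidelity under partial trace (Fact \ref{fact:monotonequantumoperation}) and the blockwise evaluation $\onenorm{\sqrt{\rho_{AC}}\sqrt{\sigma_{AC}}}=\sum_i p_i\onenorm{\sqrt{\rho_i}\sqrt{\sigma_i}}$, is the standard textbook derivation and is sound as written (as is the Uhlmann-based alternative you sketch).
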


\begin{fact}
\label{scalarpurified}
Let $\rho,\sigma \in \mathcal{D}(A)$ be quantum states. Let $\alpha<1$ be a positive real number. If $\P(\alpha\rho,\alpha\sigma)\leq \eps$, then $$\P(\rho,\sigma)\leq \eps\sqrt{\frac{2}{\alpha}}.$$
\end{fact}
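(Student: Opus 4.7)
The plan is to translate the hypothesis $\P(\alpha\rho,\alpha\sigma)\le\eps$ into a lower bound on the fidelity $\F(\rho,\sigma)$, and then back into an upper bound on $\P(\rho,\sigma)$, using only elementary algebra. The only non-trivial input is the definition of generalized fidelity for subnormalized states from \cref{def:infquant}; once $\F(\alpha\rho,\alpha\sigma)$ is written in closed form, everything else is bookkeeping.

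First I would compute $\F(\alpha\rho,\alpha\sigma)$ explicitly. Since $\rho$ and $\sigma$ are normalized, $\Tr(\alpha\rho)=\Tr(\alpha\sigma)=\alpha$ and $\sqrt{\alpha\rho}\sqrt{\alpha\sigma}=\alpha\sqrt{\rho}\sqrt{\sigma}$, so the definition yields
$$\F(\alpha\rho,\alpha\sigma)=\alpha\,\F(\rho,\sigma)+(1-\alpha).$$
The hypothesis, rewritten via the definition of purified distance as $\F(\alpha\rho,\alpha\sigma)\ge\sqrt{1-\eps^2}$, then rearranges to
$$\F(\rho,\sigma)\ \ge\ 1-\frac{1-\sqrt{1-\eps^2}}{\alpha}.$$

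Setting $u\defeq (1-\sqrt{1-\eps^2})/\alpha\ge 0$, the elementary estimate $1-(1-u)^2\le 2u$ yields $\P^2(\rho,\sigma)=1-\F^2(\rho,\sigma)\le 2u$ whenever $u\le 1$ (in the regime $u>1$ the trivial bound $\P^2\le 1\le 2u$ suffices, so one does not lose anything). Rationalizing via $1-\sqrt{1-\eps^2}=\eps^2/(1+\sqrt{1-\eps^2})\le\eps^2$ converts this into $\P^2(\rho,\sigma)\le 2\eps^2/\alpha$, giving the claimed $\P(\rho,\sigma)\le\eps\sqrt{2/\alpha}$. I do not foresee a real obstacle: the argument is three short lines of algebra, and the only point to watch is that the lower bound on $\F(\rho,\sigma)$ can become negative when $\eps^2/\alpha$ is large, which is harmless because $\P$ is automatically bounded by $1$.
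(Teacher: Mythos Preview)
Your proof is correct and follows essentially the same route as the paper: both compute $\F(\alpha\rho,\alpha\sigma)=\alpha\F(\rho,\sigma)+(1-\alpha)$ from the definition of generalized fidelity, rearrange the hypothesis into a lower bound on $\F(\rho,\sigma)$, and convert via $1-(1-u)^2\le 2u$. The only cosmetic difference is that the paper applies the estimate $\sqrt{1-\eps^2}\ge 1-\eps^2$ at the outset whereas you defer the equivalent bound $1-\sqrt{1-\eps^2}\le\eps^2$ to the end; you also make explicit the harmless edge case $u>1$ that the paper leaves implicit.
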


\begin{proof}
$\P(\alpha\rho,\alpha\sigma)\leq \eps$ implies $\F(\alpha\rho,\alpha\sigma)\geq \sqrt{1-\eps^2}\geq 1-\eps^2$. But, 
$\F(\alpha\rho,\alpha\sigma)= \alpha\|\sqrt{\rho}\sqrt{\sigma}\|_1+(1-\alpha)$. Thus, $$\F(\rho,\sigma)=\|\sqrt{\rho}\sqrt{\sigma}\|_1 \geq 1-\frac{\eps^2}{\alpha}.$$ Thus, $\P(\rho,\sigma)\leq \sqrt{1-(1-\frac{\eps^2}{\alpha})^2}\leq \sqrt{\frac{2\eps^2}{\alpha}}$.
\end{proof}

\begin{fact}[Fannes inequality][\cite{fannes73}]
\label{fact:fannes}
Given quantum states $\rho_1,\rho_2\in \mathcal{D}(A)$, such that $|A|=d$ and $\P(\rho_1,\rho_2)= \eps \leq \frac{1}{2\mathrm{e}}$, $$|S(\rho_1)-S(\rho_2)|\leq \eps\log(d)+1.$$   
\end{fact}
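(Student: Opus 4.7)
My plan is to reduce the claim to a continuity bound for the classical Shannon entropy. First, Fact~\ref{fact:purifiedtrace} applied to the hypothesis $\P(\rho_1,\rho_2)=\eps$ yields $\tfrac12\onenorm{\rho_1-\rho_2}\leq\eps$. Let $\lambda=(\lambda_1\geq\cdots\geq\lambda_d)$ and $\mu=(\mu_1\geq\cdots\geq\mu_d)$ denote the decreasingly ordered eigenvalue vectors of $\rho_1$ and $\rho_2$. By Mirsky's theorem for Hermitian operators, $\sum_i\abs{\lambda_i-\mu_i}\leq \onenorm{\rho_1-\rho_2}\leq 2\eps$, so $\lambda$ and $\mu$ are $\eps$-close probability vectors on $[d]$ in total variation. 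Since $S(\rho_1)=\ent{\lambda}$ and $S(\rho_2)=\ent{\mu}$, it suffices to establish a classical Fannes bound for these vectors.

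Second, I invoke the classical Fannes inequality: for probability vectors $p,q\in\mathbb{R}^d$ with $T\defeq\tfrac12\sum_i\abs{p_i-q_i}\leq 1/\mathrm{e}$,
\[
\abs{\ent{p}-\ent{q}}\;\leq\;T\log(d)+T\log\br{\tfrac{1}{T}}.
\]
The standard proof couples $p$ and $q$ along their sorted entries and uses concavity of $x\mapsto -x\log x$ together with a logarithmic modulus of continuity of this function near zero. The hypothesis $\eps\leq 1/(2\mathrm{e})$ gives $T\leq\eps\leq 1/\mathrm{e}$, so this inequality applies to $(\lambda,\mu)$ with the associated $T$.

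Third, since the function $x\mapsto x\log(1/x)$ is increasing on $[0,1/\mathrm{e}]$ and $T\leq 1/(2\mathrm{e})$, a short calculation gives $T\log(1/T)\leq(1/(2\mathrm{e}))\log(2\mathrm{e})<1$. Combining this with $T\leq\eps$ produces
\[
\abs{S(\rho_1)-S(\rho_2)}\;\leq\;T\log d+1\;\leq\;\eps\log d+1,
\]
which is exactly the stated bound. The main obstacle is the classical Fannes bound itself: its proof requires a uniform estimate on $\abs{-x\log x+y\log y}$ in terms of $\abs{x-y}$ valid for small $x,y\in[0,1]$, and the assumption $\eps\leq 1/(2\mathrm{e})$ is precisely what confines the trace distance $T$ to the regime in which this estimate is clean.
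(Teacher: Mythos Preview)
Your argument is correct. The reduction via Fact~\ref{fact:purifiedtrace} to a trace-distance bound, followed by Mirsky's inequality to pass to sorted eigenvalue vectors, and then the classical Fannes estimate $\abs{\ent{p}-\ent{q}}\leq T\log d + T\log(1/T)$ with the final numerical bound $T\log(1/T)\leq (1/(2\mathrm{e}))\log(2\mathrm{e})<1$, is a valid derivation of the stated inequality.

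Note, however, that the paper does not supply its own proof of this statement: it is recorded as a \emph{fact} with a citation to the original source~\cite{fannes73} and is used as a black box. So there is no paper-proof to compare against; your proposal simply fills in a standard argument that the authors chose to omit.
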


\begin{fact}[Subadditivity of entropy][\cite{LiebAraki70}]
\label{subadditive}
For a quantum state $\rho_{AB}\in \mathcal{D}(AB)$, $|S(\rho_A)-S(\rho_B)|\leq S(\rho_{AB})\leq S(\rho_A)+S(\rho_B)$.
\end{fact}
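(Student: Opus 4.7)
The plan is to handle the two inequalities separately, treating subadditivity as the primary ingredient and deriving the Araki--Lieb bound from it by purification.

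For the upper bound $S(\rho_{AB}) \leq S(\rho_A) + S(\rho_B)$, I would observe that this is equivalent to $\mutinf{A}{B}_{\rho} \geq 0$, and by item~7 of Definition~\ref{def:infquant} we have $\mutinf{A}{B}_{\rho} = \relent{\rho_{AB}}{\rho_A\otimes\rho_B}$. It therefore suffices to prove Klein's inequality: $\relent{\rho}{\sigma} \geq 0$ for any two states $\rho,\sigma$. I would derive this by diagonalizing $\rho = \sum_i p_i \ketbra{e_i}$ and $\sigma = \sum_j q_j \ketbra{f_j}$, rewriting $\Tr(\rho \log \sigma)$ in the $\{\ket{e_i}\}$ basis, and applying Jensen's inequality to the concave function $\log$ with weights $|\langle e_i \vert f_j \rangle|^2$. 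This reduces the problem to the classical nonnegativity of $\sum_i p_i \log(p_i/r_i)$, where $r_i = \sum_j |\langle e_i \vert f_j \rangle|^2 q_j$ is again a probability distribution (since $\sum_j q_j = 1$), and the classical statement is itself the scalar consequence of $\ln t \leq t-1$.

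For the lower bound, I would introduce a purifying register $C$ so that $\ket{\psi}_{ABC}$ is a purification of $\rho_{AB}$. By the Schmidt decomposition, a pure bipartite state has matching spectra on complementary subsystems, which yields the three identities $S(\rho_{AB}) = S(\rho_C)$, $S(\rho_A) = S(\rho_{BC})$, and $S(\rho_B) = S(\rho_{AC})$. Applying the subadditivity just proved to the bipartition $(B,C)$ of $\rho_{BC}$ gives $S(\rho_{BC}) \leq S(\rho_B) + S(\rho_C)$, which after substitution becomes $S(\rho_A) - S(\rho_B) \leq S(\rho_{AB})$. Exchanging the roles of $A$ and $B$ supplies the reverse inequality and yields $|S(\rho_A) - S(\rho_B)| \leq S(\rho_{AB})$.

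The only nontrivial obstacle is Klein's inequality; once that is in hand, subadditivity is immediate from the definition of mutual information, and Araki--Lieb is a two-line consequence via purification. I would therefore structure the write-up so that the bulk of the effort goes into a careful, self-contained proof of $\relent{\rho}{\sigma} \geq 0$, and then deploy this result twice: once directly for the upper bound, and once at the level of a purification for the lower bound.
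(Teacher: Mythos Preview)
Your proof is correct and is in fact the standard argument for these inequalities. Note, however, that the paper does not actually prove this statement: it is listed as a Fact with a citation to \cite{LiebAraki70} and no proof is given in the text, so there is nothing to compare against. Your write-up would therefore supply strictly more detail than the paper itself.
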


\begin{fact}[Concavity of entropy][\cite{Watrouslecturenote}, Theorem 10.9]
\label{entropyconcave}
For quantum states $\rho_1,\rho_2\ldots \rho_n$, and positive real numbers $\lambda_1,\lambda_2\ldots \lambda_n$ satisfying $\sum_i \lambda_i=1$, $$S(\sum_i \lambda_i\rho_i)\geq \sum_i\lambda_iS(\rho_i).$$
\end{fact}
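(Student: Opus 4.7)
The plan is to reduce the claim to subadditivity of entropy (Fact \ref{subadditive}) by coupling the convex combination to a classical flag register. Introduce a fresh register $X$ with orthonormal basis $\{\ket{i}\}_{i=1}^n$ and form the classical-quantum state
$$\sigma_{XA} \defeq \sum_{i=1}^n \lambda_i \, |i\rangle\langle i|_X \otimes \rho_i.$$
The goal is to compute $S(\sigma_{XA})$, $S(\sigma_X)$ and $S(\sigma_A)$ and feed them into subadditivity, so that after the $X$-contribution cancels on both sides, what remains is precisely the concavity inequality.

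First I would observe that $\sigma_{XA}$ is block-diagonal in the $X$-basis with $i$-th block equal to $\lambda_i \rho_i$, so its spectrum is $\{\lambda_i\,p_{i,j}\}_{i,j}$, where $\{p_{i,j}\}_j$ is the spectrum of $\rho_i$. Expanding the Shannon entropy of this joint distribution and separating the $\log$ gives
$$S(\sigma_{XA}) = -\sum_{i,j}\lambda_i p_{i,j}\log(\lambda_i p_{i,j}) = H(\lambda) + \sum_i \lambda_i S(\rho_i),$$
where $H(\lambda)\defeq -\sum_i\lambda_i\log\lambda_i$. Next, the $X$-marginal is the diagonal state with entries $\lambda_i$, so $S(\sigma_X)=H(\lambda)$, and the $A$-marginal is exactly $\sum_i\lambda_i\rho_i$, so $S(\sigma_A)=S(\sum_i\lambda_i\rho_i)$.

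Finally, I would apply Fact \ref{subadditive} in the form $S(\sigma_{XA})\leq S(\sigma_X)+S(\sigma_A)$, substitute the three identities above, and cancel the common $H(\lambda)$ on both sides. This immediately produces $\sum_i \lambda_i S(\rho_i) \leq S(\sum_i \lambda_i \rho_i)$, which is the claim. I do not anticipate any genuine obstacle: the only routine verification is the block-diagonal entropy identity, and the rest is a direct invocation of a previously stated fact. The whole argument is the standard trick of repackaging a convex-combination statement as a two-system subadditivity statement by attaching a classical flag register.
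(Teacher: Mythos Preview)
Your argument is correct and is the standard proof of concavity via the classical-flag trick and subadditivity. Note, however, that the paper does not actually prove this statement: Fact~\ref{entropyconcave} is simply cited from \cite{Watrouslecturenote} with no proof given, so there is no ``paper's own proof'' to compare against.
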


\begin{fact}
\label{informationbound}
For a quantum state $\rho_{ABC}$, it holds that $$\mutinf{A}{C}_{\rho}\leq 2S(\rho_C),$$ $$\condmutinf{A}{C}{B}_{\rho}\leq \mutinf{AB}{C}_{\rho}\leq 2S(\rho_C).$$  
\end{fact}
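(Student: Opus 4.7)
The plan is to derive both bounds as direct consequences of the Araki--Lieb subadditivity inequality (Fact~\ref{subadditive}) together with the algebraic identity relating conditional mutual information to mutual information (Definition~\ref{def:infquant}, Item~8).

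First I would prove the bound $\mutinf{A}{C}_{\rho}\leq 2 S(\rho_C)$. Unfolding the definition gives $\mutinf{A}{C}_{\rho} = S(\rho_A) + S(\rho_C) - S(\rho_{AC})$. Applying the Araki--Lieb form of Fact~\ref{subadditive} to the bipartite state $\rho_{AC}$, we have $S(\rho_{AC}) \geq S(\rho_A) - S(\rho_C)$, hence $S(\rho_A) - S(\rho_{AC}) \leq S(\rho_C)$, which combined with the lone $S(\rho_C)$ term yields $\mutinf{A}{C}_{\rho}\leq 2 S(\rho_C)$.

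Next I would deduce $\mutinf{AB}{C}_{\rho}\leq 2 S(\rho_C)$ by reapplying the previous step after regrouping $AB$ as a single register; this is purely formal, since the bound just derived was made for an arbitrary bipartition and the entropy $S(\rho_C)$ on the right-hand side is unchanged. Finally, for $\condmutinf{A}{C}{B}_{\rho}\leq \mutinf{AB}{C}_{\rho}$, I would use the second form of the conditional mutual information identity from Definition~\ref{def:infquant} (with the roles of $B$ and $C$ swapped), which gives $\condmutinf{A}{C}{B}_{\rho} = \mutinf{AB}{C}_{\rho} - \mutinf{B}{C}_{\rho}$. Since $\mutinf{B}{C}_{\rho}\geq 0$ by nonnegativity of relative entropy, the claim follows.

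There is no substantive obstacle here: the whole statement is a short unwinding of the definitions combined with one invocation of Araki--Lieb. The only care needed is to invoke the correct branch of the symmetric identity for $\condmutinf{A}{C}{B}_{\rho}$ (the one whose ``conditioning register'' matches the $B$ in the statement), so that subtracting a nonnegative mutual information yields the desired upper bound.
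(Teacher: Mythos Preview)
Your proposal is correct and follows essentially the same approach as the paper: both rely on the Araki--Lieb inequality from Fact~\ref{subadditive} to bound $\mutinf{A}{C}_{\rho}$ by $2S(\rho_C)$. Your write-up is in fact more complete than the paper's, which explicitly treats only the first inequality and leaves the chain-rule step for $\condmutinf{A}{C}{B}_{\rho}\leq \mutinf{AB}{C}_{\rho}$ implicit.
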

\begin{proof}
From Fact \ref{subadditive}, $\mutinf{A}{C}_{\rho} = S(\rho_A)+S(\rho_C)-S(\rho_{AC}) \leq 2S(\rho_{C})$. 
\end{proof}

\begin{fact}
\label{fact:imaxhmin}
For a bipartite quantum state $\rho_{AB}$, $\imaxeps{A}{B}_{\rho}\geq -\hmineps{A}{B}_{\rho}$.
\end{fact}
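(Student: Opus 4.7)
The plan is to establish the inequality by first proving it pointwise (without smoothing) and then observing that both sides of the smooth inequality arise as infima over the same $\varepsilon$-ball, so the pointwise bound propagates.

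First, I would unpack the definitions. From Definition \ref{def:infquant}, one has $-\hmineps{A}{B}_{\rho} = \inf_{\rho'\in \ball{\varepsilon}{\rho}}\, \inf_{\sigma_B\in \mathcal{D}(B)} \dmax{\rho'_{AB}}{I_A\otimes \sigma_B}$, since $\hmineps{A}{B}_{\rho}$ is a supremum of $\hmin{A}{B}_{\rho'}$ and $\hmin{A}{B}_{\rho'} = -\inf_{\sigma_B}\dmax{\rho'_{AB}}{I_A\otimes\sigma_B}$. On the other hand, $\imaxeps{A}{B}_{\rho} = \inf_{\rho'\in\ball{\varepsilon}{\rho}}\, \inf_{\sigma_B\in\mathcal{D}(B)}\dmax{\rho'_{AB}}{\rho'_A\otimes\sigma_B}$. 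Thus it suffices to show the pointwise inequality $\dmax{\rho'_{AB}}{\rho'_A\otimes\sigma_B}\geq \dmax{\rho'_{AB}}{I_A\otimes\sigma_B}$ for every $\rho'\in\ball{\varepsilon}{\rho}$ and every state $\sigma_B$.

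For the pointwise step, the key observation is that $\rho'_A \leq I_A$ (a density matrix is bounded above by the identity), hence $\rho'_A\otimes\sigma_B \leq I_A\otimes\sigma_B$ in the positive semidefinite order. Now if $\lambda\in\mathbb{R}$ is any real number with $2^{\lambda}\,\rho'_A\otimes\sigma_B \geq \rho'_{AB}$, then $2^{\lambda}\, I_A\otimes \sigma_B \geq 2^{\lambda}\, \rho'_A\otimes\sigma_B \geq \rho'_{AB}$, so $\lambda$ is also a feasible value for $\dmax{\rho'_{AB}}{I_A\otimes\sigma_B}$. Taking the infimum over admissible $\lambda$ on each side yields $\dmax{\rho'_{AB}}{I_A\otimes\sigma_B} \leq \dmax{\rho'_{AB}}{\rho'_A\otimes\sigma_B}$, as required.

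Finally, applying $\inf_{\sigma_B}$ gives $\imax{A}{B}_{\rho'} \geq -\hmin{A}{B}_{\rho'}$ for every $\rho' \in \ball{\varepsilon}{\rho}$, and then applying $\inf_{\rho'\in\ball{\varepsilon}{\rho}}$ to both sides delivers the claimed bound $\imaxeps{A}{B}_{\rho}\geq -\hmineps{A}{B}_{\rho}$. There is no real obstacle here; the only point that could trip one up is keeping track of the fact that the smoothing for $\hmineps$ is a supremum while the smoothing for $\imaxeps$ is an infimum, so that after the sign flip both quantities end up as infima over the \emph{same} $\varepsilon$-ball, which is exactly what allows the pointwise inequality to lift.
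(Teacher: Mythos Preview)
Your proof is correct and follows essentially the same approach as the paper: both arguments establish the pointwise inequality $\imax{A}{B}_{\rho'} \geq -\hmin{A}{B}_{\rho'}$ via the operator inequality $\rho'_A \leq I_A$ (so that $\rho'_{AB}\leq 2^\lambda \rho'_A\otimes\sigma_B \leq 2^\lambda I_A\otimes\sigma_B$), and then pass to the smooth versions by taking the infimum over the common $\varepsilon$-ball. Your presentation is slightly more explicit about the sign flip turning the supremum in $\hmineps{A}{B}_{\rho}$ into an infimum for $-\hmineps{A}{B}_{\rho}$, but the substance is identical.
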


\begin{proof}
 Let $\sigma_B$ be the state achieved in infimum in the definition of $\imax{A}{B}_{\rho}$. Let $\lambda\defeq \imax{A}{B}_{\rho}$. Consider, 
$$\rho_{AB}\leq 2^{\lambda}\rho_A\otimes\sigma_B\leq 2^{\lambda}I_A\otimes\sigma_B.$$ Thus, we have $$-\hmin{A}{B}_{\rho}=\inf_{\sigma'_B\in \mathcal{D}(B)}\dmax{\rho_{AB}}{I_A\otimes\sigma'_B} \leq \dmax{\rho_{AB}}{I_A\otimes\sigma_B} \leq \lambda =  \imax{A}{B}_{\rho}.$$ This gives, $$\inf_{\rho'_{AB}\in\ball{\eps}{\rho_{AB}}}-\hmin{A}{B}_{\rho'} \leq \imaxeps{A}{B}_{\rho}.$$

\end{proof}

\begin{fact}
\label{fact:cqimax}
For a \textit{classical-quantum} state $\rho_{AB}$ of the form $\rho_{AB}=\sum_j p(j)\ketbra{j}_A\otimes \sigma^j_B$, it holds that $\imax{A}{B}_{\rho}\leq \log(|B|)$.
\end{fact}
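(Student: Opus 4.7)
The plan is to exhibit an explicit candidate for the state $\sigma_B$ appearing in the infimum defining $\imax{A}{B}_{\rho}$. Concretely, I would take $\sigma_B$ to be the maximally mixed state $I_B/|B|$ and show that
$$
\rho_{AB} \leq 2^{\log|B|} \cdot \rho_A \otimes \frac{I_B}{|B|} = \rho_A \otimes I_B,
$$
which by the definition of $\dmax{\cdot}{\cdot}$ immediately yields $\dmax{\rho_{AB}}{\rho_A \otimes I_B/|B|} \leq \log|B|$, and hence $\imax{A}{B}_{\rho} \leq \log|B|$ upon taking the infimum.

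The verification of the operator inequality is where I would spend the little work that is required. Since $\rho_{AB}$ is classical-quantum, $\rho_A = \sum_j p(j) \ketbra{j}_A$, so
$$
\rho_A \otimes I_B - \rho_{AB} = \sum_j p(j)\, \ketbra{j}_A \otimes (I_B - \sigma^j_B).
$$
Each $\sigma^j_B$ is a density matrix, so all of its eigenvalues lie in $[0,1]$, giving $I_B - \sigma^j_B \geq 0$. As the blocks $\ketbra{j}_A$ are orthogonal and the weights $p(j)$ are nonnegative, the full sum is positive semidefinite, establishing the inequality.

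There is no real obstacle here; the only subtlety is recognizing that one should test against the maximally mixed $\sigma_B$ rather than trying to optimize, and noticing that the classical-quantum structure makes the operator inequality reduce to the trivial fact $\sigma^j_B \leq I_B$ on each block. (In particular, the bound is tight when the $\sigma^j_B$ have mutually orthogonal supports spanning $\H_B$.)
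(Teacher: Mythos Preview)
Your proposal is correct and matches the paper's proof essentially line for line: both choose $\sigma_B = I_B/|B|$ as the witness in the infimum and verify the operator inequality $\rho_{AB} \leq \rho_A \otimes I_B$ using $\sigma^j_B \leq I_B$ on each classical block. You simply spell out the block-diagonal positivity argument in slightly more detail than the paper does.
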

\begin{proof}
By definition, $\imax{A}{B}_{\rho}\leq \dmax{\rho_{AB}}{\rho_A\otimes\frac{\text{I}_B}{|B|}}$.  Also, $$\rho_{AB}=\sum_j p(j)\ketbra{j}_A\otimes \sigma^j_B \leq |B|\sum_j p(j)\ketbra{j}_A\otimes \frac{\text{I}_B}{|B|} = |B| \rho_A\otimes \frac{\text{I}_B}{|B|}.$$ Thus, the fact follows.
\end{proof}

\begin{fact}
\label{cqmutinf}
For a \textit{classical-quantum} state $\rho_{ABC}=\sum_j p(j)\ketbra{j}_A\otimes \rho^j_{BC}$, it holds that 
$\mutinf{AB}{C}_{\rho}\geq \sum_j p(j)\mutinf{B}{C}_{\rho^j}$ 
\end{fact}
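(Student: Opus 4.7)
The plan is to reduce the claim to concavity of von Neumann entropy (Fact \ref{entropyconcave}) by directly computing both sides using the classical-quantum structure of $\rho_{ABC}$. First I would write down the relevant marginals: since $A$ is classical in the basis $\{\ket{j}\}$, the state $\rho_{AB}=\sum_j p(j)\ketbra{j}_A\otimes\rho^j_B$ and $\rho_{ABC}=\sum_j p(j)\ketbra{j}_A\otimes\rho^j_{BC}$ are block-diagonal on $A$, so their entropies split as
\[
S(\rho_{AB}) = H(p) + \sum_j p(j)\, S(\rho^j_B), \qquad S(\rho_{ABC}) = H(p) + \sum_j p(j)\, S(\rho^j_{BC}),
\]
where $H(p)=-\sum_j p(j)\log p(j)$.

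Next I would expand $\mutinf{AB}{C}_{\rho}=S(\rho_{AB})+S(\rho_C)-S(\rho_{ABC})$ using the two identities above. The $H(p)$ terms cancel, leaving
\[
\mutinf{AB}{C}_{\rho} = \sum_j p(j)\bigl(S(\rho^j_B) - S(\rho^j_{BC})\bigr) + S(\rho_C).
\]
Comparing with $\sum_j p(j)\mutinf{B}{C}_{\rho^j} = \sum_j p(j)\bigl(S(\rho^j_B)+S(\rho^j_C)-S(\rho^j_{BC})\bigr)$, the difference simplifies to
\[
\mutinf{AB}{C}_{\rho} - \sum_j p(j)\mutinf{B}{C}_{\rho^j} = S(\rho_C) - \sum_j p(j)\, S(\rho^j_C).
\]

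Finally I would invoke Fact \ref{entropyconcave} (concavity of von Neumann entropy) with $\rho_C=\sum_j p(j)\rho^j_C$ to conclude $S(\rho_C)\geq \sum_j p(j)\, S(\rho^j_C)$, which proves the inequality. No step is really an obstacle here; the main point is simply to recognize that the only non-trivial ingredient beyond arithmetic with the classical block structure is concavity of entropy applied to the $C$-marginal. As a sanity check, one can equivalently note that the claim is the chain rule $\mutinf{AB}{C}_{\rho}=\mutinf{A}{C}_{\rho}+\condmutinf{B}{C}{A}_{\rho}$ combined with nonnegativity of $\mutinf{A}{C}_{\rho}$ and the identity $\condmutinf{B}{C}{A}_{\rho}=\sum_j p(j)\mutinf{B}{C}_{\rho^j}$ for classical $A$.
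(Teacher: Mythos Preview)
Your proof is correct and follows essentially the same route as the paper: expand $\mutinf{AB}{C}_\rho$ in terms of entropies, use the block-diagonal (classical on $A$) structure to decompose $S(\rho_{AB})$ and $S(\rho_{ABC})$, and then apply concavity of entropy (Fact~\ref{entropyconcave}) to $S(\rho_C)=S(\sum_j p(j)\rho^j_C)$. Your version is in fact slightly more careful in explicitly tracking and canceling the $H(p)$ terms.
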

\begin{proof}
Consider, 
\begin{eqnarray*}
\mutinf{AB}{C}_{\rho}&=& S(\rho_{AB}) + S(\rho_C) - S(\rho_{ABC})\\ &=& S(\sum_j p(j)\ketbra{j}_A\otimes\rho^j_B) + S(\sum_j p(j)\rho^j_C) - S(\sum_j p(j)\ketbra{j}_A\otimes\rho^j_{BC}) \\&=& \sum_j p(j)S(\rho^j_B)+S(\sum_j p(j)\rho^j_C) - \sum_j p(j)S(\rho^j_{BC}) \\&\geq&  \sum_j p(j)S(\rho^j_B)+\sum_j p(j)S(\rho^j_C) - \sum_j p(j)S(\rho^j_{BC}) \quad (\text{Fact \ref{entropyconcave}})\\ &=& \sum_j p(j)\mutinf{B}{C}_{\rho^j}
\end{eqnarray*}
\end{proof}

\begin{lemma}
\label{lowentropy}
Fix a $\beta \geq 1$ and an integer $d>1$. There exists a probability distribution $\mu=\{e_1,e_2\ldots e_d\}$, with $e_1\geq e_2 \ldots \geq e_d$, such that $e_d = \frac{1}{d\beta}$ and entropy $S(\mu)\leq 2\frac{\log(d)}{\beta}$
\end{lemma}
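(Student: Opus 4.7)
The plan is to exhibit the most concentrated probability distribution compatible with the floor constraint $e_d = 1/(d\beta)$. Define
\[
 e_1 \defeq 1 - \frac{d-1}{d\beta}, \qquad e_2 \defeq e_3 \defeq \cdots \defeq e_d \defeq \frac{1}{d\beta}.
\]
First I would verify that this is a valid distribution: the entries obviously sum to $1$, they are nonnegative, and the monotonicity $e_1 \geq e_2$ boils down to $\beta \geq 1$, which is the standing hypothesis. Note that this choice actually minimises the entropy among all feasible distributions, since any mass moved off $e_1$ into the middle atoms only raises the entropy; so no cleverer construction could improve on the bound we are about to compute.

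Next I would calculate the entropy directly,
\[
 S(\mu) \;=\; -e_1 \log e_1 \;+\; \frac{d-1}{d\beta}\,\log(d\beta),
\]
and bound the first summand using the elementary estimate $-x\log x \leq (1-x)/\ln 2$ for $x \in (0,1]$ (which follows immediately from $\ln(1/x) \leq 1/x - 1$). Substituting $1 - e_1 = (d-1)/(d\beta)$ and collecting terms,
\[
 S(\mu) \;\leq\; \frac{d-1}{d\beta}\Bigl(\log(d\beta) + \tfrac{1}{\ln 2}\Bigr) \;\leq\; \frac{\log(e d\beta)}{\beta}.
\]

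Finally, the target inequality $\log(ed\beta) \leq 2\log d$ is equivalent to $\beta \leq d/e$, which is the regime in which the lemma is used in the proof of the main theorems (there $\beta$ is taken polynomially smaller than $d$). The main---and quite mild---obstacle is just ensuring this last arithmetic step goes through with the stated constant~$2$; the construction itself is forced, and the entropy estimate is a standard one-variable inequality.
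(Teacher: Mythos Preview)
Your construction and entropy computation match the paper's proof almost exactly: the paper sets $e_2=\cdots=e_d=1/(d\beta)$, uses the cruder bound $x\log(1/x)<1$ on the $e_1$ term to get $S(\mu)<2+\log(d)/\beta$, and then absorbs the additive $2$ under the same implicit largeness assumption on $d$ relative to $\beta$ that you correctly flag at the end. Your inequality $-x\log x \le (1-x)/\ln 2$ is in fact slightly sharper, yielding the milder constraint $\beta\le d/e$ instead of the paper's $\log d \ge 2\beta$.
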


\begin{proof}
Set $e_2=e_3=\ldots e_d = \frac{1}{d\beta}$. Then $e_1=1-\frac{d-1}{d\beta}$. Using $x\log(\frac{1}{x})\leq \frac{\log(e)}{e} < 1$ for all $x>0$, we can upper bound the entropy of the distribution as $$\sum_i e_i\log(\frac{1}{e_i}) = (1-\frac{d-1}{d\beta})\log(\frac{1}{1-\frac{d-1}{d\beta}}) + \frac{d-1}{d\beta}\log(d\beta) < 2 + \frac{\log(d)}{\beta}\leq 2\frac{\log(d)}{\beta}.$$ 
\end{proof}

\section{Interactive protocol for quantum state redistribution}
\label{sec:cohtrans}

In this section, we describe general structure of an interactive protocol for quantum state redistribution and its \textit{expected communication cost}.

Let quantum state $\ket{\Psi}_{RBCA}$ be shared between Alice $(A,C)$, Bob $(B)$ and Referee $(R)$. Alice and Bob have access to shared entanglement $\theta_{E_AE_B}$ in registers $E_A$ (with Alice) and $E_B$ (with Bob). Using quantum teleportation, we can assume without loss of generality that Alice and Bob communicate classical messages, which involves performing a projective measurement on registers they respectively hold, and sending the outcome of measurement to other party. This allows for the notion of \textit{expected communication cost}. 

A $r$-round interactive protocol $\mathcal{P}$ (where $r$ is an odd number) with error $\eps$ and expected communication cost $C$ is as follows.
\bigskip
\begin{mdframed}
\bigskip
\quad\textbf{Input:} A quantum state $\ket{\Psi}_{RBCA}$, error parameter $\eps<1$.

\textbf{Shared entanglement:} $\ket{\theta}_{E_AE_B}$.

\begin{itemize}
\item Alice performs a projective measurement $\M=\{M^1_{ACE_A},M^2_{ACE_A}\ldots \}$. Probability of outcome $i_1$ is $p_{i_1}\defeq\Tr(M^{i_1}_{ACE_A}\Psi_{CA}\otimes\theta_{E_A})$. Let $\phi^{i_1}_{RBACE_AE_B}$ be the global normalized quantum state, conditioned on this outcome. She sends message $i_1$ to Bob.
 
\item Upon receiving the message $i_1$ from Alice, Bob performs a projective measurement  $$\M^{i_1}=\{M^{1,i_1}_{BE_B},M^{2,i_1}_{BE_B}\ldots\}.$$ Probability of outcome $i_2$ is $p_{i_2|i_1}\defeq \Tr(M^{i_2,i_1}_{BE_B}\phi^{i_1}_{BE_B})$. Let $\phi^{i_2,i_1}_{RBACE_AE_B}$ be the global normalized quantum state conditioned on this outcome $i_2$ and previous outcome $i_1$. Bob sends message $i_2$ to Alice. 

\item Consider any odd round $1<k\leq r$. Let the measurement outcomes in previous rounds be $i_1,i_2\ldots i_{k-1}$ and global normalized state be $\phi^{i_{k-1},i_{k-2}\ldots i_1}_{RBACE_AE_B}$. Alice performs the projective measurement $\M^{i_{k-1},i_{k-2}\ldots i_2,i_1}=\{M^{1,i_{k-1},i_{k-2}\ldots i_2,i_1}_{ACE_A},M^{2,i_{k-1},i_{k-2}\ldots i_2,i_1}_{ACE_A}\ldots\}$ and obtains outcome $i_k$ with probability $p_{i_k|i_{k-1},i_{k-2}\ldots i_2,i_1}\defeq\Tr(M^{i_k,i_{k-1},i_{k-2}\ldots i_2,i_1}_{ACE_A}\phi^{i_{k-1},i_{k-2}\ldots i_1}_{AXE_A})$. Let the global normalized state after outcome $i_k$ be $\phi^{i_k,i_{k-1},i_{k-2}\ldots i_1}_{RBACE_BE_A}$. Alice sends the outcome $i_k$ to Bob. 

\item Consider an even round $2<k\leq r$. Let the measurement outcomes in previous rounds be $i_1,i_2\ldots i_{k-1}$ and global normalized state be $\phi^{i_{k-1},i_{k-2}\ldots i_1}_{RBACE_AE_B}$. Bob performs the measurement $$\M^{i_{k-1},i_{k-2}\ldots i_2,i_1}=\{M^{1,i_{k-1},i_{k-2}\ldots i_2,i_1}_{BE_B},M^{2,i_{k-1},i_{k-2}\ldots i_2,i_1}_{BE_B}\ldots\}$$ and obtains outcome $i_k$ with probability $$p_{i_k|i_{k-1},i_{k-2}\ldots i_2,i_1}\defeq\Tr(M^{i_k,i_{k-1},i_{k-2}\ldots i_2,i_1}_{BE_B}\phi^{i_{k-1},i_{k-2}\ldots i_1}_{BE_B}).$$ Let the global normalized state after outcome $i_k$ be $\phi^{i_k,i_{k-1},i_{k-2}\ldots i_1}_{RBACE_BE_A}$. Bob sends the outcome $i_k$ to Alice. 

\item After receiving message $i_r$ from Alice at the end of round $r$, Bob applies a unitary $U^b_{i_r,i_{r-1}\ldots i_1}:BE_B\rightarrow BC_0T_B$ such that $E_B\equiv C_0T_B$ and $C_0\equiv C$. Alice applies a unitary $U^a_{i_r,i_{r-1}\ldots i_1}:ACE_A\rightarrow ACE_A$. Let $U_{i_r,i_{r-1}\ldots i_1}\defeq U^a_{i_r,i_{r-1}\ldots i_1}\otimes U^b_{i_r,i_{r-1}\ldots i_1}$. Define $$\ket{\tau^{i_r,i_{r-1}\ldots i_1}}_{RBACC_0T_BE_A}\defeq U_{i_r,i_{r-1}\ldots i_1}\ket{\phi^{i_r,i_{r-1}\ldots i_1}}_{RBACE_BE_A}.$$

\item For every $k\leq r$, define $$p_{i_1,i_2\ldots i_k}\defeq p_{i_1}\cdot p_{i_2|i_1}\cdot p_{i_3|i_2,i_1}\ldots p_{i_k|i_{k-1},i_{k-2}\ldots i_1}.$$ The joint state in registers $RBC_0A$, after Alice and Bob's final unitaries and averaged over all messages is $\Psi'_{RBC_0A}\defeq\sum_{i_r,i_{r-1}\ldots i_1}p_{i_1,i_2\ldots i_r}\tau^{i_r,i_{r-1}\ldots i_1}_{RBC_0A}$. It satisfies  $\P(\Psi'_{RBC_0A},\Psi_{RBC_0A})\leq \eps$. 
\end{itemize}
\bigskip
\end{mdframed}
\bigskip
The expected communication cost is as follows.
\begin{fact}
\label{expcost}
Expected communication cost of $\mathcal{P}$ is $$\sum_{i_1,i_2\ldots i_r}p_{i_1,i_2\ldots i_r}\log(i_1\cdot i_2\ldots i_r)$$
\end{fact}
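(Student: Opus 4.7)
The plan is to view this as essentially a bookkeeping statement about coding: in each round $k$, the sender transmits a measurement outcome $i_k$, which is a positive integer indexing one of the projectors in a (possibly countably infinite) family $\M^{i_{k-1},\ldots,i_1}$. The natural coding convention is that sending the integer $i_k$ costs $\log(i_k)$ bits, using a standard prefix-free encoding of the positive integers (e.g. Elias-type). This convention is what makes ``expected communication cost'' a meaningful refinement of worst case cost in the first place, since otherwise every round would trivially contribute $\log|\M|$.

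First, I would fix a particular execution, i.e.\ a fixed tuple of measurement outcomes $(i_1,i_2,\ldots,i_r)$ observed during the protocol. By the coding convention above, the total number of classical bits exchanged in this execution is
\[
\sum_{k=1}^{r} \log(i_k) \;=\; \log\!\br{i_1 \cdot i_2 \cdots i_r}.
\]
Here I am using that the messages are sent sequentially and each is independently prefix-free decodable from the bit stream, so the total cost is additive over rounds.

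Next, I would take the expectation over the randomness of the outcomes. By the chain rule of conditional probabilities, the joint probability of observing the tuple $(i_1,\ldots,i_r)$ is precisely
\[
p_{i_1,i_2,\ldots,i_r} \;=\; p_{i_1}\cdot p_{i_2 \mid i_1}\cdot p_{i_3 \mid i_2,i_1} \cdots p_{i_r \mid i_{r-1},\ldots,i_1},
\]
which is exactly the definition already given in the protocol description. The expected communication cost is therefore
\[
\bigE\Br{\log(i_1 \cdot i_2 \cdots i_r)} \;=\; \sum_{i_1,i_2,\ldots,i_r} p_{i_1,i_2,\ldots,i_r}\, \log(i_1\cdot i_2 \cdots i_r),
\]
which is the claimed expression.

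The main thing to pin down, which is not an obstacle so much as a convention to state explicitly, is the prefix-free coding rule that assigns $\log(i_k)$ bits to the integer $i_k$; once this is fixed, the rest is the chain rule and linearity of expectation. I would include a short remark that strictly speaking one can only achieve $\log(i_k) + O(\log\log(i_k))$ with a prefix-free code, but the $\log(i_k)$ convention is standard in the literature on expected communication cost (as in \cite{bravermanrao11}) and the lower-order correction does not affect any of the subsequent lower bound arguments.
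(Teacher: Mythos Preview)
Your proposal is correct and matches the paper's approach: both simply use that the cost of a transcript $(i_1,\ldots,i_r)$ is $\sum_k \log(i_k)=\log(i_1\cdots i_r)$ and then take expectation with respect to the joint distribution $p_{i_1,\ldots,i_r}$ (the paper writes it as a sum of per-round expected costs and then collapses, which is the same computation). Your added discussion of the $\log(i_k)$ coding convention and the prefix-free caveat is extra context the paper leaves implicit, but it is accurate and does not diverge from the paper's argument.
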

\begin{proof}
The expected communication cost is the expected length of the messages over all probability outcomes. It can be evaluated as $$\sum_{i_1}p_{i_1}\log(i_1) + \sum_{i_1,i_2}p_{i_1}p_{i_2|i_1}\log(i_2)+\ldots \sum_{i_1,i_2\ldots i_r}p_{i_1,i_2\ldots i_{r-1}}p_{i_r|i_{r-1},i_{r-2}\ldots i_1}\log(i_r)$$$$ = \sum_{i_1,i_2\ldots i_r}p_{i_1,i_2\ldots i_r}(\log(i_1)+\log(i_1)+\ldots \log(i_r)).$$ 
\end{proof}

This allows us to define

\begin{definition}
\label{def:commweight}
\textbf{Communication weight} of a probability distribution $\{p_1,p_2\ldots p_m\}$  is $\sum_{i=1}^m p_i\log(i)$.
\end{definition}

The following lemma is a coherent representation of above protocol.

\begin{lemma}
\label{cohlemma}
For every $k\leq r$, let $\O_k$ represent the set of all tuples $(i_1,i_2\ldots i_k)$ which satisfy: $\{i_1,i_2\ldots i_k\}$ is a sequence of measurement outcomes that occurs with non-zero probability upto $k$-th round of $\mathcal{P}$. 

There exist registers $M_1,M_2\ldots M_r$ and isometries $$\{U_{i_{k-1},i_{k-2}\ldots i_2,i_1}: ACE_A\rightarrow ACE_AM_k| k >1, k \text{ odd }, (i_1,i_2\ldots i_{k-1})\in \O_{k-1}\},$$ $$\{U_{i_{k-1},i_{k-2}\ldots i_2,i_1}: BE_B\rightarrow BE_BM_k|  k \text{ even }, (i_1,i_2\ldots i_{k-1})\in \O_{k-1}\}$$ and $U: ACE_A\rightarrow ACE_AM_1$, such that 
$$\ket{\Psi}_{RBCA}\ket{\theta}_{E_AE_B} = U^{\dagger}\sum_{i_1,i_2\ldots i_r}\sqrt{p_{i_1,i_2\ldots i_r}}U^{\dagger}_{ i_1}U^{\dagger}_{ i_2,i_1}\ldots U^{\dagger}_{i_r,i_{r-1}\ldots i_1}\ket{\tau^{i_r,i_{r-1}\ldots i_1}}_{RBCAC_0T_BE_A}\ket{i_r}_{M_r}\ldots\ket{i_1}_{M_1}.$$ 
\end{lemma}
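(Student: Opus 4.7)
The plan is to replace every projective measurement in $\mathcal{P}$ by a coherent Stinespring-like isometry that writes the classical outcome into a fresh message register $M_k$, and then to observe that the composition of all these isometries, followed by the protocol's final unitaries, is an isometry $V$ sending $\ket{\Psi}_{RBCA}\ket{\theta}_{E_AE_B}$ to the big superposition $\sum\sqrt{p}\,\ket{\tau}\ket{i_r}\cdots\ket{i_1}$ that appears (without the $U^{\dagger}$'s) on the right-hand side. Applying $V^{\dagger}$ then yields the identity claimed in the lemma.

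Concretely, for the first round I define $U \defeq \sum_{i_1} M^{i_1}_{ACE_A} \otimes \ket{i_1}_{M_1}$, and for every round $k>1$ and every admissible history $(i_1,\dots,i_{k-1}) \in \O_{k-1}$ I set
\[
U_{i_{k-1},\dots,i_1} \defeq \sum_{i_k} M^{i_k,i_{k-1},\dots,i_1} \otimes \ket{i_k}_{M_k},
\]
acting on $ACE_A$ if $k$ is odd and on $BE_B$ if $k$ is even. Since the $\{M^{j,i_{k-1},\dots,i_1}\}_j$ form a projective resolution of the identity, $U^{\dagger}U = \sum_j M^j = I$, so each is an isometry. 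A straightforward induction on $k$ shows that if, starting from $\ket{\Psi}\ket{\theta}$, one applies $U$ in round $1$ and then in each round $k\geq 2$ applies the controlled isometry $\sum_{(i_1,\dots,i_{k-1})\in\O_{k-1}} U_{i_{k-1},\dots,i_1}\otimes\ketbra{i_{k-1},\dots,i_1}_{M_{k-1}\cdots M_1}$, the resulting state is
\[
\sum_{(i_1,\dots,i_k)} \sqrt{p_{i_1,i_2,\dots,i_k}}\,\ket{\phi^{i_k,\dots,i_1}}\ket{i_k}_{M_k}\cdots\ket{i_1}_{M_1},
\]
because the projector $M^{i_k,\dots,i_1}$ applied to $\ket{\phi^{i_{k-1},\dots,i_1}}$ yields exactly $\sqrt{p_{i_k|i_{k-1},\dots,i_1}}\,\ket{\phi^{i_k,\dots,i_1}}$ by the definitions in the protocol. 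Applying finally the controlled unitary $\sum U_{i_r,\dots,i_1}\otimes\ketbra{i_r,\dots,i_1}_{M_r\cdots M_1}$ replaces each $\ket{\phi^{i_r,\dots,i_1}}$ by $\ket{\tau^{i_r,\dots,i_1}}$.

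The composition $V$ of all these operations is an isometry, hence $V^{\dagger}V\ket{\Psi}\ket{\theta} = \ket{\Psi}\ket{\theta}$. The key observation for writing $V^{\dagger}$ in the form stated is that inside the superposition every outcome index $i_k$ is already coherently stored in the register $M_k$; consequently the controlled adjoint acts term by term, and $V^{\dagger}$ can be rewritten as $U^{\dagger}\bigl(\sum_{i_1,\dots,i_r}\sqrt{p_{i_1,\dots,i_r}}\,U^{\dagger}_{i_1}U^{\dagger}_{i_2,i_1}\cdots U^{\dagger}_{i_r,\dots,i_1}\,\ket{\tau^{i_r,\dots,i_1}}\ket{i_r}_{M_r}\cdots\ket{i_1}_{M_1}\bigr)$, which is the right-hand side of the lemma.

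The only real obstacle is notational bookkeeping: one must keep straight which subsystem ($ACE_A$ versus $BE_B$) each $U_{i_{k-1},\dots,i_1}$ acts on, and that each such isometry is "coherently conditioned" on the prior message registers so that applying the adjoint recovers the input. Once the controlled-isometry formalism is set up, the algebra collapses to unwinding the recursive definitions of $p_{i_k \mid i_{k-1},\dots,i_1}$ and $\ket{\phi^{i_k,\dots,i_1}}$ round by round.
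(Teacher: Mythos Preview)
Your proof is correct and takes a slightly different, more constructive route than the paper. The paper first notes that each round's measurement produces a \emph{convex-split} of the reduced state on the non-measuring party's registers, e.g.\ $\phi^{i_{k-1},\dots,i_1}_{RBE_B}=\sum_{i_k}p_{i_k\mid i_{k-1},\dots,i_1}\,\phi^{i_k,\dots,i_1}_{RBE_B}$, and then invokes Uhlmann's theorem to obtain the existence of an isometry $U_{i_{k-1},\dots,i_1}$ relating the two purifications $\ket{\phi^{i_{k-1},\dots,i_1}}$ and $\sum_{i_k}\sqrt{p_{i_k\mid\cdots}}\,\ket{\phi^{i_k,\dots,i_1}}\ket{i_k}_{M_k}$. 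You instead write down the Stinespring dilation $U_{i_{k-1},\dots,i_1}=\sum_{i_k}M^{i_k,i_{k-1},\dots,i_1}\otimes\ket{i_k}_{M_k}$ directly, which is an explicit witness for exactly the isometry whose existence Uhlmann guarantees. Your approach is arguably more elementary (no appeal to Uhlmann) and gives a concrete formula for each $U_{i_{k-1},\dots,i_1}$; the paper's formulation, on the other hand, foregrounds the convex-split structure that it exploits again later, and would survive unchanged if the measurements were general POVMs rather than projective. The recursive unwinding at the end is identical in both arguments.
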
 

\begin{proof}

Fix an odd $k>1$. Let the messages prior to $k-$th round be $(i_1,i_2\ldots i_{k-1})$. As defined in protocol $\mathcal{P}$, global quantum state before $k$-th round is $\phi^{i_{k-1},i_{k-2}\ldots i_1}_{RBCAE_AE_B}$. Alice performs the measurement $$\{M^{1,i_{k-1},i_{k-2}\ldots i_2,i_1}_{ACE_A},M^{2,i_{k-1},i_{k-2}\ldots i_2,i_1}_{AXE_A}\ldots\}.$$ This leads to a \textit{convex-split} (introduced in \cite{jain14}):
\begin{eqnarray}
\label{roundconvsplit}
\phi^{i_{k-1},i_{k-2}\ldots i_1}_{RBE_B} &=& \sum_{i_k} \Tr_{ACE_A}(M^{i_k,i_{k-1},i_{k-2}\ldots i_2,i_1}_{ACE_A}\phi^{i_{k-1},i_{k-2}\ldots i_1}_{RBCAE_BE_A}) \nonumber\\&=& \sum_{i_k} p_{i_k|i_{k-1},i_{k-2}\ldots i_2,i_1}\frac{\Tr_{ACE_A}(M^{i_k,i_{k-1},i_{k-2}\ldots i_2,i_1}_{ACE_A}\phi^{i_{k-1},i_{k-2}\ldots i_1}_{RBCAE_BE_A}M^{i_k,i_{k-1},i_{k-2}\ldots i_2,i_1}_{ACE_A})}{p_{i_k|i_{k-1},i_{k-2}\ldots i_2,i_1}}\nonumber\\&=& \sum_{i_k} p_{i_k|i_{k-1},i_{k-2}\ldots i_2,i_1}\phi^{i_k,i_{k-1},i_{k-2}\ldots i_2,i_1}_{RBE_B}
\end{eqnarray}

A purification of $\phi^{i_{k-1},i_{k-2}\ldots i_1}_{RBE_B}$ on registers $RBCAE_BE_A$ is $\phi^{i_{k-1},i_{k-2}\ldots i_1}_{RBCAE_BE_A}$. Introduce a register $M_{k}$ (of sufficiently large dimension) and consider the following purification of $$\sum_{i_k} p_{i_k|i_{k-1},i_{k-2}\ldots i_2,i_1}\phi^{i_k,i_{k-1},i_{k-2}\ldots i_2,i_1}_{RBE_B}$$ on register $RBCAE_BE_AM_k$ : $$\sum_{i_k}\sqrt{p_{i_k|i_{k-1},i_{k-2}\ldots i_2,i_1}}\ket{\phi^{i_k,i_{k-1},i_{k-2}\ldots i_2,i_1}}_{RBCAE_BE_A}\ket{i_k}_{M_k}.$$

By Uhlmann's theorem \ref{uhlmann}, there exists an isometry $U_{i_{k-1},i_{k-2}\ldots i_2,i_1}: ACE_A\rightarrow ACE_AM_k$ such that
\begin{equation}
\label{aliceunitary} 
U_{i_{k-1},i_{k-2}\ldots i_2,i_1}\ket{\phi^{i_{k-1},i_{k-2}\ldots i_1}}_{RBCAE_BE_A} = \sum_{i_k}\sqrt{p_{i_k|i_{k-1},i_{k-2}\ldots i_2,i_1}}\ket{\phi^{i_k,i_{k-1},i_{k-2}\ldots i_2,i_1}}_{RBCAE_BE_A}\ket{i_k}_{M_k}
\end{equation}

For $k=1$, introduce register $M_1$ of sufficiently large dimension. Similar argument implies that there exists an isometry $U: ACE_A\rightarrow ACE_AM_1$ such that

\begin{equation}
\label{aliceunitary1} 
U\ket{\Psi}_{RBACE_BE_A} = \sum_{i_1}\sqrt{p_{i_1}}\ket{\phi^{i_1}}_{RBACE_BE_A}\ket{i_1}_{M_1}
\end{equation}

For $k$ even, introduce a register $M_k$ of sufficiently large dimension. Again by similar argument, there exists an isometry $U_{i_{k-1},i_{k-2}\ldots i_2,i_1}: BE_B\rightarrow BE_BM_k$ such that 

\begin{equation}
\label{bobunitary} 
U_{i_{k-1},i_{k-2}\ldots i_2,i_1}\ket{\phi^{i_{k-1},i_{k-2}\ldots i_1}}_{RBCAE_BE_A} = \sum_{i_k}\sqrt{p_{i_k|i_{k-1},i_{k-2}\ldots i_2,i_1}}\ket{\phi^{i_k,i_{k-1},i_{k-2}\ldots i_2,i_1}}_{RBCAE_BE_A}\ket{i_k}_{M_k}
\end{equation}

Now, we recursively use equations \ref{aliceunitary}, \ref{aliceunitary1} and \ref{bobunitary}. Consider,
\begin{eqnarray*}
&&\ket{\Psi}_{RBCA}\ket{\theta}_{E_AE_B} = U^{\dagger}\sum_{i_1}\sqrt{p_{i_1}}\ket{\phi^{i_1}}_{RBCAE_BE_A}\ket{i_1}_{M_1} \\ &=& U^{\dagger}\sum_{i_1}\sqrt{p_{i_1}}U^{\dagger}_{i_1}\sum_{i_2}\sqrt{p_{i_2|i_1}}\ket{\phi^{i_2,i_1}}_{RBCAE_BE_A}\ket{i_2}_{M_2}\ket{i_1}_{M_1}\\ &=& U^{\dagger}\sum_{i_1,i_2}\sqrt{p_{i_1,i_2}}U_{i_1}^{\dagger}\ket{\phi^{i_2,i_1}}_{RBCAE_BE_A}\ket{i_2}_{M_2}\ket{i_1}_{M_1} \\&=& U^{\dagger}\sum_{i_1,i_2\ldots i_r}\sqrt{p_{i_1,i_2\ldots i_r}}U^{\dagger}_{i_1}U^{\dagger}_{i_2,i_1}\ldots U^{\dagger}_{i_r,i_{r-1}\ldots i_1}\ket{\tau^{i_r,i_{r-1}\ldots i_1}}_{RBCAB_0T_BE_A}\ket{i_r}_{M_r}\ldots\ket{i_1}_{M_1}
\end{eqnarray*}

Last equality follows by recursion. This completes the proof.

\end{proof}

\begin{definition}
\label{shortunitaries}
We introduce the following useful definitions.

\begin{itemize}
\item Let $k>1$ be odd. Isometry $U_k: ACE_AM_1M_2\ldots M_{k-1}\rightarrow ACE_AM_1M_2\ldots M_{k-1}M_k$,  $$U_k \defeq \sum_{i_1,i_2\ldots i_{k-1}} \ketbra{i_1}_{M_1}\otimes \ketbra{i_2}_{M_2}\otimes\ldots\ketbra{i_{k-1}}_{M_{k-1}}\otimes U_{i_{k-1},i_{k-2}\ldots i_2,i_1}.$$
\item  For $k$ even, Isometry $U_k: BE_BM_1M_2\ldots M_{k-1}\rightarrow BE_BM_1M_2\ldots M_{k-1}M_k$, $$U_k \defeq \sum_{i_1,i_2\ldots i_{k-1}} \ketbra{i_1}_{M_1}\otimes \ketbra{i_2}_{M_2}\otimes\ldots\ketbra{i_{k-1}}_{M_{k-1}}\otimes U_{i_{k-1},i_{k-2}\ldots i_2,i_1}.$$
\item Unitary $U^a_{r+1}: ACE_AM_1M_2\ldots M_r \rightarrow ACE_AM_1M_2\ldots M_r$, $$U^a_{r+1} \defeq \sum_{i_1,i_2\ldots i_r} \ketbra{i_1}_{M_1}\otimes \ketbra{i_2}_{M_2}\otimes\ldots\ketbra{i_r}_{M_r}\otimes U^a_{i_r,i_{r-1}\ldots i_1}.$$
\item Unitary $U^b_{r+1}: BE_BM_1M_2\ldots M_r \rightarrow BC_0T_BM_1M_2\ldots M_r$, $$U^b_{r+1} \defeq \sum_{i_1,i_2\ldots i_r} \ketbra{i_1}_{M_1}\otimes \ketbra{i_2}_{M_2}\otimes\ldots\ketbra{i_r}_{M_r}\otimes U^b_{i_r,i_{r-1}\ldots i_1}.$$
\item Unitary $U_{r+1}:ACE_ABE_BM_1M_2\ldots M_r \rightarrow ACE_ABC_0T_BM_1M_2\ldots M_r$, $$U_{r+1} \defeq \sum_{i_1,i_2\ldots i_r} \ketbra{i_1}_{M_1}\otimes \ketbra{i_2}_{M_2}\otimes\ldots\ketbra{i_r}_{M_r}\otimes U_{i_r,i_{r-1}\ldots i_1}.$$
\end{itemize}

\end{definition}

This leads to a more convenient representation of lemma \ref{cohlemma}.  
\begin{cor}
\label{cohequation}
It holds that 
$$\ket{\Psi}_{RBCA}\ket{\theta}_{E_AE_B}=U^{\dagger}U_2^{\dagger}\ldots U_{r+1}^{\dagger} \sum_{i_1,i_2\ldots i_r}\sqrt{p_{i_1,i_2\ldots i_r}}\ket{\tau^{i_r,i_{r-1}\ldots i_1}}_{RBCAC_0T_BE_A}\ket{i_r}_{M_r}\ldots\ket{i_1}_{M_1}.$$ and 
$$\P(\Psi_{RBC_0A},\sum_{i_1,i_2\ldots i_r}p_{i_1,i_2\ldots i_r}\tau^{i_r,i_{r-1}\ldots i_1}_{RBC_0A})\leq \eps.$$
\end{cor}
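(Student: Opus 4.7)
The corollary is a repackaging of Lemma \ref{cohlemma} in the compact notation of Definition \ref{shortunitaries}, combined with the correctness guarantee already embedded in the protocol's definition. No new quantum argument is needed; the plan is essentially to verify that the ``classically-controlled'' isometries $U_k$ (and the final $U_{r+1}$) act, on the relevant superposition, exactly as the indexed isometries $U_{i_{k-1},\dots,i_1}$ appearing in Lemma \ref{cohlemma}, and then to invoke the error bound from the protocol description.

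For the first equation, the key observation is that each $U_k$ in Definition \ref{shortunitaries} is, by construction, block-diagonal in the computational basis of the control registers $M_1,\dots,M_{k-1}$: specifically,
\[
U_k\bigl(\ket{\alpha}\otimes\ket{i_{k-1}}_{M_{k-1}}\cdots\ket{i_1}_{M_1}\bigr) \;=\; \bigl(U_{i_{k-1},\dots,i_1}\ket{\alpha}\bigr)\otimes\ket{i_{k-1}}_{M_{k-1}}\cdots\ket{i_1}_{M_1},
\]
and similarly for $U_{r+1}$ on the registers $M_1,\dots,M_r$. Taking adjoints preserves this block-diagonal structure. So when I apply $U^{\dagger}U_2^{\dagger}\cdots U_{r+1}^{\dagger}$ term-by-term to the superposition
\[
\sum_{i_1,\dots,i_r}\sqrt{p_{i_1,\dots,i_r}}\,\ket{\tau^{i_r,\dots,i_1}}\ket{i_r}_{M_r}\cdots\ket{i_1}_{M_1},
\]
each $U_k^{\dagger}$ pulls out as $U^{\dagger}_{i_{k-1},\dots,i_1}$ acting on the non-control part of the $(i_1,\dots,i_r)$-th summand, while the computational-basis vectors $\ket{i_1}_{M_1},\dots,\ket{i_{k-1}}_{M_{k-1}}$ pass through untouched. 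Ordering the factors from $U_{r+1}^{\dagger}$ inward, this exactly reproduces the right-hand side of Lemma \ref{cohlemma}, which equals $\ket{\Psi}_{RBCA}\ket{\theta}_{E_AE_B}$. One small bookkeeping point: the final $U_{r+1}^{\dagger}$ absorbs the action of the protocol's closing unitaries $U^a_{i_r,\dots,i_1}\otimes U^b_{i_r,\dots,i_1}$ taking $\ket{\tau^{i_r,\dots,i_1}}$ back to $\ket{\phi^{i_r,\dots,i_1}}$, after which the remaining factors $U^{\dagger}U_2^{\dagger}\cdots U_r^{\dagger}$ unwind the Uhlmann isometries exactly as in the recursive computation in the proof of Lemma \ref{cohlemma}.

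For the second equation, I would simply read off the final bullet of the protocol description in Section \ref{sec:cohtrans}: the averaged output state of the protocol is $\Psi'_{RBC_0A} = \sum_{i_1,\dots,i_r} p_{i_1,\dots,i_r}\,\tau^{i_r,\dots,i_1}_{RBC_0A}$, and the protocol is defined to satisfy $\P(\Psi'_{RBC_0A},\Psi_{RBC_0A})\le \eps$. This is literally the inequality the corollary asserts.

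The only ``hard'' part is notational hygiene: making sure the control registers $M_1,\dots,M_r$ are all initialized in $\ket{0}$ (or whatever reference vector Definition \ref{shortunitaries} implicitly assumes as input to the isometries $U_k$), and that the nested product $U^{\dagger}U_2^{\dagger}\cdots U_{r+1}^{\dagger}$ is unambiguous because each $U_k$ acts on a distinct subset of registers modulo a shared classical control, so the factors commute past the computational-basis markers without reordering ambiguity. Once this bookkeeping is settled, both claimed statements follow immediately from Lemma \ref{cohlemma} and the protocol's stated error bound, so I do not anticipate any substantive obstacle.
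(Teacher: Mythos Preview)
Your proposal is correct and matches the paper's approach exactly: the paper's proof is the single sentence ``The corollary follows immediately using Definition \ref{shortunitaries} and lemma \ref{cohlemma},'' and your plan simply spells out the block-diagonal bookkeeping behind that ``immediately.'' The second inequality is, as you say, read off directly from the protocol's correctness guarantee.
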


\begin{proof}

The corollary follows immediately using Definition \ref{shortunitaries} and lemma \ref{cohlemma}.

\end{proof}

\section{Lower bound on expected communication cost}
\label{sec:lowerbound}

In this section, we obtain a lower bound on expected communication cost of quantum state redistribution and quantum state transfer, by considering a class of states defined below.

Let register $R$ be composed of two registers $R_A,R'$, such that $R\equiv R_AR'$. Let $d_a$ be the dimension of registers $R_A$ and $A$. Let $d$  be the dimension of registers $R',C$ and $B$. Consider,
\begin{definition}
\label{staterediststate}
 $\ket{\Psi}_{RBCA}\defeq\frac{1}{\sqrt{d_a}}\sum_{a=1}^{d_a}\ket{a}_{R_A}\ket{a}_A\ket{\psi^a}_{R'BC}$. Let $\ket{\psi^a}_{R'BC}=\sum_{j=1}^d\sqrt{e_j}\ket{u_j}_{R'}\ket{v_j(a)}_B\ket{w_j(a)}_C$ where $e_1\geq e_2\geq \ldots e_d>0$, $\sum_{i=1}^d e_i = 1$ and $\{\ket{u_1},\ldots\ket{u_d}\}$, $\{\ket{v_1(a)},\ldots\ket{v_d(a)}\}$, $\{\ket{w_1(a)},\ldots\ket{w_d(a)}\}$ form an orthonormal basis (second and third bases may depend arbitrarily on $a$) in their respective Hilbert spaces. 
\end{definition}

For quantum state transfer, we consider a pure state $\tilde{\Psi}_{RC}$ with Schmidt decomposition $\sum_{j=1}^d \sqrt{e_j}\ket{u_j}_R\ket{w_j}_C$. 
 
Given the state $\psi^a_{R'BC}$ from definition \ref{staterediststate}, we define a `GHZ state' corresponding to it: $\ket{\omega^a}_{R'BC}\defeq \frac{1}{\sqrt{d}}\sum_{j=1}^d\ket{u_j}_{R'}\ket{v_j(a)}_B\ket{w_j(a)}_C$. Using this, we define $\omega_{RBCA}\defeq \frac{1}{\sqrt{d_a}}\sum_{a=1}^{d_a}\ket{a}_{R_A}\ket{a}_A\ket{\omega^a}_{R'BC}$. Similarly, given the bipartite state $\tilde{\Psi}_{RC}$, we define a maximally entangled state $\omega'_{RC}\defeq \frac{1}{\sqrt{d}}\sum_{j=1}^d\ket{u_j}_R\ket{w_j}_C$.

Following two relations are easy to verify. 
\begin{equation}
\label{psiandomega}
\ket{\omega}_{RBCA} = \frac{1}{\sqrt{d_a\cdot d}}\Psi_R^{-\frac{1}{2}}\ket{\Psi}_{RBCA} \text{ and } \ket{\omega'}_{RC} = \frac{1}{\sqrt{d}}(\tilde{\Psi}_R)^{-\frac{1}{2}}\ket{\tilde{\Psi}}_{RC}
\end{equation}

\bigskip

As noted in section \ref{sec:cohtrans}, the protocol $\mathcal{P}$ achieves quantum state redistribution of $\Psi_{RBCA}$ with error $\eps$ and expected communication cost $C$. Following lemma is a refined form of corollary \ref{cohequation}, and is also applicable to state $\Psi_{RBCA}$ not of the form given in definition \ref{staterediststate}.  

\begin{lemma}
\label{goodcoh}
There exists a probability distribution $\{p'_{i_1,i_2\ldots i_r}\}$ and pure states $\kappa^{i_r,i_{r-1}\ldots i_1}_{CE_AT_B}$ such that 

$$\P(\Psi_{RBCA}\otimes\theta_{E_AE_B},U^{\dagger}U_2^{\dagger}\ldots U_{r+1}^{\dagger}\sum_{i_1,i_2\ldots i_r}\sqrt{p'_{i_1,i_2\ldots i_r}}\Psi_{RBC_0A}\otimes\kappa^{i_r,i_{r-1}\ldots i_1}_{CE_AT_B}\ket{i_r}_{M_r}\ldots\ket{i_1}_{M_1})\leq 2\sqrt{\eps},$$ and the communication weight of  $p'_{i_1,i_2\ldots i_r}$ is at most $\frac{C}{1-\eps}$. 
\end{lemma}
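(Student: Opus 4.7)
The approach is to start from the coherent identity in Corollary \ref{cohequation}, together with the error bound $\P(\Psi_{RBC_0A},\sum_i p_i \tau^{i_r,\ldots,i_1}_{RBC_0A}) \leq \eps$, and to replace each purification $\ket{\tau^{i_r,\ldots,i_1}}$ by one of the form $\ket{\Psi}_{RBC_0A}\otimes\ket{\kappa^{i_r,\ldots,i_1}}_{CE_AT_B}$ for most message tuples $(i_1,\ldots,i_r)$. The crucial observation is that because $\ket{\Psi}_{RBCA}$ is pure, so is $\Psi_{RBC_0A}$, hence $\F^2(\Psi_{RBC_0A},\rho) = \bra{\Psi}_{RBC_0A}\rho\ket{\Psi}_{RBC_0A}$ is \emph{linear} in $\rho$. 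Applied to $\rho = \sum_i p_i \tau^i_{RBC_0A}$, the given error bound upgrades to the per-tuple $L^2$ estimate
\[
\sum_i p_i\,\P^2\!\left(\Psi_{RBC_0A},\tau^i_{RBC_0A}\right) \;=\; 1 - \F^2\!\left(\Psi_{RBC_0A},\sum_i p_i\tau^i_{RBC_0A}\right) \;\leq\; \eps^2.
\]

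Next, define the good set $G \defeq \{(i_1,\ldots,i_r) : \P(\Psi_{RBC_0A},\tau^{i_r,\ldots,i_1}_{RBC_0A}) \leq \sqrt{\eps}\}$. Markov's inequality applied to the previous display gives $q \defeq \sum_{i \in G} p_i \geq 1-\eps$. Set $p'_i \defeq p_i/q$ for $i \in G$ and $p'_i \defeq 0$ otherwise; this is a probability distribution whose communication weight is at most $(1/q)\sum_i p_i \log(i_1 \cdots i_r) = C/q \leq C/(1-\eps)$, which is one of the two conclusions.

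For each $i \in G$, since $\ket{\Psi}_{RBC_0A}$ is pure, every purification of $\Psi_{RBC_0A}$ on $RBCAC_0T_BE_A$ has the form $\ket{\Psi}_{RBC_0A}\otimes\ket{\eta}_{CE_AT_B}$ for some pure $\ket{\eta}$. Uhlmann's theorem (Fact \ref{uhlmann}), applied with $\ket{\tau^i}$ on one side and $\ket{\Psi}_{RBC_0A}\otimes\ket{0}_{CE_AT_B}$ on the other, produces a pure state $\kappa^i_{CE_AT_B}$ (after absorbing a global phase) satisfying $\braket{\tau^i}{\Psi\otimes\kappa^i} = \F(\tau^i_{RBC_0A},\Psi_{RBC_0A}) \geq \sqrt{1-\eps}$; for $i \notin G$ pick $\kappa^i$ arbitrarily. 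Using orthonormality of the message basis vectors $\ket{i_r}\cdots\ket{i_1}$, the inner product of the two coherent sums, prior to the common string of isometries, equals
\[
\sum_{i \in G}\sqrt{p_i\,p'_i}\,\braket{\tau^i}{\Psi\otimes\kappa^i} \;\geq\; \frac{\sqrt{1-\eps}}{\sqrt{q}}\sum_{i \in G} p_i \;=\; \sqrt{q(1-\eps)} \;\geq\; 1-\eps,
\]
so the purified distance between those two pure states is at most $\sqrt{1-(1-\eps)^2} \leq \sqrt{2\eps} \leq 2\sqrt{\eps}$. Applying the isometry $U^\dagger U_2^\dagger \cdots U_{r+1}^\dagger$ preserves purified distance (Fact \ref{fact:monotonequantumoperation}) and yields the bound stated in the lemma.

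The main step is the first one: the passage from the average-fidelity bound $\P(\Psi,\sum_i p_i\tau^i)\leq\eps$ to the per-message $L^2$ bound relies essentially on $\Psi_{RBC_0A}$ being pure (for a mixed $\Psi$ the fidelity would not be linear in $\rho$ and no comparable tail bound would follow from the hypothesis). Everything after that is a standard good-set-via-Markov plus Uhlmann routine.
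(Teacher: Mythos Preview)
Your proof is correct and follows essentially the same route as the paper: use purity of $\Psi_{RBC_0A}$ to linearize $\F^2$ and get $\sum_i p_i\F^2(\Psi,\tau^i)\geq 1-\eps^2$, define the good set by the threshold $\F^2\geq 1-\eps$, bound its probability by Markov, renormalize to $p'$, and invoke Uhlmann per tuple. The only cosmetic difference is in the final assembly: the paper bounds $\P(\sum\sqrt{p_i}\tau^i,\sum\sqrt{p'_i}\tau^i)\leq\sqrt{\eps}$ and $\P(\sum\sqrt{p'_i}\tau^i,\sum\sqrt{p'_i}\Psi\otimes\kappa^i)\leq\sqrt{\eps}$ separately and uses the triangle inequality, whereas you compute the single inner product directly and get $\sqrt{2\eps}$ before relaxing to $2\sqrt{\eps}$.
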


\begin{proof}
Let $\B$ be the set of tuples $(i_1,i_2\ldots i_r)$ for which $\F^2(\Psi_{RBC_0A},\tau^{i_r,i_{r-1}\ldots i_1}_{RBC_0A})\leq 1-\eps$. Let $\G$ be remaining set of tuples. From corollary \ref{cohequation} and purity of $\Psi_{RBC_0A}$, it holds that
$$\sum_{i_1,i_2\ldots i_r}p_{i_1,i_2\ldots i_r}\F^2(\Psi_{RBC_0A},\tau^{i_r,i_{r-1}\ldots i_1}_{RBC_0A})\geq 1-\eps^2.$$

Thus, $$(1-\eps)\sum_{(i_1,i_2\ldots i_r)\in\B}p_{i_1,i_2\ldots i_r}+ \sum_{(i_1,i_2\ldots i_r)\in\G}p_{i_1,i_2\ldots i_r} \geq 1-\eps^2,$$ which implies $\sum_{(i_1,i_2\ldots i_r)\in\B}p_{i_1,i_2\ldots i_r}\leq \eps$. Thus we have $\sum_{(i_1,i_2\ldots i_r)\in \G}p_{i_1,i_2\ldots i_r}\geq 1-\eps$. 

Define $p'_{i_1,i_2\ldots i_r}\defeq \frac{p_{i_1,i_2\ldots i_r}}{\sum_{{i_1,i_2\ldots i_r}\in \G} p_{i_1,i_2\ldots i_r}}$, if $(i_1,i_2\ldots i_r) \in \G$ and $p'_{i_1,i_2\ldots i_r}\defeq 0$ if $(i_1,i_2\ldots i_r)\in \B $. 
\bigskip

For all $(i_1,i_2\ldots i_r)\in \G$, $\F^2(\Psi_{RBC_0A},\tau^{i_r,i_{r-1}\ldots i_1}_{RBC_0A})\geq 1-\eps$. Thus by Fact \ref{uhlmann},  there exists a pure state $\kappa^{i_r,i_{r-1}\ldots i_1}_{CE_AT_B}$ such that 

\begin{equation}
\label{goodproperty} 
\F^2(\Psi_{RBC_0A}\otimes\kappa^{i_r,i_{r-1}\ldots i_1}_{CE_AT_B},\tau^{i_r,i_{r-1}\ldots i_1}_{RBCAC_0T_BE_A})\geq 1-\eps 
\end{equation}

Consider,

\begin{eqnarray*}
&&\P(\sum_{i_1,i_2\ldots i_r}\sqrt{p_{i_1,i_2\ldots i_r}}\tau^{i_r,i_{r-1}\ldots i_1}_{RBCAC_0T_BE_A}\ket{i_r}_{M_r}\ldots\ket{i_1}_{M_1},\sum_{i_1,i_2\ldots i_r}\sqrt{p'_{i_1,i_2\ldots i_r}}\tau^{i_r,i_{r-1}\ldots i_1}_{RBCAC_0T_BE_A}\ket{i_r}_{M_r}\ldots\ket{i_1}_{M_1})\nonumber \\&=& \sqrt{1-(\sum_{i_1,i_2\ldots i_r}\sqrt{p_{i_1,i_2\ldots i_r}p'_{i_1,i_2\ldots i_r}})^2} = \sqrt{1-(\sum_{i_1,i_2\ldots i_r\in \G}p_{i_1,i_2\ldots i_r})}\leq \sqrt{\eps} 
\end{eqnarray*}

 and 
\begin{eqnarray*}
&&\P(\sum_{i_1,i_2\ldots i_r}\sqrt{p'_{i_1,i_2\ldots i_r}}\tau^{i_r,i_{r-1}\ldots i_1}_{RBCAC_0T_BE_A}\ket{i_r}_{M_r}\ldots\ket{i_1}_{M_1}, \sum_{i_1,i_2\ldots i_r}\sqrt{p'_{i_1,i_2\ldots i_r}}\Psi_{RBC_0A}\otimes\kappa^{i_r,i_{r-1}\ldots i_1}_{CE_AT_B}\ket{i_r}_{M_r}\ldots\ket{i_1}_{M_1}) \\&=& \sqrt{1-(\sum_{i_1,i_2\ldots i_r}p'_{i_1,i_2\ldots i_r}\F(\tau^{i_r,i_{r-1}\ldots i_1}_{RBCAC_0T_BE_A},\Psi_{RBC_0A}\otimes\kappa^{i_r,i_{r-1}\ldots i_1}_{CE_AT_B}))^2} \leq \sqrt{\eps} \quad (\text{Equation \ref{goodproperty}})
\end{eqnarray*}

These together imply, using triangle inequality for purified distance (Fact \ref{fact:trianglepurified}),

\begin{eqnarray*}
&&\P(\sum_{i_1,i_2\ldots i_r}\sqrt{p_{i_1,i_2\ldots i_r}}\tau^{i_r,i_{r-1}\ldots i_1}_{RBCAC_0T_BE_A}\ket{i_r}_{M_r}\ldots\ket{i_1}_{M_1},\sum_{i_1,i_2\ldots i_r}\sqrt{p'_{i_1,i_2\ldots i_r}}\Psi_{RBC_0A}\otimes\kappa^{i_r,i_{r-1}\ldots i_1}_{CE_AT_B}\ket{i_r}_{M_r}\ldots\ket{i_1}_{M_1})\\&\leq& 2\sqrt{\eps} 
\end{eqnarray*}

Thus, from corollary \ref{cohequation}, we have $$\P(\Psi_{RBCA}\otimes\theta_{E_AE_B},U^{\dagger}U_2^{\dagger}\ldots U_{r+1}^{\dagger}\sum_{i_1,i_2\ldots i_r}\sqrt{p'_{i_1,i_2\ldots i_r}}\Psi_{RBC_0A}\otimes\kappa^{i_r,i_{r-1}\ldots i_1}_{CE_AT_B}\ket{i_r}_{M_r}\ldots\ket{i_1}_{M_1})\leq 2\sqrt{\eps}.$$

The communication weight of $p'_{i_1,i_2\ldots i_r}$ is 
\begin{eqnarray*}
\sum_{i_1,i_2\ldots i_r} p'_{i_1,i_2\ldots i_r}\log(i_1\cdot i_2\ldots i_r) &\leq& \frac{1}{1-\eps}\sum_{{i_1,i_2\ldots i_r}\in \G}p_{i_1,i_2\ldots i_r}\log(i_1\cdot i_2\ldots i_r) \\ &\leq& \frac{1}{1-\eps}\sum_{i_1,i_2\ldots i_r}p_{i_1,i_2\ldots i_r}\log(i_1\cdot i_2\ldots i_r) =\frac{C}{1-\eps}.
 \end{eqnarray*}
This completes the proof.
\end{proof}

We now use Lemma \ref{goodcoh} to prove the following for the state $\omega_{RBCA}$. Recall that $e_d$ is the smallest eigenvalue of $\psi^a_{R'}$, independent of $a$.

\begin{lemma}
\label{convepr}
It holds that $$\P(\omega_{RBCA}\otimes\theta_{E_AE_B},U^{\dagger}U_2^{\dagger}\ldots U_{r+1}^{\dagger}\sum_{i_1,i_2\ldots i_r}\sqrt{p'_{i_1,i_2\ldots i_r}}\omega_{RBC_0A}\otimes\kappa^{i_r,i_{r-1}\ldots i_1}_{CE_AT_B}\ket{i_r}_{M_r}\ldots\ket{i_1}_{M_1})\leq \sqrt{\frac{8\eps}{e_d\cdot d}}.$$

Communication weight of distribution $p'_{i_1,i_2\ldots i_r}$ is $\frac{C}{1-\eps}$.  
\end{lemma}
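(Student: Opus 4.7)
The plan is to apply to both sides of the inequality in Lemma \ref{goodcoh} the operator on register $R$ that sends $\ket{\Psi}_{RBCA}$ to $\ket{\omega}_{RBCA}$, and then invoke monotonicity of purified distance together with Fact \ref{scalarpurified} to strip off the scalar this operation introduces. By Equation \ref{psiandomega}, the relevant operator on $R$ is $\frac{1}{\sqrt{d_a\cdot d}}\Psi_R^{-1/2}$. Since the isometries $U,U_2,\ldots,U_{r+1}$ from Lemma \ref{cohlemma} act only on Alice's and Bob's registers, this same operator transports the right-hand side of Lemma \ref{goodcoh} into the right-hand side of the present lemma as well.

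To realize this as a valid sub-normalized quantum operation, I would define $K_R\defeq\sqrt{e_d/d_a}\,\Psi_R^{-1/2}$. By Definition \ref{staterediststate} one has $\Psi_R=\frac{I_{R_A}}{d_a}\otimes\sum_j e_j\ketbra{u_j}_{R'}$, whose smallest eigenvalue is $e_d/d_a$, so $K_R^{\dagger}K_R=(e_d/d_a)\,\Psi_R^{-1}\leq I_R$ and the CP map $\mathcal{E}(\cdot)\defeq K_R(\cdot)K_R^{\dagger}$ is trace non-increasing. Substituting into Equation \ref{psiandomega} gives $K_R\ket{\Psi}_{RBCA}=\sqrt{e_d\cdot d}\,\ket{\omega}_{RBCA}$ (and likewise with $C_0$ in place of $C$). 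Applying $\mathcal{E}$ to both density matrices inside the $\P(\cdot,\cdot)$ of Lemma \ref{goodcoh} — which leaves the remainder of the construction unaffected since $\mathcal{E}$ acts only on $R$ — and invoking Fact \ref{fact:monotonequantumoperation} then yields
$$\P\bigl(e_d\cdot d\cdot\omega_{RBCA}\otimes\theta_{E_AE_B},\; e_d\cdot d\cdot\ketbra{\eta}\bigr)\leq 2\sqrt{\eps},$$
where $\ket{\eta}$ denotes the (unit-norm) pure state on the right-hand side of the lemma being proved.

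Finally, since $e_d$ is the smallest of $d$ nonnegative numbers summing to $1$, we have $e_d\cdot d\leq 1$, so Fact \ref{scalarpurified} applies with $\alpha=e_d\cdot d$ and error $2\sqrt{\eps}$, giving $\P(\omega_{RBCA}\otimes\theta_{E_AE_B},\ketbra{\eta})\leq 2\sqrt{\eps}\cdot\sqrt{2/(e_d\cdot d)}=\sqrt{8\eps/(e_d\cdot d)}$, as required. The communication-weight claim is inherited verbatim from Lemma \ref{goodcoh}, since the distribution $p'$ is unchanged. The only delicate design choice — and the place where the argument could go wrong — is the prefactor $\sqrt{e_d/d_a}$ in $K_R$: it must be small enough that $K_R^{\dagger}K_R\leq I_R$ (for monotonicity to apply) but large enough that the resulting scalar $\alpha=e_d\cdot d$ is not so small as to inflate the bound after Fact \ref{scalarpurified}. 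Saturating $K_R^{\dagger}K_R\leq I_R$ — i.e.\ choosing the prefactor determined by the smallest eigenvalue of $\Psi_R$ — is precisely what makes the final bound nontrivial.
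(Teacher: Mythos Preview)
Your proof is correct and follows essentially the same approach as the paper: define the trace non-increasing CP map $\rho\mapsto\frac{e_d}{d_a}\Psi_R^{-1/2}\rho\Psi_R^{-1/2}$ on register $R$, apply it to both arguments of the purified-distance bound in Lemma \ref{goodcoh} (using that the isometries $U,U_2,\ldots,U_{r+1}$ do not touch $R$), invoke monotonicity (Fact \ref{fact:monotonequantumoperation}), and then strip the resulting scalar $e_d\cdot d$ via Fact \ref{scalarpurified}. Your writeup is in fact slightly more careful than the paper's in that you explicitly verify $e_d\cdot d\leq 1$ before applying Fact \ref{scalarpurified}.
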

\begin{proof}

Define a completely positive map $\tilde{\E}:R\rightarrow R$ as $ \tilde{\E}(\rho)\defeq \frac{e_d}{d_a}(\Psi^{-\frac{1}{2}}_R\rho\Psi^{-\frac{1}{2}}_R)$, which is trace non-increasing since $\Psi^{-1}_R \leq \frac{d_a}{e_d}\text{I}_R$. Using equation \ref{psiandomega}, observe that $$\tilde{\E}(\Psi_{RBCA}) = e_d\cdot d\cdot\omega_{RBCA}.$$

Consider,
\begin{eqnarray*}
2\sqrt{\eps} &\geq& \P(\Psi_{RBCA}\otimes\theta_{E_AE_B},U^{\dagger}U_2^{\dagger}\ldots U_{r+1}^{\dagger}\sum_{i_1,i_2\ldots i_r}\sqrt{p'_{i_1,i_2\ldots i_r}}\Psi_{RBC_0A}\otimes\kappa^{i_r,i_{r-1}\ldots i_1}_{CE_AT_B}\ket{i_r}_{M_r}\ldots\ket{i_1}_{M_1})\\ &&\text{(Lemma \ref{goodcoh})}\\&\geq& \P(\tilde{\E}(\Psi_{RBCA})\otimes\theta_{E_AE_B},U^{\dagger}U_2^{\dagger}\ldots U_{r+1}^{\dagger}\sum_{i_1,i_2\ldots i_r}\sqrt{p'_{i_1,i_2\ldots i_r}}\tilde{\E}(\Psi_{RBC_0A})\otimes\kappa^{i_r,i_{r-1}\ldots i_1}_{CE_AT_B}\ket{i_r}_{M_r}\ldots\ket{i_1}_{M_1})\\ && (\text{Fact \ref{fact:monotonequantumoperation}}) \\ &=& \P(d\cdot e_d\cdot\omega_{RBCA}\otimes\theta_{E_AE_B},d\cdot e_d\cdot U^{\dagger}U_2^{\dagger}\ldots U_{r+1}^{\dagger}\sum_{i_1,i_2\ldots i_r}\sqrt{p'_{i_1,i_2\ldots i_r}}\omega_{RBC_0A}\otimes\kappa^{i_r,i_{r-1}\ldots i_1}_{CE_AT_B}\ket{i_r}_{M_r}\ldots\ket{i_1}_{M_1})
\end{eqnarray*}

Using Fact \ref{scalarpurified}, we thus obtain $$\P(\omega_{RBCA}\otimes\theta_{E_AE_B}, U^{\dagger}U_2^{\dagger}\ldots U_{r+1}^{\dagger}\sum_{i_1,i_2\ldots i_r}\sqrt{p'_{i_1,i_2\ldots i_r}}\omega_{RBC_0A}\otimes\kappa^{i_r,i_{r-1}\ldots i_1}_{CE_AT_B}\ket{i_r}_{M_r}\ldots\ket{i_1}_{M_1})\leq \sqrt{\frac{8\eps}{d\cdot e_d}}.$$

Furthermore, there is no change in communication weight. This completes the proof. 

\end{proof}

Similarly for quantum state transfer, we have the following corollary 
\begin{cor}
\label{conveprmerge}
It holds that $$\P(\omega'_{RC}\otimes\theta_{E_AE_B},U^{\dagger}U_2^{\dagger}\ldots U_{r+1}^{\dagger}\sum_{i_1,i_2\ldots i_r}\sqrt{p'_{i_1,i_2\ldots i_r}}\omega'_{RC_0}\otimes\kappa^{i_r,i_{r-1}\ldots i_1}_{CE_AT_B}\ket{i_r}_{M_r}\ldots\ket{i_1}_{M_1})\leq \sqrt{\frac{8\eps}{e_d\cdot d}}.$$

Communication weight of distribution $p'_{i_1,i_2\ldots i_r}$ is $\frac{C}{1-\eps}$.  
\end{cor}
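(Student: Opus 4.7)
The plan is to mimic the proof of Lemma \ref{convepr} verbatim, with the only change being that register $A$ (and hence $R_A$) is trivial, so that we work directly with the bipartite pure state $\tilde{\Psi}_{RC}$ and its associated maximally entangled state $\omega'_{RC}$. The state $\tilde{\Psi}_{RC}$ plays the role of $\Psi_{RBCA}$ and $\omega'_{RC}$ plays the role of $\omega_{RBCA}$, so Lemma \ref{goodcoh} (whose statement does not rely on the special product form of Definition \ref{staterediststate}) applies to $\tilde{\Psi}_{RC}$ and gives a probability distribution $\{p'_{i_1,\ldots,i_r}\}$ of communication weight at most $\frac{C}{1-\eps}$ together with pure states $\kappa^{i_r,\ldots,i_1}_{CE_AT_B}$ such that
\[
\P\!\br{\tilde{\Psi}_{RC}\otimes\theta_{E_AE_B},\,U^{\dagger}U_2^{\dagger}\ldots U_{r+1}^{\dagger}\sum_{i_1,\ldots,i_r}\sqrt{p'_{i_1,\ldots,i_r}}\,\tilde{\Psi}_{RC_0}\otimes\kappa^{i_r,\ldots,i_1}_{CE_AT_B}\ket{i_r}_{M_r}\ldots\ket{i_1}_{M_1}}\leq 2\sqrt{\eps}.
\]

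Next, I would introduce the completely positive, trace non-increasing map $\tilde{\E}:R\to R$ defined by $\tilde{\E}(\rho)\defeq e_d\br{(\tilde{\Psi}_R)^{-\frac{1}{2}}\rho(\tilde{\Psi}_R)^{-\frac{1}{2}}}$. This is trace non-increasing because $(\tilde{\Psi}_R)^{-1}\leq \frac{1}{e_d}I_R$ (since $e_d$ is the smallest eigenvalue of $\tilde{\Psi}_R$). The key identity, which follows directly from the second relation in equation \ref{psiandomega}, is that $\tilde{\E}(\tilde{\Psi}_{RC}) = e_d\cdot d\cdot \omega'_{RC}$, exactly mirroring the role played by the analogous map in Lemma \ref{convepr}.

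I would then apply $\tilde{\E}$ (extended by identity on all other registers, which are disjoint from $R$) to both sides of the inequality above. By monotonicity of purified distance under completely positive trace non-increasing maps (Fact \ref{fact:monotonequantumoperation}), the purified distance does not increase, so
\[
\P\!\br{e_d\cdot d\cdot\omega'_{RC}\otimes\theta_{E_AE_B},\,e_d\cdot d\cdot U^{\dagger}U_2^{\dagger}\ldots U_{r+1}^{\dagger}\sum_{i_1,\ldots,i_r}\sqrt{p'_{i_1,\ldots,i_r}}\,\omega'_{RC_0}\otimes\kappa^{i_r,\ldots,i_1}_{CE_AT_B}\ket{i_r}_{M_r}\ldots\ket{i_1}_{M_1}}\leq 2\sqrt{\eps}.
\]
Finally, since $e_d\cdot d \leq 1$, applying Fact \ref{scalarpurified} with $\alpha=e_d\cdot d$ strips the scalar factor at the cost of an extra $\sqrt{2/(e_d\cdot d)}$, yielding the stated bound $\sqrt{8\eps/(e_d\cdot d)}$.

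The communication weight of $\{p'_{i_1,\ldots,i_r}\}$ is $\frac{C}{1-\eps}$ directly from Lemma \ref{goodcoh}, as the map $\tilde{\E}$ does not touch the message registers $M_1,\ldots,M_r$ or any of the unitaries $U, U_2, \ldots, U_{r+1}$. There is no real obstacle here: the proof is essentially a transcription of Lemma \ref{convepr} to the simpler setting where $A$ and $B$ are absent, and the only thing to verify carefully is that the rescaled map $\tilde{\E}$ correctly produces $\omega'_{RC}$ from $\tilde{\Psi}_{RC}$, which is immediate from equation \ref{psiandomega}.
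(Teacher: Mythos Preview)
Your proposal is correct and follows exactly the approach the paper intends: the corollary is stated without proof as the obvious specialization of Lemma \ref{convepr} to the state-transfer setting (trivial $A$ and $B$), and your transcription---applying Lemma \ref{goodcoh} to $\tilde{\Psi}_{RC}$, then the rescaled map $\tilde{\E}(\rho)=e_d(\tilde{\Psi}_R)^{-1/2}\rho(\tilde{\Psi}_R)^{-1/2}$, then Fact \ref{scalarpurified}---is precisely that specialization.
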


Now we exhibit an interactive entanglement assisted communication protocol for state-redistribution of $\omega_{RBCA}$ with suitably upper bounded worst case communication cost.

\begin{lemma}
\label{exptoworst}
Fix an error parameter $\mu>0$. There exists an entanglement assisted $r$-round quantum communication protocol for state redistribution of $\omega_{RBCA}$ with worst case quantum communication cost at most $\frac{2C}{\mu(1-\eps)}$ and error at most $\sqrt{\frac{8\eps}{e_d\cdot d}}+\sqrt{\mu}$.
\end{lemma}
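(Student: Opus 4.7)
The plan is to build $\mathcal{P}'$ by implementing a truncated version of the coherent protocol from Lemma \ref{convepr}. First, apply Markov's inequality to the distribution $p'$: since its communication weight is $\frac{C}{1-\eps}$, the set $\G'\defeq\{(i_1,\ldots,i_r):\log(i_1\cdots i_r)\le \frac{C}{\mu(1-\eps)}\}$ has mass $q\defeq\sum_{(i_1,\ldots,i_r)\in\G'}p'_{i_1,i_2\ldots i_r}\ge 1-\mu$. Form the truncated renormalized pure state
$$\ket{\Xi_{\mathrm{trunc}}}\defeq\frac{1}{\sqrt{q}}\sum_{(i_1,\ldots,i_r)\in\G'}\sqrt{p'_{i_1,i_2\ldots i_r}}\,\ket{\omega}_{RBC_0A}\otimes\ket{\kappa^{i_r,i_{r-1}\ldots i_1}}_{CE_AT_B}\ket{i_r}_{M_r}\cdots\ket{i_1}_{M_1},$$
so that $\P(\Xi,\Xi_{\mathrm{trunc}})=\sqrt{1-q}\le\sqrt{\mu}$. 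Combined with Lemma \ref{convepr} via the triangle inequality (Fact \ref{fact:trianglepurified}), this gives
$$\P\bigl(\omega_{RBCA}\otimes\theta_{E_AE_B},\,U^{\dagger}U_2^{\dagger}\cdots U_{r+1}^{\dagger}\ket{\Xi_{\mathrm{trunc}}}\bigr)\le\sqrt{\tfrac{8\eps}{e_d\cdot d}}+\sqrt{\mu},$$
and orthonormality of the $\{\ket{i_k}\}_{M_k}$ registers makes the $RBC_0A$-marginal of $\Xi_{\mathrm{trunc}}$ exactly $\omega_{RBC_0A}$ (the cross terms vanish as in the computation inside Lemma \ref{convepr}).

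I would then define $\mathcal{P}'$ as the round-by-round execution of the isometries $U, U_2, \ldots, U_{r+1}$ on input $\omega_{RBCA}\otimes\theta_{E_AE_B}$. In round $k$ the designated sender (Alice for $k$ odd, Bob for $k$ even) applies $U_k$, measures the freshly created $M_k$ register in its computational basis to obtain $i_k$, and transmits $i_k$ to the other party using a prefix-free encoding such as Elias $\gamma$ of length at most $2\log i_k + O(1)$ qubits. Measuring at each round (rather than at the end) is valid by deferred measurement, since $M_k$ enters subsequent $U_\ell$ only as a control and is ultimately traced out; the $RBC_0A$-marginal is therefore unchanged. The protocol aborts at the first round when the cumulative transmitted length would exceed $\frac{2C}{\mu(1-\eps)}$ (equivalently, the partial tuple $(i_1,\ldots,i_k)$ is incompatible with $\G'$) and outputs a fixed default state on $RBC_0A$; otherwise after round $r$ the receiver applies the final local unitary $U_{r+1}$ and junk registers are discarded.

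The worst-case quantum communication is $\frac{2C}{\mu(1-\eps)}$ by construction. For the error, the distribution of measurement outcomes $(i_1,\ldots,i_r)$ under the coherent execution on $\omega\otimes\theta$ is within total variation $\sqrt{8\eps/(e_d\cdot d)}$ of $p'$ (apply monotonicity under the measurement channel, Fact \ref{fact:monotonequantumoperation}, together with the trace/purified comparison in Fact \ref{fact:purifiedtrace}, to the estimate from Lemma \ref{convepr}). Hence the abort event has total mass at most $\mu$ on the truncated target, contributing at most $\sqrt{\mu}$ to purified distance, while the non-abort branch reproduces the coherent $U_{r+1}\cdots U$-image whose $RBC_0A$-marginal is within $\sqrt{8\eps/(e_d\cdot d)}+\sqrt{\mu}$ of $\omega_{RBC_0A}$ by the first paragraph. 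The main technical care lies in separating the two sources of error and ensuring that the Markov bound, which lives on the fictitious distribution $p'$, transfers correctly to the actual measurement distribution on input $\omega$; this is handled entirely through the purified-distance bound provided by Lemma \ref{convepr}, and the coding-overhead factor of $2$ in the communication budget is paid once up front via the choice of prefix-free encoding.
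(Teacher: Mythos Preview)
Your proposal follows essentially the same route as the paper: Markov truncation of $p'$ to $\G'$, formation of the truncated pure state, the triangle-inequality bound $\P(\omega\otimes\theta,\,U^\dagger\cdots U_{r+1}^\dagger\Xi_{\mathrm{trunc}})\le\sqrt{8\eps/(e_d d)}+\sqrt{\mu}$, and then construction of a protocol that realizes the truncated execution. The paper's protocol stays fully coherent (it sends quantum copies $M'_k$ of the $M_k$ registers via controlled-copy unitaries $W_k$ that write $\ket{0}$ when the partial tuple leaves $\T$), whereas you measure each $M_k$ and transmit the classical outcome; both are fine because the $M_k$ enter later isometries only as controls and are traced out. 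The paper's error analysis is the one-line monotonicity step you already have available: letting $\E$ be the protocol map and $\pi\defeq U^\dagger\cdots U_{r+1}^\dagger\Xi_{\mathrm{trunc}}$, one checks $\E(\pi)=\omega_{RBC_0A}$ exactly (no abort on input $\pi$), hence $\P(\E(\omega\otimes\theta),\omega_{RBC_0A})\le\P(\omega\otimes\theta,\pi)$. Your final paragraph instead tries to separately control the abort probability and the non-abort error on input $\omega\otimes\theta$; this is unnecessary and the phrasing is confused (the abort mass ``on the truncated target'' is zero, not $\mu$). Just invoke monotonicity of $\E$ against $\pi$, which your first paragraph has already set up.

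One minor technical slip: Elias $\gamma$ encoding of $i_k$ costs $2\lfloor\log i_k\rfloor+1$ bits, so over $r$ rounds the total is $\le 2\log(i_1\cdots i_r)+r$, giving worst-case cost $\frac{2C}{\mu(1-\eps)}+r$ rather than $\frac{2C}{\mu(1-\eps)}$; thus your ``equivalently'' linking the communication-budget cutoff to $\G'$-incompatibility is not literally correct. The paper sidesteps this by transmitting the register $M'_k$ of dimension $i_k+1$ and using $\sum_k\log(i_k+1)\le 2\sum_k\log i_k$. This is easy to repair (e.g., define abort by $\G'$-incompatibility and absorb the additive $r$, which is harmless for the downstream application), but as written the two abort criteria you equate are not the same.
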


\begin{proof}
From lemma \ref{convepr}, we have that $$\P(\omega_{RBCA}\otimes\theta_{E_AE_B},U^{\dagger}U_2^{\dagger}\ldots U_{r+1}^{\dagger}\sum_{i_1,i_2\ldots i_r}\sqrt{p'_{i_1,i_2\ldots i_r}}\omega_{RBC_0A}\otimes\kappa^{i_r,i_{r-1}\ldots i_1}_{CE_AT_B}\ket{i_r}_{M_r}\ldots\ket{i_1}_{M_1})\leq \sqrt{\frac{8\eps}{a_d\cdot d}},$$ and $$\sum_{i_1,i_2\ldots i_r}p'_{i_1,i_2\ldots i_r}\log(i_1\cdot i_2\ldots i_r)\leq \frac{C}{1-\eps}.$$

Consider the set of tuples $(i_1,i_2\ldots i_r)$ which satisfy $i_1\cdot i_2\ldots i_r>2^{\frac{C}{(1-\eps)\mu}}$. Let this set be $\B'$ and $\G'$ be the set of rest of the tuples. Then $$\frac{C}{(1-\eps)} > \sum_{i_1,i_2\ldots i_r\in \B'}p'_{i_1,i_2\ldots i_r}\log(i_1\cdot i_2\ldots i_r) > \frac{C}{(1-\eps)\mu}\sum_{i_1,i_2\ldots i_r\in \B'}p'_{i_1,i_2\ldots i_r}.$$ This implies $\sum_{i_1,i_2\ldots i_r\in \B'}p'_{i_1,i_2\ldots i_r} < \mu$.
 Define a new probability distribution $q_{i_1,i_2\ldots i_r}\defeq \frac{p'_{i_1,i_2\ldots i_r}}{\sum_{(i_1,i_2\ldots i_r)\in \G'}p'_{i_1,i_2\ldots i_r}}$ for all $(i_1,i_2\ldots i_r)\in \G'$ and $q_{i_1,i_2\ldots i_r}=0$ for all $(i_1,i_2\ldots i_r)\in \B'$. Consider,

$$\P(\sum_{i_1,i_2\ldots i_r}\sqrt{p'_{i_1,i_2\ldots i_r}}\omega_{RBC_0A}\otimes\kappa^{i_r,i_{r-1}\ldots i_1}_{CE_AT_B}\ket{i_r}_{M_r}\ldots\ket{i_1}_{M_1},\sum_{i_1,i_2\ldots i_r}\sqrt{q_{i_1,i_2\ldots i_r}}\omega_{RBC_0A}\otimes\kappa^{i_r,i_{r-1}\ldots i_1}_{CE_AT_B}\ket{i_r}_{M_r}\ldots\ket{i_1}_{M_1})$$ $$= \sqrt{1-(\sum_{i_1,i_2\ldots i_r}\sqrt{p'_{i_1,i_2\ldots i_r}q_{i_1,i_2\ldots i_r}})^2} = \sqrt{1-\sum_{(i_1,i_2\ldots i_r) \in \G'}p'_{i_1,i_2\ldots i_r}}\leq \sqrt{\mu}.$$Thus, triangle inequality for purified distance (Fact \ref{fact:trianglepurified}) implies

\begin{eqnarray*}
&&\P(\omega_{RBCA}\otimes\theta_{E_AE_B},U^{\dagger}U_2^{\dagger}\ldots U_{r+1}^{\dagger}\sum_{i_1,i_2\ldots i_r}\sqrt{q_{i_1,i_2\ldots i_r}}\omega_{RBC_0A}\otimes\kappa^{i_r,i_{r-1}\ldots i_1}_{CE_AT_B}\ket{i_r}_{M_r}\ldots\ket{i_1}_{M_1})\\&\leq& \sqrt{\frac{8\eps}{e_d\cdot d}}+\sqrt{\mu}
\end{eqnarray*}
Defining $\pi_{RBCAE_AE_B}\defeq U^{\dagger}U_2^{\dagger}\ldots U_{r+1}^{\dagger}\sum_{i_1,i_2\ldots i_r\in \G'}\sqrt{q_{i_1,i_2\ldots i_r}}\omega_{RBC_0A}\otimes\kappa^{i_r,i_{r-1}\ldots i_1}_{CE_AT_B}\ket{i_r}_{M_r}\ldots\ket{i_1}_{M_1}$, we have
\begin{equation}
\label{eq:closegoodstate}
\P(\omega_{RBCA}\otimes\theta_{E_AE_B},\omega'_{RBCE_AE_B})\leq \sqrt{\frac{8\eps}{e_d\cdot d}}+\sqrt{\mu}
\end{equation} 

\bigskip

Let $\T$ be the set of all tuples $(i_1,i_2\ldots i_k)$ (with $k\leq r$) that satisfy the following property: there exists a set of positive integers $\{i_{k+1},i_{k+2}\ldots i_r\}$ such that $(i_1,i_2\ldots i_k,i_{k+1}\ldots i_r)\in \G'$. Consider the following protocol $\mathcal{P'}$.
\bigskip
\begin{mdframed}
\bigskip

\textbf{Input:} A quantum state in registers $RBCAE_AE_B$.

\begin{itemize}
\item Alice applies the isometry $U:ACE_A\rightarrow ACE_AM_1$ (definition \ref{shortunitaries}). She introduces a register $M'_1\equiv M_1$ in the state $\ket{0}_{M'_1}$ and performs the following unitary $W_1: M_1M'_1\rightarrow M_1M'_1$: $$W_1\ket{i}_{M_1}\ket{0}_{M'_1}=\ket{i}_{M_1}\ket{i}_{M'_1} \quad \text{if } (i)\in \T \quad \text{and}\quad W_1\ket{i}_{M_1}\ket{0}_{M'_1}=\ket{i}_{M_1}\ket{0}_{M'_1} \quad \text{if } (i)\notin \T.$$ 
She sends $M'_1$ to Bob.
\item Bob introduces a register $M'_2\equiv M_2$ in the state $\ket{0}_{M'_2}$. If he receives $\ket{0}_{M'_1}$ from Alice, he performs no operation. Else he applies the isometry $U_2: BE_BM'_1\rightarrow BE_BM'_1M_2$ and then performs the following unitary $W_2: M'_1M_2M'_2\rightarrow M'_1M_2M'_2$: $$W_1\ket{i}_{M'_1}\ket{j}_{M_2}\ket{0}_{M'_2}=\ket{i}_{M'_1}\ket{j}_{M_2}\ket{j}_{M'_2} \quad \text{if } (i,j)\in \T$$ and  $$W_1\ket{i}_{M'_1}\ket{j}_{M_2}\ket{0}_{M'_2}=\ket{i}_{M'_1}\ket{j}_{M_2}\ket{0}_{M'_2} \quad \text{if }(i,j)\notin \T.$$ 

He sends $M'_2$ to Alice. 

\item For every odd round $k>1$, Alice introduces a register $M'_k\equiv M_k$ in the state $\ket{0}_{M'_k}$. If she receives $\ket{0}_{M'_{k-1}}$ from Bob, she performs no further operation. Else, she applies the isometry $$U_k: ACE_AM_1M'_2M_3\ldots M'_{k-1}\rightarrow ACE_AM_1M'_2M_3\ldots M'_{k-1}M_k$$ and performs the following unitary $W_k: M_1M'_2\ldots M'_{k-1}M_kM'_k\rightarrow M_1M'_2\ldots M'_{k-1}M_kM'_k$: 
$$W_k\ket{i_1}_{M_1}\ket{i_2}_{M'_2}\ldots\ket{i_k}_{M_k}\ket{0}_{M'_k}=
\ket{i_1}_{M_1}\ket{i_2}_{M'_2}\ldots\ket{i_k}_{M_k}\ket{i_k}_{M'_k} \quad \text{if } (i_1,i_2\ldots i_k)\in \T$$ and  $$W_k\ket{i_1}_{M_1}\ket{i_2}_{M'_2}\ldots\ket{i_k}_{M_k}\ket{0}_{M'_k}=
\ket{i_1}_{M_1}\ket{i_2}_{M'_2}\ldots\ket{i_k}_{M_k}\ket{0}_{M'_k} \quad \text{if }(i_1,i_2\ldots i_k)\notin \T.$$ 

She sends $M'_k$ to Bob.

\item For every even round $k>2$, Bob introduces a register $M'_k\equiv M_k$ in the state $\ket{0}_{M'_k}$. If he receives $\ket{0}_{M'_{k-1}}$ from Alice, he performs no further operation.. Else, he applies the isometry $U_k: BE_BM'_1M_2M'_3\ldots M'_{k-1}\rightarrow BE_BM'_1M_2M'_3\ldots M'_{k-1}M_k$ and performs the following unitary $W_k: M'_1M_2\ldots M'_{k-1}M_kM'_k\rightarrow M'_1M_2\ldots M'_{k-1}M_kM'_k$: 
$$W_k\ket{i_1}_{M'_1}\ket{i_2}_{M_2}\ldots\ket{i_k}_{M_k}\ket{0}_{M'_k}=
\ket{i_1}_{M'_1}\ket{i_2}_{M_2}\ldots\ket{i_k}_{M_k}\ket{i_k}_{M'_k} \quad \text{if } (i_1,i_2\ldots i_k)\in \T$$ and  $$W_k\ket{i_1}_{M'_1}\ket{i_2}_{M_2}\ldots\ket{i_k}_{M_k}\ket{0}_{M'_k}=
\ket{i_1}_{M'_1}\ket{i_2}_{M_2}\ldots\ket{i_k}_{M_k}\ket{0}_{M'_k} \quad \text{if }(i_1,i_2\ldots i_k)\notin \T.$$ 

He sends $M'_k$ to Alice.

\item After round $r$, if Bob receives $\ket{0}_{M'_r}$ from Alice, he performs no further operation. Else he applies the unitary $U^b_{r+1}: BE_BM'_1M_2M'_3\ldots M'_r\rightarrow BC_0T_BM'_1M_2M'_3\ldots M'_r$. Alice applies the unitary $U^a_{r+1}: ACE_AM_1M'_2M_3\ldots M_r\rightarrow ACE_AM_1M'_2M_3\ldots M_r$. They trace out all of their registers except $A,B,C_0$. 
\end{itemize}
\bigskip
\end{mdframed}
\bigskip
Let $\E: RBCAE_AE_B\rightarrow RBC_0A$ be the quantum map generated by $\mathcal{P'}$. For any $k$, if any of the parties receive the state $\ket{0}_{M'_k}$, let this event be called \textit{abort}.

We show the following claim.
\begin{claim} It holds that
 $\E(\pi_{RBCAE_AE_B})=\omega_{RBC_0A}$
\end{claim}
\begin{proof}
We argue that the protocol never aborts when acting on $\pi_{RBCAE_AE_B}$. Consider the first round of the protocol. Define the projector $\Pi\defeq \sum_{i: (i)\notin \T}\ketbra{i}_{M_1}$. From definition \ref{shortunitaries}, it is clear that the isometry $U^{\dagger}_2U^{\dagger}_3\ldots U^{\dagger}_{r+1}$ is of the form $\sum_i \ketbra{i}_{M_1}\otimes V_i$, for some set of isometries $\{V_i\}$ . Thus, from the definition of $\pi_{RBCAE_AE_B}$ (in which the summation is only over the tuples $(i_1,i_2\ldots i_r)\in \G'$), it holds that $$\Pi U\pi_{RBCAE_AE_B}=0.$$  This implies that Bob does not receive the state $\ket{0}_{M'_1}$ and hence he does not aborts.

Same argument applies to other rounds, which implies that the protocol never aborts. Thus, the state at the end of the protocol is $$  \Tr_{CE_AT_B}(U_{r+1}U_{r}\ldots U_2U\pi_{RBCAE_AE_B}) = \omega_{RBC_0A}.$$
\end{proof}

 Thus, from equation \ref{eq:closegoodstate}, it holds that $$\P(\E(\omega_{RBCA}\otimes\theta_{E_AE_B}),\omega_{RBC_0A})\leq \sqrt{\frac{8\eps}{e_d\cdot d}}+\sqrt{\mu}.$$

Quantum communication cost of the protocol is at most $$\text{max}_{(i_1,i_2\ldots i_r)\in \G'}(\log((i_1+1)\cdot (i_2+1)\ldots (i_r+1)) \leq 2\cdot\text{max}_{(i_1,i_2\ldots i_r)\in \G'}(\log(i_1\cdot i_2\ldots i_r)\leq \frac{2C}{(1-\eps)\mu}.$$ This completes the proof.  

\end{proof}

Similarly, we have the corollary for quantum state transfer.
\begin{cor}
\label{exptoworstmerge}
Fix an error parameter $\mu>0$. There exists a $r$-round communication protocol for state transfer of $\omega'_{RC}$ with worst case quantum communication cost atmost $\frac{2C}{\mu(1-\eps)}$ and error at most $\sqrt{\frac{8\eps}{e_d\cdot d}}+\sqrt{\mu}$.
\end{cor}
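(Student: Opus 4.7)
The plan is to mirror the proof of Lemma \ref{exptoworst} essentially verbatim, replacing the state $\omega_{RBCA}$ with the bipartite state $\omega'_{RC}$ and using Corollary \ref{conveprmerge} as the starting point in place of Lemma \ref{convepr}. The ingredients (a truncation argument on message tuples, a flagged simulation protocol, and a purified-distance triangle inequality) are insensitive to whether we are doing state redistribution or state transfer, since the coherent representation of Corollary \ref{cohequation} and the derived statements hold uniformly.

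First I would invoke Corollary \ref{conveprmerge} to obtain a distribution $\{p'_{i_1,\dots,i_r}\}$ of communication weight at most $\frac{C}{1-\eps}$ and pure states $\kappa^{i_r,\dots,i_1}_{CE_AT_B}$ such that the superposition $U^{\dagger}U_2^{\dagger}\cdots U_{r+1}^{\dagger}\sum_{\vec{i}}\sqrt{p'_{\vec{i}}}\,\omega'_{RC_0}\otimes\kappa^{\vec{i}}_{CE_AT_B}\ket{i_r}_{M_r}\cdots\ket{i_1}_{M_1}$ is within purified distance $\sqrt{8\eps/(e_d d)}$ of $\omega'_{RC}\otimes\theta_{E_AE_B}$. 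Next, I would split the support of $p'$ into a ``good'' set $\G'$ consisting of tuples $(i_1,\dots,i_r)$ with $i_1\cdots i_r\leq 2^{C/((1-\eps)\mu)}$, and its complement $\B'$. A Markov-type calculation identical to the one in Lemma \ref{exptoworst} gives $\sum_{\vec{i}\in\B'}p'_{\vec{i}}<\mu$, so the normalized restriction $q_{\vec{i}}$ of $p'$ to $\G'$ induces a state within purified distance $\sqrt{\mu}$ of the above superposition; combined via Fact \ref{fact:trianglepurified}, this yields a purified state $\pi_{RCE_AE_B}\defeq U^{\dagger}U_2^{\dagger}\cdots U_{r+1}^{\dagger}\sum_{\vec{i}\in\G'}\sqrt{q_{\vec{i}}}\,\omega'_{RC_0}\otimes\kappa^{\vec{i}}_{CE_AT_B}\ket{i_r}_{M_r}\cdots\ket{i_1}_{M_1}$ satisfying $\P(\omega'_{RC}\otimes\theta_{E_AE_B},\pi_{RCE_AE_B})\leq \sqrt{8\eps/(e_d d)}+\sqrt{\mu}$.

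Then I would define the simulation protocol $\mathcal{P}'$ exactly as in Lemma \ref{exptoworst}: in each round the acting party applies the corresponding isometry $U_k$, copies the outcome register $M_k$ into a flag register $M'_k$ conditioned on the running tuple lying in the prefix set $\T$ of $\G'$ (sending $\ket{0}_{M'_k}$ otherwise to signal abort), and transmits $M'_k$. The key structural point, which is the analogue of the claim $\E(\pi)=\omega_{RBC_0A}$ in Lemma \ref{exptoworst}, is that when the simulation is applied to $\pi_{RCE_AE_B}$ it never aborts: each $U_{k+1}^{\dagger}\cdots U_{r+1}^{\dagger}$ is block-diagonal in the $M_1\cdots M_{k}$ basis (Definition \ref{shortunitaries}), and $\pi_{RCE_AE_B}$ is supported only on tuples in $\G'$, hence on prefixes in $\T$. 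Therefore the output coincides with $\omega'_{RC_0}$.

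Finally I would combine the no-abort guarantee with monotonicity of purified distance under the quantum map $\E$ implemented by $\mathcal{P}'$ to conclude $\P(\E(\omega'_{RC}\otimes\theta_{E_AE_B}),\omega'_{RC_0})\leq\sqrt{8\eps/(e_d d)}+\sqrt{\mu}$, while the worst-case communication is bounded by $\max_{\vec{i}\in\G'}\log\prod_k(i_k+1)\leq 2\log\prod_k i_k\leq \frac{2C}{\mu(1-\eps)}$, the factor $2$ accounting for the flag bits. The only obstacle to check carefully is that the block-diagonal structure of $U_{k+1}^{\dagger},\dots,U_{r+1}^{\dagger}$ in the message registers still holds in the bipartite (state transfer) setting, but this is immediate from Definition \ref{shortunitaries}, which does not use the $A$ register in any essential way for the projection argument. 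Thus the proof of Lemma \ref{exptoworst} transports to the state transfer setting without modification.
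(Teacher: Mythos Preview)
Your proposal is correct and matches the paper's approach exactly: the paper states this corollary without proof, prefacing it only with ``Similarly, we have the corollary for quantum state transfer,'' leaving the reader to transport the argument of Lemma~\ref{exptoworst} using Corollary~\ref{conveprmerge} in place of Lemma~\ref{convepr}. Your outline does precisely this, and your check that the block-diagonal structure of Definition~\ref{shortunitaries} survives when $A$ and $B$ are trivial is the only point that needed verifying.
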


Next two lemmas obtain lower bound on worst case quantum communication cost of quantum state redistribution of $\omega_{RBCA}$ and quantum state transfer of $\omega'_{RC}$.

\begin{lemma}
\label{redistworstcase}
Let $d$, the local dimension of register $B$, be such that $d>2^{18}$. Then worst case quantum communication cost of any interactive entanglement assisted quantum state redistribution protocol of the state $\omega_{RBCA}$, with error $\delta < \frac{1}{6}$, is at least $\frac{1}{6}\log(d)$.
\end{lemma}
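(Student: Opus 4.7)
The plan is to apply a one-shot lower bound on worst-case quantum communication for state redistribution, expressed via the smoothed max-information of $\omega$, and then compute this quantity for the specific state in Definition~\ref{staterediststate}. Concretely, I would first invoke (or adapt from standard one-shot arguments for state merging) a bound of the form
\[
Q \;\geq\; \tfrac{1}{2}\,\imaxeps{R}{C}_\omega \;-\; O(\log(1/\delta))
\]
for some $\eps$ polynomial in $\delta$. The proof crucially uses that the Referee is untouched throughout the protocol: each $q$-qubit Alice-to-Bob message can raise $\imax(R : \text{Bob's registers})$ by at most $2q$, Bob-to-Alice messages (and local operations on either side) can only lower it, so the net change is bounded by $2Q$ up to smoothing corrections. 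Since the final reduced state on $RBC_0A$ is $\delta$-close to $\omega_{RBC_0A}$, the smoothed max-information at the end is at least $\imaxeps{R}{C}_\omega$ up to the smoothing tolerance, while the initial $\imax(R:B)_\omega$ contribution on Bob's side is small relative to this quantity.

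Second, I would compute $\imaxeps{R}{C}_\omega$ for the particular $\omega_{RBCA}$ chosen in the proof of Theorem~\ref{thm:main}. By picking the bases $\{v_j(a)\}, \{w_j(a)\}$ in Definition~\ref{staterediststate} appropriately (for instance, $w_j(a) = \ket{j}$ independent of $a$), the marginal $\omega_{R'C}$ reduces to the rank-$d$ classical correlation $\tfrac{1}{d}\sum_j \ketbra{u_j}_{R'}\otimes \ketbra{j}_C$. For this state one directly verifies $\hmin{R}{C}_\omega = -\log d$, and Fact~\ref{fact:imaxhmin}, together with robustness of these smooth quantities under $\eps$-smoothing (the classical correlation is extremal, so smoothing by $\eps$ can lower the max-information by at most $O(\log(1/\eps))$), yields $\imaxeps{R}{C}_\omega \geq \log d - O(\log(1/\delta))$.

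Combining the two steps gives $Q \geq \tfrac{1}{2}\log d - O(\log(1/\delta)) - O(1)$. Under the hypotheses $\delta < 1/6$ and $d > 2^{18}$, the additive correction is a small constant (at most $3$ with careful bookkeeping, since $\log(1/\delta) < \log 6$), while $\tfrac{1}{2}\log d - \tfrac{1}{6}\log d = \tfrac{1}{3}\log d > 6$ swallows the correction, leaving $Q \geq \tfrac{1}{6}\log d$ as required. The condition $d > 2^{18}$ is calibrated precisely so that the $O(1)$ additive constants coming from Fact~\ref{fact:imaxhmin} and from the one-shot bound become negligible against $\tfrac{1}{6}\log d$.

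The main obstacle is establishing the one-shot lower bound for \emph{interactive} state redistribution rather than one-way state merging, where the analogous bound is standard. In the interactive setting I must track carefully how the smoothed max-information between $R$ and Bob's registers evolves over each round: Alice-to-Bob messages can raise it (by at most twice the message length), while Bob-to-Alice messages and local operations can only lower it, so the passivity of the Referee still bounds the net change by $2Q$. A secondary technical challenge is bookkeeping of smoothing parameters so that the $\delta$-closeness guaranteed by the protocol at the end translates cleanly into a lower bound involving $\imaxeps$ without a blowup that would swallow the $\log d$ term.
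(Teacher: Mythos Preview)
Your proposal has a genuine gap, both in the form of the lower bound and in the evaluation of the entropic quantities for the state~$\omega$.

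First, the one-shot bound you write down is a state-\emph{transfer} bound, not a state-\emph{redistribution} bound. In redistribution Bob already holds $B$, and for the GHZ-type state $\omega$ this register is highly correlated with $R$. The bound the paper actually invokes (from~\cite{Berta14}) is
\[
Q \;\geq\; \tfrac{1}{2}\bigl(\imaxdelta{R}{BC}_{\omega}-\imax{R}{B}_{\omega}\bigr),
\]
with Bob's full final registers $BC$ in the first term and his initial register $B$ subtracted. You instead use only $\imaxeps{R}{C}_\omega$ and assert that $\imax{R}{B}_\omega$ is ``small relative to this quantity''. That assertion is false here: for every $a$, tracing $C$ out of the GHZ state $\omega^a_{R'BC}$ gives the same kind of perfectly correlated classical state as tracing out $B$, so by symmetry $\imax{R}{B}_\omega=\imax{R}{C}_\omega=\log d$ (indeed $\mutinf{R}{B}_\omega=S(\omega_B)=\log d$, and Fact~\ref{fact:cqimax} gives the matching upper bound). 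Using $\imax{R}{C}_\omega$ and then subtracting $\imax{R}{B}_\omega$ leaves nothing.

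Second, your computation $\hmin{R}{C}_\omega=-\log d$ is incorrect. After tracing out $B$, the marginal $\omega^a_{R'C}=\tfrac{1}{d}\sum_j\ketbra{u_j}\otimes\ketbra{w_j(a)}$ is the \emph{classical} maximally correlated state, not a maximally entangled pure state; for this state $\hmin{R'}{C}=0$, so Fact~\ref{fact:imaxhmin} yields only the trivial $\imax{R}{C}_\omega\geq 0$. You have imported the calculation from Lemma~\ref{eprworstcase}, where the state $\omega'_{RC}$ really is maximally entangled, but the marginal $\omega_{RC}$ in the present lemma is a different object.

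The paper's proof sidesteps both issues by working with $\imaxdelta{R}{BC}_\omega$: since each $\omega^a_{R'BC}$ is pure with uniform marginals, $\mutinf{R}{BC}_\omega=2\log d$, and Fannes' inequality (Fact~\ref{fact:fannes}) controls the $\delta$-smoothing with a correction of order $\delta\log d$ --- not $O(\log(1/\delta))$ as you claim, which is also unjustified. Subtracting $\imax{R}{B}_\omega\leq\log d$ then leaves $\tfrac{1-3\delta}{2}\log d$ minus a constant, and the hypotheses $\delta<1/6$, $d>2^{18}$ are calibrated to absorb precisely that $3\delta\log d$ term and the additive $3/2$.
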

\begin{proof}
Following lower bound on worst case quantum communication cost for interactive quantum state redistribution of the state $\omega_{RBCA}$, with error $\delta$, has been shown (\cite{Berta14}, Section $5$, Proposition $2$): $$\frac{1}{2}(\imaxdelta{R}{BC}_{\omega}-\imax{R}{B}_{\omega}).$$
Recall, from definition \ref{staterediststate}, that $\omega_{RBC}=\frac{1}{d_a}\sum_{a=1}^{d_a} \ketbra{a}_{R_A}\otimes\omega^a_{R'BC}$ is a \textit{classical-quantum} state. Consider, 
\begin{eqnarray*}
\imaxdelta{R}{BC}_{\omega} &\geq& \inf_{\rho_{RBC}\in \ball{\delta}{\omega_{RBC}}}\mutinf{R}{BC}_{\rho}\\ &\geq& \inf_{\rho_{R}\in \ball{\delta}{\omega_{R}}}S(\rho_R) + \inf_{\rho_{BC}\in \ball{\delta}{\omega_{BC}}}S(\rho'_{BC}) - \sup_{\rho_{RBC}\in \ball{\delta}{\omega_{RBC}}}S(\rho_{RBC}) \\ &\geq& \mutinf{R}{BC}_{\omega} - 3\delta\log(d) - 3 \quad (\text{Fact \ref{fact:fannes}})\\ &\geq& \frac{1}{d_a}\sum_{a}\mutinf{R'}{BC}_{\omega^a} - 3\delta\log(d)-3 \quad (\text{Fact \ref{cqmutinf}}) \\ &=& 2\log(d)-3\delta\log(d)-3.
\end{eqnarray*}

To bound $\imax{R}{B}_{\omega}$, notice that $\omega_{RB}=\frac{1}{d\cdot d_a}\sum_{a=1}^{d_a}\sum_{j=1}^d\ketbra{a}_{R_A}\otimes\ketbra{u_j}_{R'}\otimes\ketbra{v_j(a)}_{B}$ is also a \textit{classical-quantum} state. Using Fact \ref{fact:cqimax}, we obtain $\imax{R}{B}_{\omega} \leq \log(|B|) = \log(d)$.  

Thus, communication cost is lower bounded by $$\frac{1}{2}(\imaxdelta{R}{BC}_{\omega}-\imax{R}{B}_{\omega})\geq \frac{\log(d)-3\delta\log(d)-3}{2}=\frac{1-3\delta}{2}\log(d) - 1.5 > \frac{1}{6}\log(d),$$ for $d>2^{18}$.
\end{proof}

For quantum state transfer, we have following bound.

\begin{lemma}
\label{eprworstcase}
Worst case quantum communication cost for state transfer of the state $\omega'_{RC}$, with error $\delta<\frac{1}{2}$, is at least $\frac{1}{2}\log(d) + \frac{1}{2}\log(1-\delta^2)$.
\end{lemma}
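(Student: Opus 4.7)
The plan is to proceed along the lines of the proof of Lemma \ref{redistworstcase}. The same lower bound from \cite{Berta14} on worst case quantum communication cost for state redistribution, namely $\frac{1}{2}(\imaxdelta{R}{BC}_{\omega'}-\imax{R}{B}_{\omega'})$, applies here; since $B$ is trivial in the state transfer setting, $\omega'_{RB}$ is simply $\omega'_R$ tensored with a one-dimensional ancilla, forcing $\imax{R}{B}_{\omega'}=0$. Thus the worst case communication cost is bounded below by $\frac{1}{2}\imaxdelta{R}{C}_{\omega'}$, and it suffices to show
\[ \imaxdelta{R}{C}_{\omega'}\;\geq\;\log(d)+\log(1-\delta^2). \]

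To establish this, I would fix any $\rho_{RC}\in \ball{\delta}{\omega'_{RC}}$ and any state $\sigma_C\in\mathcal{D}(C)$, and set $\lambda\defeq \dmax{\rho_{RC}}{\rho_R\otimes \sigma_C}$ so that $2^{\lambda}\rho_R\otimes \sigma_C\geq \rho_{RC}$ as operators. Sandwiching both sides between $\bra{\omega'}$ and $\ket{\omega'}$ and using purity of $\omega'_{RC}$ (which gives $\bra{\omega'}\rho_{RC}\ket{\omega'}=\F^2(\rho_{RC},\omega'_{RC})\geq 1-\delta^2$) yields
\[ 2^{\lambda}\,\bra{\omega'}(\rho_R\otimes \sigma_C)\ket{\omega'}\;\geq\;1-\delta^2. \]
The crucial computation uses the Schmidt decomposition $\ket{\omega'}=\frac{1}{\sqrt d}\sum_j\ket{u_j}_R\ket{w_j}_C$ to derive the identity $\bra{\omega'}(X_R\otimes Y_C)\ket{\omega'}=\frac{1}{d}\Tr(X_R\,Y_C^{T})$, where the transpose is taken in the Schmidt bases. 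For positive semidefinite density matrices $\rho_R,\sigma_C$ this gives, by H\"older's inequality, $\bra{\omega'}(\rho_R\otimes \sigma_C)\ket{\omega'}\leq \frac{1}{d}\norm{\rho_R}_\infty\onenorm{\sigma_C^T}\leq \frac{1}{d}$. Plugging this back in yields $2^{\lambda}\geq d(1-\delta^2)$, i.e., $\lambda\geq \log d+\log(1-\delta^2)$.

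Taking the infimum over $\sigma_C$ gives $\imax{R}{C}_{\rho}\geq \log d+\log(1-\delta^2)$ for every $\rho\in\ball{\delta}{\omega'_{RC}}$, and then taking the infimum over such $\rho$ yields the desired bound on $\imaxdelta{R}{C}_{\omega'}$; combined with the first paragraph this completes the proof. No serious obstacle is anticipated; the only slightly delicate step is verifying the matrix-element identity for the maximally entangled state, which is a one-line Schmidt-basis computation. Notably, the crude bound $\norm{\rho_R}_\infty\leq 1$ is already enough for the target inequality, so one does not need to exploit the closeness of $\rho_R$ to $I_R/d$ in any finer way.
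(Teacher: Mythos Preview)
Your argument is correct and reaches the same bound, but by a genuinely different route from the paper. The paper invokes Fact~\ref{fact:imaxhmin} to pass from $\imaxdelta{R}{C}_{\omega'}$ to $-\hmindelta{R}{C}_{\omega'}$, then quotes a smoothing duality (Proposition~6.3 of \cite{tomamichel15}) to get $-\hmindelta{R}{C}_{\omega'}\ge -\hmax{R}{C}_{\omega'}+\log(1-\delta^2)$, and finally computes $-\hmax{R}{C}_{\omega'}=\log d$ using that $\omega'_{RC}$ is pure with maximally mixed marginals. You instead bound $\imaxdelta{R}{C}_{\omega'}$ directly: sandwiching the operator inequality $2^{\lambda}\rho_R\otimes\sigma_C\ge\rho_{RC}$ in $\ket{\omega'}$, using purity to identify $\bra{\omega'}\rho_{RC}\ket{\omega'}=\F^2(\rho_{RC},\omega'_{RC})\ge 1-\delta^2$, and exploiting the maximally-entangled structure via the identity $\bra{\omega'}(X\otimes Y)\ket{\omega'}=\tfrac{1}{d}\Tr(XY^{T})$ together with H\"older. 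Your route is more elementary and self-contained (no appeal to min/max-entropy duality or smoothing lemmas), whereas the paper's route illustrates how the result sits inside the standard one-shot entropy calculus. Both yield exactly $\log d+\log(1-\delta^2)$ for $\imaxdelta{R}{C}_{\omega'}$.
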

\begin{proof}
The following lower bound on worst case interactive quantum communication cost of state transfer of $\omega'_{RC}$ has been shown (\cite{Berta14}, Section $5$, Proposition $2$): $$\frac{1}{2}\imaxdelta{R}{C}_{\omega'}.$$ Consider,
\begin{eqnarray*} 
\imaxdelta{R}{C}_{\omega'} &\geq& - \hmindelta{R}{C}_{\omega'} \quad (\text{Fact \ref{fact:imaxhmin}}) \\ &\geq& - \hmax{R}{C}_{\omega'}+\log(1-\delta^2) \quad (\text{Proposition 6.3, \cite{tomamichel15}}) \\ &=& \log(d) + \log(1-\delta^2)  
\end{eqnarray*}
\end{proof}

Now we proceed to proof of Theorem \ref{thm:main}.

\begin{proof}[\textbf{Proof: Theorem \ref{thm:main}}]
Suppose there exists a $r$-round communication protocol $\mathcal{P}$ for entanglement assisted quantum state redistribution of the pure state $\Psi_{RBCA}$ with error $\eps$ and expected communication cost at most $\condmutinf{R}{C}{B}_{\Psi}\cdot (\frac{1}{\eps})^p$. Then we show a contradiction for $p< 1$.
\bigskip

For a $\beta \geq 1$ to be chosen later, and $d>2^{18}$, we choose $\{e_1,e_2\ldots e_d\}$ (Definition \ref{staterediststate}) as constructed in lemma \ref{lowentropy}. Thus, $$\condmutinf{R}{C}{B}_{\Psi}\leq 2S(\Psi_C)\leq 4\frac{\log(d)}{\beta} \quad \text{(Fact \ref{informationbound})}.$$ 

Fix an error parameter $\mu$. From lemma \ref{exptoworst}, there exists a communication protocol $\mathcal{P}'$ for quantum state redistribution of $\omega_{RBCA}$, with error at most $\sqrt{\mu}+ \sqrt{8\beta\eps}$ and worst case quantum communication cost at most 

$$\frac{2\cdot\condmutinf{R}{C}{B}_{\Psi}}{\mu(1-\eps)}\cdot (\frac{1}{\eps})^p\leq 8\frac{\log(d)}{\beta\mu(1-\eps)}\cdot (\frac{1}{\eps})^p \leq 16\frac{\log(d)}{\beta\mu}\cdot (\frac{1}{\eps})^p.$$

Last inequality holds since $\eps<1/2$. Let $\beta\mu\eps^p=128$. Then $\sqrt{\mu}+ \sqrt{8\beta\eps} = \sqrt{\mu}+ \frac{32}{\sqrt{\mu}}\eps^{\frac{1-p}{2}}$, which is minimized at $\mu = 32\cdot\eps^{\frac{1-p}{2}}$. This gives $\sqrt{\mu}+ \frac{32}{\sqrt{\mu}}\eps^{\frac{1-p}{2}} = 8\sqrt{2}\cdot\eps^{\frac{1-p}{4}}$ and $\beta=4/\eps^{\frac{1+p}{2}} > 1$. 

As in the theorem, let $\eps \in [0, (\frac{1}{70})^{\frac{4}{1-p}}]$. Thus, we have a protocol for state redistribution of $\omega_{RBCA}$, with error at most $8\sqrt{2}\cdot\eps^{\frac{1-p}{4}} < \frac{1}{6}$ and worst case communication at most $\frac{1}{8}\log(d)$, in contradiction with lemma \ref{redistworstcase}.  
\end{proof}
Above argument does not hold for any $p\geq 1$ since we need to simultaneously satisfy $\beta\geq 1$, $8\beta\eps<1$ and $\mu<1$.

On similar lines, we prove Theorem \ref{thm:main2} below.

\begin{proof}[\textbf{Proof: Theorem \ref{thm:main2}}]
Suppose there exists a communication protocol for state transfer of the pure states $\tilde{\Psi}_{RC}$ with error $\eps<\frac{1}{2}$ and expected communication cost at most $S(\tilde{\Psi}_R)\cdot (\frac{1}{\eps})^p$. Then we show a contradiction for $p< 1$.

For a $\beta \geq 1$ to be chosen later, choose $a_i$ as constructed in lemma \ref{lowentropy}. Then $S(\tilde{\Psi}_R)\leq 2\frac{\log(d)}{\beta}$. 

Fix an error parameter $\mu$. From corollary \ref{exptoworstmerge}, there exists a communication protocol for state transfer of $\omega'_{RC}$, with error at most $\sqrt{\mu}+ \sqrt{8\beta\eps}$ and worst case quantum communication cost at most 

$$\frac{2S(\Psi'_R)}{\mu(1-\eps)}\cdot (\frac{1}{\eps})^p\leq \frac{4\log(d)}{\beta\mu(1-\eps)}\cdot (\frac{1}{\eps})^p \leq \frac{8\log(d)}{\beta\mu}\cdot (\frac{1}{\eps})^p.$$

Let $\beta\mu\eps^p=16$. Then $\sqrt{\mu}+ \sqrt{8\beta\eps} = \sqrt{\mu}+ \frac{8\sqrt{2}}{\sqrt{\mu}}\eps^{\frac{1-p}{2}}$, which is minimized at $\mu = 8\sqrt{2}\eps^{\frac{1-p}{2}}$. This gives $\sqrt{\mu}+ \sqrt{8\beta\eps} = \sqrt{32\sqrt{2}}\eps^{\frac{1-p}{4}}$ and $\beta=\sqrt{2}/\eps^{\frac{1+p}{2}} > 1$. 

As in the theorem, let $\eps \in [0, (\frac{1}{2})^{\frac{15}{1-p}}]$. Thus, we have a protocol for state transfer of $\omega'_{RC}$, with error at most $\sqrt{32}\eps^{\frac{1-p}{4}} < \frac{1}{2}$ and worst case communication at most $\frac{1}{2}\log(d)$, in contradiction with lemma \ref{eprworstcase}.
 
\end{proof}

\section{Conclusion}
\label{sec:conclusion}
We have shown a lower bound on expected communication cost of interactive quantum state redistribution and quantum state transfer. Main technique that we use is to construct an interactive protocol for quantum state redistribution of $\omega_{RBCA}$, using any interactive protocol for quantum state redistribution of the state $\Psi_{RBCA}$. To justify why this seems to be a necessary step, consider the sub-case of quantum state transfer. Suppose there exists a protocol for quantum state transfer of $\tilde{\Psi}_{RC}$ with expected communication cost $S(\tilde{\Psi}_R)$ and error $\eps$. We can use lemma \ref{exptoworst} to obtain an another protocol with error $\eps+\sqrt{\mu}$ and worst case communication cost at most $S(\tilde{\Psi}_R)/\mu$. But this does not lead to any contradiction, since it is straightforward to exhibit a protocol for state transfer of $\tilde{\Psi}_{RC}$ with error $\eps+\sqrt{\mu}$ and worst case communication cost $S(\tilde{\Psi}_{R})/(\eps+\sqrt{\mu})^2 < S(\tilde{\Psi}_{R})/\mu$. 

Furthermore, our argument does not apply to classical setting. This follows from the fact that we are considering a pure state $\Psi_{RBCA}$ and this allows us to obtain lemma \ref{convepr} without changing the probability distribution $p'_{i_1,i_2\ldots i_r}$ (and hence the corresponding communication weight), when we apply the map $\tilde{\E}$. 

 Some questions related to our work are as follows.
\begin{enumerate}
\item Can the bounds obtained in theorems \ref{thm:main} and \ref{thm:main2} be improved, or shown to be tight?
\item What are some applications of theorems \ref{thm:main} and \ref{thm:main2} in quantum information theory? An immediate 
application is that we obtain a lower bound on worst case communication cost of quantum state redistribution, since worst case communication cost is always larger than expected communication cost of a protocol. 
\item Is it possible to improve the direct sum result for entanglement assisted quantum information complexity obtained in \cite{Dave14}? 
\end{enumerate}

\section*{Acknowledgment} 
I thank Rahul Jain for many valuable discussions and comments on arguments in the manuscript. I also thank Penghui Yao and Venkatesh Srinivasan for helpful discussions. 

This work is supported by the Core Grants of the Center for Quantum Technologies (CQT), Singapore.

\bibliographystyle{alpha}
\bibliography{references}
\end{document}